\def\B{\mathscr B}
\def\C{\mathbb C}
\def\d{\mathrm d}
\def\D{\mathscr D}
\def\dom{\mathcal D}
\def\E{\mathscr E}
\def\G{\mathcal G}
\def\h{\mathfrak h}
\def\H{\mathcal H}
\def\K{\mathscr K}
\def\ltwo{\mathop{\mathrm{L}^2}\nolimits}
\def\ltwoloc{\mathop{\mathrm{L}^2_{\rm loc}}\nolimits}
\def\linfty{\mathop{\mathrm{L}^\infty}\nolimits}
\def\N{\mathbb N}
\def\NN{\mathfrak N}
\def\O{\mathcal O}
\def\R{\mathbb R}
\def\S{\mathscr S}
\def\Tau{\mathcal T}
\def\U{\mathscr U}
\def\Z{\mathbb Z}
\def\e{\mathop{\mathrm{e}}\nolimits}
\def\im{\mathop{\mathrm{Im}}\nolimits}
\def\Ker{\mathop{\mathrm{Ker}}\nolimits}
\def\Ran{\mathop{\mathrm{Ran}}\nolimits}
\DeclareMathOperator*{\slim}{s\hspace{0.1pt}-\hspace{0.1pt}lim}
\def\supp{\mathop{\mathrm{supp}}\nolimits}
\newtheorem{Theorem}{Theorem}[section]
\newtheorem{Remark}[Theorem]{Remark}
\newtheorem{Lemma}[Theorem]{Lemma}
\newtheorem{Corollary}[Theorem]{Corollary}
\newtheorem{Proposition}[Theorem]{Proposition}
\newtheorem{Definition}[Theorem]{Definition}
\newtheorem{Assumption}[Theorem]{Assumption}
\begin{document}


\title{Spectral and scattering theory of one-dimensional coupled photonic crystals}

\author{G. De Nittis$^1$\footnote{Supported by the Chilean Fondecyt Grant 1190204.},
M. Moscolari$^2$\footnote{Supported by the National Group of Mathematical Physics
(GNFM-INdAM)}, S. Richard$^3$\footnote{Supported by the grant\emph{Topological
invariants through scattering theory and noncommutative geometry} from Nagoya
University, and by JSPS Grant-in-Aid for scientific research C no 18K03328, and on
leave of absence from Univ.~Lyon, Universit\'e Claude Bernard Lyon 1, CNRS UMR 5208,
Institut Camille Jordan, 43 blvd.~du 11 novembre 1918, F-69622 Villeurbanne cedex,
France.}, R. Tiedra de Aldecoa$^1$\footnote{Supported by the Chilean Fondecyt Grant
1170008.}}

\date{\small}
\maketitle
\vspace{-1cm}

\begin{quote}
\emph{
\begin{enumerate}
\item[$^1$] Facultad de Matem\'aticas \& Instituto de F\'isica, Pontificia Universidad
Cat\'olica de Chile,\\ Av. Vicu\~na Mackenna 4860, Santiago, Chile
\item[$^2$] Fachbereich Mathematik, Eberhard-Karls-Universit\"at, Auf der Morgenstelle 10, 72076 T\"ubingen, Germany
\item[$^3$] Graduate school of mathematics, Nagoya University,
Chikusa-ku,\\Nagoya 464-8602, Japan
\item[]\emph{E-mails:} gidenittis@mat.uc.cl, massimo.moscolari@mnf.uni-tuebingen.de,
richard@math.nagoya-u.ac.jp, rtiedra@mat.puc.cl
\end{enumerate}
}
\end{quote}


\begin{abstract}
We study the spectral and scattering theory of light transmission in a system
consisting of two asymptotically periodic waveguides, also known as one-dimensional
photonic crystals, coupled by a junction. Using analyticity techniques and commutator
methods in a two-Hilbert spaces setting, we determine the nature of the spectrum and
prove the existence and completeness of the wave operators of the system.
\end{abstract}

\textbf{2010 Mathematics Subject Classification:} 81Q10, 47A40, 47B47, 46N50, 35Q61.

\smallskip

\textbf{Keywords:} Spectral theory, scattering theory, Maxwell operators, commutator
methods.

\tableofcontents

\section{Introduction and main results}\label{sec_intro}
\setcounter{equation}{0}

In this paper, we study the propagation of an electromagnetic field $(\vec E,\vec H)$
in an infinite one-dimensional waveguide. We assume that (i) the waveguide is parallel
to the $x$-axis of the Cartesian coordinate system; (ii) the  electric field varies
along the $y$-axis and is constant on the planes perpendicular to the $x$-axis, i.e.,
$\vec E(x,y,z,t)=\varphi_E(x,t)\widehat y$; (iii) the  magnetic field varies along the
$z$-axis and is constant on the planes perpendicular to the $x$-axis, i.e.,
$\vec H(x,y,z,t)=\varphi_H(x,t)\widehat z$; (iv)  the waveguide is made of
isotropic medium\footnote{The interaction between the electromagnetic field and the
dielectric medium is characterised by the electric permittivity tensor $\varepsilon$
and the magnetic permeability tensor $\mu$. In an isotropic medium these tensors are
multiple of the identity, and thus determined by two scalars.}. Under these
assumptions, one has $\nabla\times\vec E=(\partial_x\varphi_E)\widehat z$ and
$\nabla\times\vec H=(-\partial_x\varphi_H)\widehat y$ and the dynamical sourceless
Maxwell equations \cite{Jac99} read as
\begin{equation}\label{eq_Maxwell_iso}
\begin{cases}
\varepsilon\;\!\partial_t\varphi_E=-\partial_x\varphi_H\\
\mu\;\!\partial_t\varphi_H=-\partial_x\varphi_E.
\end{cases}
\end{equation}
The scalar quantities $\varepsilon$ and $\mu$ in \eqref{eq_Maxwell_iso} are the
electric permittivity and magnetic permeability, respectively. They are strictly
positive functions on $\R$ describing the interaction of the waveguide with the
electromagnetic field. One can also include in the model effects associated to
bi-anisotropic media \cite{LSTV94}. In our case, this is achieved by modifying
the system \eqref{eq_Maxwell_iso} as follows\;\!:
\begin{equation}\label{eq_Maxwell_bi_iso}
\begin{cases}
\varepsilon\;\!\partial_t\varphi_E+\chi\;\!\partial_t\varphi_H=-\partial_x\varphi_H\\
\mu\;\!\partial_t\varphi_H+\chi^*\partial_t\varphi_E=-\partial_x\varphi_E.
\end{cases}
\end{equation}
The (possibly complex-valued) function $\chi$ is called bi-anisotropic coupling
term\footnote{In the general theory of bi-anisotropic media, $\chi$ is a tensor rather
than a scalar. The system of equations \eqref{eq_Maxwell_bi_iso} corresponds to a
particular choice of the form of this tensor. For more details on the theory of
bi-anisotropic media, we refer the interested reader to the monograph \cite{LSTV94}.}.
In the sequel, we will refer to the triple $(\varepsilon,\mu,\chi)$ as the
\emph{constitutive} functions of the waveguide.

Let us first discuss the case of periodic waveguides, also known as one-dimensional
photonic crystals, consisting in one-dimensional media with dielectric properties
which vary periodically in space \cite{JJW99, PTB75,ZRN+12}. Mathematically, this
translates into the fact that the functions $\varepsilon,\mu$ and $\chi$ in
\eqref{eq_Maxwell_bi_iso} are periodic, all with the same period. This makes
\eqref{eq_Maxwell_bi_iso} into a coupled system of differential equations with
periodic coefficients, and standard techniques like Bloch-Floquet theory (see e.g.
\cite{Kuc93}) can be used to study the propagation of solutions (or modes). One of the
fundamental properties of periodic waveguides is the presence of a frequency spectrum
made of bands and gaps. This implies that not all the modes can propagate along the
medium, since the propagation of modes associated to frequencies inside a gap is
forbidden by the ``geometry'' of the system. This phenomenon is similar to the one
appearing in the theory of periodic Schr\"odinger operators, where one has electronic
energy bands instead of frequency bands \cite[Sec.~XIII.16]{RS4}.

The study of the propagation of light in a periodic waveguide can be performed using
Bloch-Floquet theory. The situation becomes more complicated when one wants to study
the propagation of light through two periodic waveguides of different periods that are
connected by a junction. Such a system is schematically  represented in Figure
\ref{figure}. The asymptotic behaviour of the system on the left is characterised by
the periodic constitutive functions $(\varepsilon_\ell,\mu_\ell,\chi_\ell)$, whereas
the asymptotic behaviour of the system on the right is characterised by periodic
constitutive functions $(\varepsilon_{\rm r},\mu_{\rm r},\chi_{\rm r})$. Namely,
$\varepsilon\to\varepsilon_\ell$ when $x\to-\infty$ and
$\varepsilon\to\varepsilon_{\rm r}$ when $x\to+\infty$, and similarly for the other
two functions $\mu$ and $\chi$ (see Assumption \ref{ass_weight} for a precise
statement). The full system represented in Figure \ref{figure} can therefore be
interpreted as a perturbation of a ``free'' system obtained by glueing together two
purely periodic systems, one with periodicity of type $\ell$ on the left and the other
with periodicity of type $\rm r$ on the right. Accordingly, the analysis of the
dynamics of the full system can be performed with the tools of spectral and scattering
theories, leading us exactly to the main goal of this work\;\!: \emph{the spectral and
scattering analysis of one-dimensional coupled photonic crystals.}

\begin{figure}[htbp]\label{figure}
\begin{center}
\includegraphics[width=370pt]{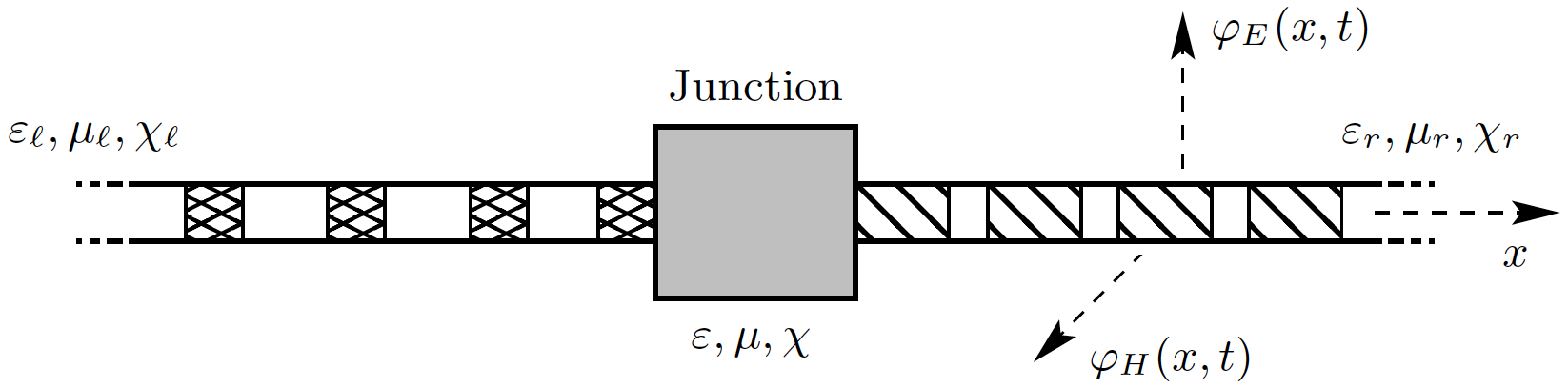}
\caption{Two periodic waveguides (one-dimensional photonic crystals) connected by a
junction}
\end{center}
\end{figure}

Since quantum mechanics provides a rich toolbox for the study of problems associated
to Schr\"odinger equations, we recast our equations of motion in a Schr\"odinger form
to take advantage of these tools, in particular commutator methods which will be used
extensively in this paper. Namely, with the notation
$
w:=\big(\begin{smallmatrix}\varepsilon & \chi\\\chi^* & \mu\end{smallmatrix}\big)^{-1}
$
for the positive-definite matrix of weights associated to the constitutive functions
$\varepsilon$, $\mu$, $\chi$ (Maxwell weight for short), we rewrite the system of
equations \eqref{eq_Maxwell_bi_iso} in the matrix form
\begin{equation}\label{eq_Maxwell_recast}
i\partial_t\begin{pmatrix}\varphi_E\\\varphi_H\end{pmatrix}
=w\begin{pmatrix}0 & -i\partial_x\\-i\partial_x & 0\end{pmatrix}
\begin{pmatrix}\varphi_E\\\varphi_H\end{pmatrix},
\end{equation}
so that it can be considered as a Schr\"odinger equation for the state
$(\varphi_E,\varphi_H)^{\rm T}$ in the Hilbert space $\ltwo(\R,\C^2)$. This
observation is by no means new. Since the dawn of quantum mechanics, the founding
fathers were well-aware that the Maxwell equations in vacuum are relativistically
covariant equations for a massless spin-1 particle \cite[pp. 151 \& 198]{Wig39}.
Moreover, similar Schr\"odinger formulations have already been employed in the
literature to study the quantum scattering theory of electromagnetic waves and other
classical waves in homogeneous media \cite{BS87,Kat67,RS77,SW71,Wil66}, and to study
the propagation of light in periodic media \cite{DL14,DL17,FK97,Kuc01}, among other
things. However, to the best of our knowledge, the specific problem we want to tackle
in the present work has never been considered in the literature.

The papers \cite{RS77,SW71,Wil66} deal with the scattering theory of three-dimensional
electromagnetic waves in a homogeneous medium. In that setup, the constitutive tensors
$\varepsilon$, $\mu$, $\chi$ are asymptotically constant. In contrast, in our
one-dimensional setup the constitutive functions $\varepsilon$, $\mu$, $\chi$ are only
assumed to be asymptotically periodic. This introduces a significant complication and
novelty to the model, even though it has lower dimension than the three-dimensional
models. Also, several works dealing with the scattering theory of electromagnetic
waves are conducted under the simplifying assumption that $\chi=0$ (absence of
bi-anisotropic effects), an assumption that we do not make in the present work. The
papers \cite{Bro65,Wil64} deal with the transmission of the electric field and voltage
along lines, also called one-dimensional Ohmic conductors. Mathematically, this
problem is described by a system of differential equations similar to
\eqref{eq_Maxwell_bi_iso} or \eqref{eq_Maxwell_recast}. However, in these papers, the
constitutive quantities, namely the self-inductance and capacitance, are once again
assumed to be asymptotically constant, in contrast with our less restrictive
assumption of asymptotic periodicity. 
Finally, in the paper \cite{Yaf05}, almost
no restrictions are imposed on the asymptotic behavior of the constitutive functions,
but a stronger condition (invertibility) is imposed on the operator modelling
the junction. Here, we do not assume that this operator is
invertible or isometric, since we want to describe the scattering effects produced
by the introduction of the junction itself, without imposing unnecessary conditions
on the relation between the free dynamics without interface and the full
dynamics in presence of the interface (see Remark \ref{rem_junction} for more details).
Also, even though the results of that paper hold in any space dimension,
our results for one-dimensional photonic crystals are more detailed.

To conclude our overview of the literature, we point out that the dynamical equations
describing our model are common to other physical systems. This is for instance the
case of the equations describing the propagation of an Alfv\'en wave in a periodically
stratified stationary plasma \cite{Ali92}, the propagation of linearized water waves
in a periodic bottom  topography \cite{CGLS18}, or the propagation of harmonic
acoustic waves in periodic waveguides \cite{Bra94}. In consequence, the results of our
analysis here can be applied to all these models by reinterpreting in a appropriate
way the necessary quantities.

Here is a description of our results. In Section \ref{sec_full}, we introduce our
assumption on the Maxwell weight $w$ (Assumption \ref{ass_weight}) and we define the
full Hamiltonian $M$ in the Hilbert space $\H_w$ describing the one-dimensional
coupled photonic crystal. In Section \ref{sec_free}, we define the free Hamiltonian
$M_0$ in the Hilbert space $\H_0$ associated to $M$, and we define the operator
$J:\H_0\to\H_w$ modelling the junction depicted in Figure \ref{figure} (Definition
\ref{def_junction}). The operator $M_0$ is the direct sum of an Hamiltonian $M_\ell$
describing the periodic waveguide asymptotically on the left and an Hamiltonian
$M_{\rm r}$ describing the periodic waveguide asymptotically on the right. In Section
\ref{sec_fibering}, we use Bloch-Floquet theory to show that the asymptotic
Hamiltonians $M_\ell$ and $M_{\rm r}$ fiber analytically in the sense of G\'erard and
Nier \cite{GN98} (Proposition \ref{prop_fibered}). As a by-product, we prove that
$M_\ell$ and $M_{\rm r}$ do not possess flat bands, and thus have purely absolutely
continuous spectra (Proposition \ref{prop_spec_asymp}). The analytic fibration of
$M_\ell$ and $M_{\rm r}$ provides also a natural definition for the set $\Tau_M$ of
thresholds in the spectrum of $M$ (Eqs.
\eqref{eq_thresholds_star}-\eqref{eq_thresholds_M}). In section \ref{sec_com}, we
recall from \cite{ABG96,Sah97} the necessary abstract results on commutator methods
for self-adjoint operators. In section \ref{sec_conj_free}, we construct for each
compact interval $I\subset\R\setminus\Tau_M$ a conjugate operator $A_{0,I}$ for the
free Hamiltonian $M_0$ and use it to prove a limiting absorption principle for $M_0$
in $I$ (Theorem \ref{thm_Mourre_Free} and the discussion that follows). In Section
\ref{sec_conj_full}, we use the fact that $A_{0,I}$ is a conjugate operator for $M_0$
and abstract results on the Mourre theory in a two-Hilbert spaces setting
\cite{RT13_2} to show that the operator $A_I:=JA_{0,I}J^*$ is a conjugate operator for
$M$ (Theorem \ref{thm_regul}). In Section \ref{sec_spec}, we use the operator $A_I$ to
prove a limiting absorption principle for $M$ in $I$, which implies in particular that
in any compact interval $I\subset\R\setminus\Tau_M$ the Hamiltonian $M$ has at most
finitely many eigenvalues, each one of finite multiplicity, and no singular continuous
spectrum (Theorem \ref{thm_spec_M}). Using Zhislin sequences (a particular type of
Weyl sequences), we also show in Proposition \ref{proposition_ess_M} that $M$ and
$M_0$ have the same essential spectrum. In Section \ref{sec_scatt_two}, we recall
abstract criteria for the existence and the completeness of wave operators in a
two-Hilbert spaces setting. Finally, in Section \ref{sec_scatt_photonic}, we use these
abstract results in conjunction with the results of the previous sections to prove the
existence and the completeness of wave operators for the pair $(M_0,M)$ (Theorem
\ref{thm_wave_max}). We also give an explicit description of the initial sets of the
wave operators in terms of the asymptotic velocity operators for the Hamiltonians
$M_\ell$ and $M_{\rm r}$ (Proposition \ref{prop_V_star} \& Theorem
\ref{thm_initial_max}).

\section{Model}\label{sec_model}
\setcounter{equation}{0}

\subsection{Full Hamiltonian}\label{sec_full}

In this section, we introduce the full Hamiltonian $M$ that we will study. It is a
one-dimensional Maxwell-like operator describing perturbations of an anisotropic
periodic one-dimensional photonic crystal.

Throughout the paper, for any Hilbert space $\H$, we write
$\langle\;\!\cdot\;\!,\;\!\cdot\;\!\rangle_\H$ for the scalar product on $\H$,
$\|\cdot\|_\H$ for the norm on $\H$, $\B(\H)$ for the set of bounded operators on
$\H$, and $\K(\H)$ for the set of compact operators on $\H$. We also use the notation
$\B(\H_1,\H_2)$ (resp. $\K(\H_1,\H_2)$) for the set of bounded (resp. compact)
operators from a Hilbert space $\H_1$ to a Hilbert space $\H_2$.

\begin{Definition}[One-dimensional Maxwell-like operator]\label{def_model}
Let $0<c_0<c_1<\infty$ and take a Hermitian matrix-valued function
$w\in\linfty\big(\R,\B(\C^2)\big)$ such that $c_0\le w(x)\le c_1$ for a.e. $x\in\R$.
Let $P$ be the momentum operator in $\ltwo(\R)$, that is, $Pf:=-if'$ for each
$f\in\H^1(\R)$, with $\H^1(\R)$ the first Sobolev space on $\R$. Let
$$
D\varphi:=
\begin{pmatrix}
0 & P\\
P & 0
\end{pmatrix}
\varphi,
\quad\varphi\in\dom(D):=\H^1(\R,\C^2).
$$
Then the Maxwell-like operator $M$ in $\ltwo(\R,\C^2)$ is defined as
$$
M\varphi:=wD\varphi,\quad\varphi\in\dom(M):=\dom(D).
$$
\end{Definition}

The Maxwell weight $w$ that we consider converges at $\pm\infty$ to periodic functions
in the following sense\;\!:

\begin{Assumption}[Maxwell weight]\label{ass_weight}
There exist $\varepsilon>0$ and hermitian matrix-valued functions
$w_\ell,w_{\rm r}\in\linfty\big(\R,\B(\C^2)\big)$ periodic of periods
$p_\ell,p_{\rm r}>0$ such that
\begin{equation}\label{eq_ass_weight}
\begin{aligned}
\big\|w(x)-w_\ell(x)\big\|_{\B(\C^2)}
&\le{\rm Const.}\;\!\langle x\rangle^{-1-\varepsilon},\quad\hbox{a.e. $x<0$,}\\
\big\|w(x)-w_{\rm r}(x)\big\|_{\B(\C^2)}
&\le{\rm Const.}\;\!\langle x\rangle^{-1-\varepsilon},\quad\hbox{a.e. $x>0$,}
\end{aligned}
\end{equation}
where the indexes $\ell$ and r stand for ``left'' and ``right'',
 and $\langle x\rangle:=(1+|x|^2)^{1/2}$.
\end{Assumption}

\begin{Lemma}\label{lemma_self}
Let Assumption \ref{ass_weight} be satisfied.
\begin{enumerate}
\item[(a)] One has for a.e. $x\in\R$ the inequalities
$$
c_0\le w_\ell(x)\le c_1\quad\hbox{and}\quad c_0\le w_{\rm r}(x)\le c_1,
$$
with $c_0,c_1$ introduced in Definition \ref{def_model}.
\item[(b)] The sesquilinear form
$$
\langle\cdot,\cdot\rangle_{\H_w}:\ltwo(\R,\C^2)\times\ltwo(\R,\C^2)\to\C,
\quad(\psi,\varphi)\mapsto\big\langle\psi,w^{-1}\varphi\big\rangle_{\ltwo(\R,\C^2)},
$$
defines a new scalar product on $\ltwo(\R,\C^2)$, and we denote by $\H_w$ the space
$\ltwo(\R,\C^2)$ equipped with $\langle\;\!\cdot\;\!,\;\!\cdot\;\!\rangle_{\H_w}$.
Moreover, the norm of $\ltwo(\R,\C^2)$ and $\H_w$ are equivalent, and the claim
remains true if one replaces $w$ with $w_\ell$ or $w_{\rm r}$.
\item[(c)] The operator $M$ with domain $\dom(M):=\H^1(\R,\C^2)$ is self-adjoint in
$\H_w$.
\end{enumerate}
\end{Lemma}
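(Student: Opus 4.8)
The plan is to prove the three parts in order, each reducing to a routine but careful verification.

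For part (a), the strategy is to exploit the pointwise bounds from Assumption \ref{ass_weight} together with a density/limiting argument. Fix $x<0$ and note that by the decay estimate $\|w(x)-w_\ell(x)\|_{\B(\C^2)}\le{\rm Const.}\;\!\langle x\rangle^{-1-\varepsilon}$ we have $w_\ell(x)=\lim_{k\to\infty}w(x-kp_\ell)$ for a.e.\ $x$, using the $p_\ell$-periodicity of $w_\ell$ to translate back to $x$. Since each $w(x-kp_\ell)$ satisfies $c_0\le w(x-kp_\ell)\le c_1$ and the cone of matrices with spectrum in $[c_0,c_1]$ is closed, the limit $w_\ell(x)$ inherits the bound; the same argument with $x>0$ and right translates gives the statement for $w_{\rm r}$. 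One must be slightly careful about the null set on which the essential bounds fail, but since a countable union of null sets is null this causes no trouble.

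For part (b), the point is that $w^{-1}$ is a bounded, self-adjoint, uniformly positive multiplication operator: from $c_0\le w(x)\le c_1$ we get $c_1^{-1}\le w(x)^{-1}\le c_0^{-1}$ for a.e.\ $x$. Hence $(\psi,\varphi)\mapsto\langle\psi,w^{-1}\varphi\rangle_{\ltwo(\R,\C^2)}$ is sesquilinear, Hermitian-symmetric, and satisfies $c_1^{-1}\|\varphi\|_{\ltwo}^2\le\langle\varphi,w^{-1}\varphi\rangle_{\ltwo}\le c_0^{-1}\|\varphi\|_{\ltwo}^2$, so it is a scalar product inducing a norm equivalent to that of $\ltwo(\R,\C^2)$; in particular $\H_w$ is complete, hence a Hilbert space. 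The identical reasoning applies with $w_\ell$ or $w_{\rm r}$ in place of $w$, now invoking part (a) for the two-sided bound.

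For part (c), self-adjointness of $M=wD$ in $\H_w$ is most cleanly obtained by a similarity transformation to a manifestly self-adjoint operator on the usual space. First, $D$ with domain $\H^1(\R,\C^2)$ is self-adjoint in $\ltwo(\R,\C^2)$, since $D=\big(\begin{smallmatrix}0&P\\P&0\end{smallmatrix}\big)$ is unitarily equivalent via the constant unitary $\tfrac1{\sqrt2}\big(\begin{smallmatrix}1&1\\1&-1\end{smallmatrix}\big)$ to $\big(\begin{smallmatrix}P&0\\0&-P\end{smallmatrix}\big)$, a direct sum of copies of the self-adjoint operator $P$. Now observe that $M$ is self-adjoint in $\H_w$ if and only if $w^{-1/2}\;\!M\;\!w^{1/2}=w^{1/2}D\;\!w^{1/2}$ is self-adjoint in $\ltwo(\R,\C^2)$, because $w^{1/2}:\H_w\to\ltwo(\R,\C^2)$ is unitary (it intertwines the two inner products) and maps $\dom(M)=\H^1(\R,\C^2)$ onto $\H^1(\R,\C^2)$ (multiplication by the bounded, boundedly-invertible Lipschitz-free—but $\linfty$—matrix $w^{1/2}$ preserves $\H^1$ only if $w^{1/2}$ is itself Lipschitz, which need not hold, so this last point needs care). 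I expect the genuine obstacle to be exactly this domain issue: $w^{1/2}$ is merely $\linfty$, so it does not obviously preserve the Sobolev space, and the clean unitary-conjugation argument must be replaced by a direct verification that $\langle M\psi,\varphi\rangle_{\H_w}=\langle\psi,M\varphi\rangle_{\H_w}$ for $\psi,\varphi\in\H^1(\R,\C^2)$ (symmetry, which is immediate since $\langle M\psi,\varphi\rangle_{\H_w}=\langle wD\psi,w^{-1}\varphi\rangle_{\ltwo}=\langle D\psi,\varphi\rangle_{\ltwo}=\langle\psi,D\varphi\rangle_{\ltwo}=\langle\psi,M\varphi\rangle_{\H_w}$) together with an argument that $M\pm i$ is surjective on $\H_w$. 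The surjectivity is where one pays: given $g\in\H_w=\ltwo(\R,\C^2)$, solving $(wD\pm i)\varphi=g$ amounts to $(D\pm iw^{-1})\varphi=w^{-1}g$, and since $w^{-1}$ is bounded self-adjoint and uniformly positive in $\ltwo(\R,\C^2)$, the operator $D\pm iw^{-1}$ on $\dom(D)$ is invertible (its imaginary part is $\pm w^{-1}$, bounded below by $c_1^{-1}>0$, so standard numerical-range estimates give a bounded inverse), which yields $\varphi\in\dom(D)=\dom(M)$ with $M\varphi\pm i\varphi=g$. Combining symmetry with the basic criterion $\Ran(M\pm i)=\H_w$ then gives self-adjointness, and one notes the argument is unchanged with $w_\ell$ or $w_{\rm r}$ in place of $w$.
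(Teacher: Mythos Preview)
Your proof is correct. The paper's own argument is extremely terse: it declares (a) a ``direct consequence of the assumptions'', notes (b) follows from the two-sided bounds, and defers (c) entirely to \cite[Prop.~6.2]{DL18}. Your write-up supplies the details the paper omits. For (a), the limit-along-translates argument (using $w_\ell(x)=\lim_{k\to\infty}w(x-kp_\ell)$ via periodicity and the decay estimate) is a clean way to make ``direct consequence'' precise; one cosmetic point is that the opening ``Fix $x<0$'' is misleading, since your argument in fact works for every $x\in\R$ (as it must), the periodicity being what sends $x-kp_\ell\to-\infty$ regardless of the sign of $x$. For (c), you were right to abandon the $w^{1/2}$-conjugation once you noticed the domain issue, and the route you settle on---symmetry of $M$ in $\H_w$ plus surjectivity of $M\pm i$ via the numerical-range bound $|\Im\langle\varphi,(D\pm iw^{-1})\varphi\rangle_{\ltwo}|=\langle\varphi,w^{-1}\varphi\rangle_{\ltwo}\ge c_1^{-1}\|\varphi\|_{\ltwo}^2$, which gives injectivity with closed range for $D\pm iw^{-1}$ and its adjoint $D\mp iw^{-1}$---is a standard and self-contained argument of exactly the sort one expects behind the cited reference.
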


\begin{proof}
Point (a) is a direct consequence of the assumptions on $w,w_\ell,w_{\rm r}$. Point
(b) follows from the bounds $c_0\le w(x),w_\ell(x),w_{\rm r}(x)\le c_1$ valid for a.e.
$x\in\R$. Point (c) can be proved as in \cite[Prop.~6.2]{DL18}.
\end{proof}

\subsection{Free Hamiltonian}\label{sec_free}

We now define the free Hamiltonian associated to the operator $M$. Due to the
anisotropy of the Maxwell weight $w$ at $\pm\infty$, it is convenient to define left
and right asymptotic operators
$$
M_\ell:=w_\ell D\quad\hbox{and}\quad M_{\rm r}:=w_{\rm r}D,
$$
with $w_\ell$ and $w_{\rm r}$ as in Assumption \ref{ass_weight}. Lemma
\ref{lemma_self}(c) implies that the operators $M_\ell$ and $M_{\rm r}$ are
self-adjoint in the Hilbert spaces $\H_{w_\ell}$ and $\H_{w_{\rm r}}$, with the same
domain $\dom(M_\ell)=\dom(M_{\rm r})=\dom(M)$. Then we define the free Hamiltonian as
the direct sum operator
$$
M_0:=M_\ell\oplus M_{\rm r}
$$
in the Hilbert space $\H_0:=\H_{w_\ell}\oplus\H_{w_{\rm r}}$. Since the free
Hamiltonian acts in the Hilbert space $\H_0$ and the full Hamiltonian acts in the
Hilbert space $\H_w$, we need to introduce an identification operator between the
spaces $\H_0$ and $\H_w:$

\begin{Definition}[Junction operator]\label{def_junction}
Let $j_\ell,j_{\rm r}\in C^\infty(\R,[0,1])$ be such that
$$
j_\ell(x):=
\begin{cases}
1 &\hbox{if~~$x\le-1$}\\
0 &\hbox{if~~$x\ge-1/2$}
\end{cases}
\qquad\hbox{and}\qquad
j_{\rm r}(x):=
\begin{cases}
0 &\hbox{if~~$x\le1/2$}\\
1 &\hbox{if~~$x\ge1$.}
\end{cases}
$$
Then $J:\H_0\to\H_w$ is the bounded operator defined by
$$
J(\varphi_\ell,\varphi_{\rm r})
:=j_\ell\;\!\varphi_\ell
+j_{\rm r}\;\!\varphi_{\rm r},
$$
with adjoint $J^*:\H_w\to\H_0$ given by
$
J^*\varphi
=\big(w_\ell w^{-1}j_\ell\;\!\varphi,w_{\rm r}w^{-1}j_{\rm r}\;\!\varphi\big).
$
\end{Definition}

\begin{Remark}\label{rem_junction}
We call $J$ the junction operator because it models mathematically the junction
depicted in Figure \ref{figure}. Indeed, the Hamiltonian $M_0$ only describes the free
dynamics of the system in the bulk asymptotically on the left and in the bulk
asymptotically on the right. Since $M_0$ is the direct sum of the operators $M_\ell$
and $M_{\rm r}$, the interface effects between the left and the right parts of the
system are not described by $M_0$ in any way. The role of the operator $J$ is thus to
map the free bulk states of the system belonging to the direct sum Hilbert space
$\H_0$ onto a \emph{joined} state belonging to the physical Hilbert space $\H_w$,
where acts the full Hamiltonian $M$ describing the interface effects.

Given a state $\varphi\in\H_w$, the square norm $E(\varphi):=\|\varphi\|_{\H_w}^2$ can
be interpreted as the total energy of the electromagnetic field
$\varphi\equiv(\varphi_E,\varphi_H)^{\rm T}$. A direct computation shows that the
total energy of a state $J(\varphi_\ell,\varphi_{\rm r})$ obtained by joining bulk
states $\varphi_\ell\in\H_\ell$ and $\varphi_{\rm r}\in\H_{\rm r}$ satisfies
$$
E\big(J(\varphi_\ell,\varphi_{\rm r})\big)
=E_\ell(\varphi_\ell)+E_{\rm r}(\varphi_r)
+E_{\rm interface}(\varphi_\ell,\varphi_{\rm r})
$$
with $E_\ell(\varphi_\ell):=\|j_\ell\;\!\varphi_\ell\|_{\H_{w_\ell}}^2$ and
$E_{\rm r}(\varphi_r):=\|j_{\rm r}\;\!\varphi_{\rm r}\|_{\H_{w_{\rm r}}}^2$ the total
energies of the field $j_\ell\varphi_\ell$ on the left and the field $j_r\varphi_r$
on the right, and with
$$
E_{\rm interface}(\varphi_\ell,\varphi_{\rm r})
:=\big\langle\varphi_\ell,j_\ell\big(w^{-1}-w_\ell^{-1}\big)
\;\!j_\ell\;\!\varphi_\ell\big\rangle_{\ltwo(\R,\C^2)}
+\big\langle\varphi_{\rm r},j_{\rm r}\big(w^{-1}-w_{\rm r}^{-1}\big)
\;\!j_{\rm r}\;\!\varphi_{\rm r}\big\rangle_{\ltwo(\R,\C^2)}
$$
the energy associated with the left and right external interfaces of the junction. In
particular, one notices that there is no contribution to the energy associated to the
central region $(-1/2,1/2)$ of the junction. This physical observation shows as a
by-product that the operator $J$ is neither invertible, nor isometric.
\end{Remark}

\subsection{Fibering of the free Hamiltonian}\label{sec_fibering}

In this section, we introduce a Bloch-Floquet (or Bloch-Floquet-Zak or
Floquet-Gelfand) transform to take advantage of the periodicity of the operators
$M_\ell$ and $M_{\rm r}$. For brevity, we use the symbol $\star$ to denote either the
index ``$\ell$'' or the index ``r''.

Let
$$
\Gamma_\star:=\big\{np_\star\mid n\in\Z\big\}\subset\R
$$
be the one-dimensional lattice of period $p_\star$ with fundamental cell
$Y_\star:=[-p_\star/2,p_\star/2]$, and let
$$
\Gamma^*_\star:=\big\{2\pi n/p_\star\mid n\in\Z\big\}\subset\R
$$
be the reciprocal lattice of $\Gamma_\star$ with fundamental cell
$Y^*_\star:=[-\pi/p_\star,\pi/p_\star]$. For each $t\in\R$, we define the translation
operator
$$
T_t:\ltwoloc(\R,\C^2)\to\ltwoloc(\R,\C^2),\quad\varphi\mapsto\varphi(\;\!\cdot\;\!-t).
$$
Using this operator, we can define the Bloch-Floquet transform of a $\C^2$-valued
Schwartz function $\varphi\in\S(\R,\C^2)$ as
$$
(\U_\star\varphi)(k,\theta)
:=\sum_{n\in\Z}\e^{-ik(\theta-np_\star)}\big(T_{np_\star}\varphi\big)(\theta),
\quad k,\theta\in\R.
$$
One can verify that $\U_\star\varphi$ is $p_\star$-periodic in the variable $\theta$,
$$
(\U_\star\varphi)(k,\theta+\gamma)=(\U_\star\varphi)(k,\theta),
\quad\gamma\in\Gamma_\star,
$$
and $2\pi/p_\star$-pseudo-periodic in the variable $k$,
$$
(\U_\star\varphi)(k+\gamma^*,\theta)=\e^{-i\theta\gamma^*}(\U_\star\varphi)(k,\theta),
\quad\gamma^*\in\Gamma^*_\star.
$$

Now, let $\h_\star$ be the Hilbert space obtained by equipping the set
$$
\big\{\varphi\in\ltwoloc(\R,\C^2)
\mid\hbox{$T_\gamma\varphi=\varphi$ for all $\gamma\in\Gamma_\star$}\big\}
$$
with the scalar product
$$
\langle\varphi,\psi\rangle_{\h_\star}
:=\int_{Y_\star}\d\theta\,\big\langle\varphi(\theta),
w_\star(\theta)^{-1}\psi(\theta)\big\rangle_{\C^2}.
$$
Since $\h_\star$ and $\ltwo(Y_\star,\C^2)$ are isomorphic, we shall use both
representations. Next, let $\tau:\Gamma_\star^*\to\B(\h_\star)$ be the unitary
representation of the dual lattice $\Gamma_\star^*$ on $\h_\star$ given by
$$
\big(\tau(\gamma^*)\varphi\big)(\theta):=\e^{i\theta\gamma^*}\varphi(\theta),
\quad\hbox{$\gamma^*\in\Gamma_\star^*$, $\varphi\in\h_\star$, a.e. $\theta\in\R$,}
$$
and let $\H_{\tau,\star}$ be the Hilbert space obtained by equipping the set
$$
\big\{u\in\ltwoloc(\R,\h_\star)\mid\hbox{$u(\;\!\cdot\;\!-\gamma^*)=\tau(\gamma^*)u$
for all $\gamma^*\in\Gamma^*_\star$}\big\}
$$
with the scalar product
$$
\langle u,v\rangle_{\H_{\tau,\star}}
:=\tfrac1{|Y^*_\star|}\int_{Y^*_\star}\d k\,\langle u(k),v(k)\rangle_{\h_\star}.
$$
There is a natural isomorphism from $\H_{\tau,\star}$ to $\ltwo(Y_\star^*,\h_\star)$
given by the restriction from $\R$ to $Y_\star^*$, and with inverse given by
$\tau$-equivariant continuation. However, using $\H_{\tau,\star}$ has various
advantages and we shall stick to it in the sequel. Direct calculations show that the
Bloch-Floquet transform extends to a unitary operator
$\U_\star:\H_{w_\star}\to\H_{\tau,\star}$ with inverse
$$
\big(\U_\star^{-1}u\big)(x)
=\tfrac1{|Y^*_\star|}\int_{Y^*_\star}\d k\,\e^{ikx}\big(u(k)\big)(x),
\quad\hbox{$u\in\H_{\tau,\star}$, a.e. $x\in\R$.}
$$
Furthermore, since $M_\star$ commutes with the translation operators $T_\gamma$
($\gamma\in\Gamma_\star$), the operator $M_\star$ is decomposable in the Bloch-Floquet
representation. Namely, we have
$$
\widehat{M_\star}
:=\U_\star M_\star\U_\star^{-1}
=\big\{\widehat{M_\star}(k)\big\}_{k\in\R}
$$
with
\begin{equation}\label{eq_cov}
\widehat{M_\star}(k-\gamma^*)=\tau(\gamma^*)\widehat{M_\star}(k)\tau(\gamma^*)^*,
\quad k\in\R,~\gamma^*\in\Gamma^*_\star,
\end{equation}
and
$$
\widehat{M_\star}(k)=w_\star\widehat D(k)
\quad\hbox{and}\quad
\widehat D(k)u(k)=
\begin{pmatrix}
0 & -i\partial_\theta+k\\
-i\partial_\theta+k & 0
\end{pmatrix}
u(k),\quad k\in Y_\star^*,~u\in\U_\star\;\!\dom(M_\star).
$$
Here, the domain $\U_\star\;\!\dom(M_\star)$ of $\U_\star M_\star\U_\star^{-1}$
satisfies
$$
\U_\star\;\!\dom(M_\star)
=\U_\star\;\!\H^1(\R,\C^2)
=\big\{u\in\ltwoloc(\R,\h_\star^1)\mid
\hbox{$u(\;\!\cdot\;\!-\gamma^*)=\tau(\gamma^*)u$
for all $\gamma^*\in\Gamma^*_\star$}\big\}
$$
with
$$
\h_\star^1
:=\big\{\varphi\in\H^1_{\rm loc}(\R,\C^2)\mid
\hbox{$T_\gamma\varphi=\varphi$ for all $\gamma\in\Gamma_\star$}\big\}.
$$

In the next proposition, we prove that the operator $\widehat{M_\star}$ is
analytically fibered in the sense of \cite[Def.~2.2]{GN98}. For this, we need to
introduce the Bloch variety
\begin{equation}\label{def_Bloch_variety}
\Sigma_\star:=\big\{(k,\lambda)\in Y^*_\star\times\R\mid
\lambda\in\sigma\big(\widehat{M_\star}(k)\big)\big\}.
\end{equation}

\begin{Proposition}[Fibering of the asymptotic Hamiltonians]\label{prop_fibered}
Let
$$
\widehat{M_\star}(\omega)\;\!\varphi
:=\left(w_\star\widehat D(0)+w_\star
\begin{pmatrix}
0 & \omega\\
\omega & 0
\end{pmatrix}\right)\varphi,
\quad\omega\in\C,~\varphi\in\h_\star^1.
$$
\begin{enumerate}
\item[(a)] The set
$$
\O_\star:=\big\{(\omega,z)\in\C\times\C\mid
z\in\rho\big(\widehat{M_\star}(\omega)\big)\big\}.
$$
is open in $\C\times\C$ and the map
$
\O_\star\ni(\omega,z)\mapsto\big(\widehat{M_\star}(\omega)-z\big)^{-1}\in\B(\h_\star)
$
is analytic in the variables $\omega$ and $z$.
\item[(b)] For each $\omega\in\C$, the operator $\widehat{M_\star}(\omega)$ has purely
discrete spectrum.
\item[(c)] If $\Sigma_\star$ is equipped with the topology induced by
$Y^*_\star\times\C$, then the projection $\pi_\R:\Sigma_\star\to\R$ given by
$\pi_\R(k,\lambda):=\lambda$ is proper.
\end{enumerate}
In particular, the operator $\widehat{M_\star}$ is analytically fibered in the sense
of \cite[Def.~2.2]{GN98}.
\end{Proposition}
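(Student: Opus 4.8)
The plan is to derive (a), (b), (c) from two facts: first, $\widehat{M_\star}(\omega)$ is affine in $\omega$, namely $\widehat{M_\star}(\omega)=\widehat{M_\star}(0)+\omega\,V$ with $V:=w_\star\big(\begin{smallmatrix}0&1\\1&0\end{smallmatrix}\big)\in\B(\h_\star)$ and with constant domain $\h_\star^1$; second, the reference operator $\widehat{M_\star}(0)=w_\star\widehat D(0)$ is self-adjoint in $\h_\star$ (cf.\ Lemma \ref{lemma_self}(c)) and has compact resolvent.

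For (a), I would fix $(\omega_0,z_0)\in\O_\star$, set $R_0:=\big(\widehat{M_\star}(\omega_0)-z_0\big)^{-1}$, and use the factorisation $\widehat{M_\star}(\omega)-z=\big(\widehat{M_\star}(\omega_0)-z_0\big)\big(1+R_0\big((\omega-\omega_0)V-(z-z_0)\big)\big)$. Since $V$ is bounded, the norm of $R_0\big((\omega-\omega_0)V-(z-z_0)\big)$ is $<1$ on a polydisc around $(\omega_0,z_0)$, where the second factor is therefore invertible by a Neumann series. This shows at once that $\O_\star$ is open and that $(\omega,z)\mapsto\big(\widehat{M_\star}(\omega)-z\big)^{-1}$ is locally a norm-convergent power series in $(\omega-\omega_0,z-z_0)$, hence jointly analytic; that $\O_\star\neq\emptyset$ is clear since $\widehat{M_\star}(0)$ is self-adjoint, so $(0,i)\in\O_\star$.

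For (b), the plan is first to show that $\widehat{M_\star}(0)$ has compact resolvent: because $\widehat D(0)^2$ acts as $-\partial_\theta^2$ on each component, the graph norm of $\widehat{M_\star}(0)$ on $\h_\star^1$ is equivalent to the $\h_\star^1$-norm (using $c_0\le w_\star\le c_1$ and the equivalence of the weighted and unweighted norms), so for $z\in\C\setminus\R$ the operator $\big(\widehat{M_\star}(0)-z\big)^{-1}$ maps $\h_\star$ boundedly into $\h_\star^1$, while $\h_\star^1\hookrightarrow\h_\star$ compactly by the Rellich embedding on the circle $\R/\Gamma_\star$. For arbitrary $\omega\in\C$ I would then pick $z=it$ with $|t|$ large enough that $\big\|\big(\widehat{M_\star}(0)-it\big)^{-1}\omega V\big\|_{\B(\h_\star)}<1$, deduce $it\in\rho\big(\widehat{M_\star}(\omega)\big)$ by a Neumann series, and write $\big(\widehat{M_\star}(\omega)-it\big)^{-1}$ as a bounded operator times the compact operator $\big(\widehat{M_\star}(0)-it\big)^{-1}$; thus $\widehat{M_\star}(\omega)$, being a bounded perturbation of the self-adjoint $\widehat{M_\star}(0)$, is closed and densely defined with non-empty resolvent set and compact resolvent, and hence has purely discrete spectrum.

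For (c), I would use that $Y^*_\star$ is compact and that, for real $k$, the operator $\widehat{M_\star}(k)$ is self-adjoint with $\sigma\big(\widehat{M_\star}(k)\big)\subset\R$, so that $\Sigma_\star=(Y^*_\star\times\R)\setminus\O_\star$ is closed in $Y^*_\star\times\R$ by (a); then, for any compact $K\subset\R$, the set $\pi_\R^{-1}(K)=\Sigma_\star\cap(Y^*_\star\times K)$ is closed in the compact set $Y^*_\star\times K$, hence compact, which is precisely the properness of $\pi_\R$. To conclude that $\widehat{M_\star}$ is analytically fibered in the sense of \cite[Def.~2.2]{GN98} over the Brillouin torus $\R/\Gamma^*_\star$, I would note that the family, being affine in $\omega$ with constant domain $\h_\star^1$, is holomorphic of type (A), that it descends to a complex neighborhood of the torus thanks to the covariance relation \eqref{eq_cov}, and that (a), (b), (c) supply respectively the resolvent analyticity, the discreteness of the fiber spectra, and the properness required by that definition. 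The only step with genuine content is (b) — the rest being Neumann series and point-set topology — so the point to handle carefully is the compactness of $\big(\widehat{M_\star}(0)-z\big)^{-1}$ in the weighted space $\h_\star$ and its persistence under the bounded perturbation $\omega V$.
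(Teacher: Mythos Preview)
Your proof is correct and follows essentially the same approach as the paper: both treat $\{\widehat{M_\star}(\omega)\}_{\omega\in\C}$ as an analytic family of type (A) (a bounded perturbation of the self-adjoint $\widehat{M_\star}(0)$ with constant domain $\h_\star^1$), establish compact resolvent at $\omega=0$, and then transfer this to all $\omega$. The only cosmetic differences are that in (b) the paper obtains compactness by an explicit matrix computation of $(\widehat D(0)+i)^{-1}$ followed by a resolvent identity, rather than your Rellich-embedding argument, and that your argument for (c) is slightly more direct than the paper's closed-map formulation.
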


\begin{proof}
(a) The operator $w_\star\widehat D(0)$ is self-adjoint on
$\h_\star^1\subset\h_\star$, and for each $\omega\in\C$ we have that
$
w_\star
\big(\begin{smallmatrix}
0 & \omega\\
\omega & 0
\end{smallmatrix}\big)
\in\B(\h_\star)
$.
Hence, for each $\omega\in\C$ the operator $\widehat{M_\star}(\omega)$ is closed in
$\h_\star$ and has domain $\h_\star^1$, and for each $x\in\R$ the operator
$\widehat{M_\star}(x)$ is self-adjoint on $\h_\star^1$. In particular, we infer by
functional calculus that
$$
\lim_{|t|\to\infty}
\big\|\big(\widehat{M_\star}(x)-it\big)^{-1}\big\|_{\B(\h_\star)}
\le\lim_{|t|\to\infty}\tfrac1{|t|}
=0\quad(t\in\R).
$$
Therefore, for each $y\in\R$ the set
$$
\Omega_y
:=\left\{it\in i\;\!\R\mid\left(\big\|\big(\widehat{M_\star}(x)
-it\big)^{-1}\big\|_{\B(\h_\star)}\right)^{-1}
>|y|\;\!\|w_\star\|_{\B(\h_\star)}\right\}
$$
is non-empty, and then the argument in \cite[Rem.~IV.3.2]{Kat95} guarantees that
$\Omega_y$ is contained in the resolvent set of $\widehat{M_\star}(x+iy)$. Thus, for
each $\omega\equiv x+iy\in\C$ the operator $\widehat{M_\star}(\omega)$ is closed in
$\h_\star$, has domain $\h_\star^1$, and non-empty resolvent set, and for each
$\varphi\in\h_\star^1$ the map
$\C\ni\omega\mapsto\widehat{M_\star}(\omega)\varphi\in\h_\star$ is linear and
therefore analytic. So, the collection
$\{\widehat{M_\star}(\omega)\}_{\omega\in\C}$ is an analytic family of type
(A) \cite[p.~16]{RS4}, and thus also an analytic family in the sense of Kato
\cite[p.~14]{RS4}. The claim is then a consequence of \cite[Thm.~XII.7]{RS4}.

(b) Since $\{\widehat{M_\star}(\omega)\}_{\omega\in\C}$ is an analytic family
of type (A), the operators $\widehat{M_\star}(\omega)$ have compact resolvent (and
thus purely discrete spectrum) either for all $\omega\in\C$ or for no $\omega\in\C$
\cite[Thm.~III.6.26 \& VII.2.4]{Kat95}. Therefore, to prove the claim, it is
sufficient to show that $\widehat{M_\star}(0)$ has compact resolvent. Now, we have
$$
\widehat{M_\star}(0)
=w_\star\widehat D(0)
=w_\star
\begin{pmatrix}
0 & -i\partial_\theta\\
-i\partial_\theta & 0
\end{pmatrix},
$$
where $-i\partial_\theta$ is a first order differential operator in $\ltwo(Y_\star)$
with periodic boundary conditions, and thus with purely discrete spectrum that
accumulates at $\pm\infty$. In consequence, each entry of the matrix operator
$$
\big(\widehat D(0)+i\big)^{-1}
=\begin{pmatrix}
-i\big(1+(i\partial_\theta)^2\big)^{-1}
& -i\partial_\theta\big(1+(i\partial_\theta)^2\big)^{-1}\\
-i\partial_\theta\big(1+(i\partial_\theta)^2\big)^{-1}&
-i\big(1+(i\partial_\theta)^2\big)^{-1}
\end{pmatrix}
$$
is compact in $\ltwo(Y_\star)$, so that $\big(\widehat D(0)+i\big)^{-1}$ is compact in
$\ltwo(Y_\star,\C^2)$. Since Lemma \ref{lemma_self}(a) implies that the norms on
$\ltwo(Y_\star,\C^2)$ and $\h_\star$ are equivalent, we infer that
$\big(\widehat D(0)+i\big)^{-1}$ is also compact in $\h_\star$. Finally, since
\begin{align}
\big(\widehat{M_\star}(0)+i\big)^{-1}
&=\big(\widehat{M_\star}(0)+iw_\star\big)^{-1}+\big(\widehat{M_\star}(0)+i\big)^{-1}
-\big(\widehat{M_\star}(0)+iw_\star\big)^{-1}\nonumber\\
&=\big(\widehat D(0)+i\big)^{-1}w_\star^{-1}-i\big(\widehat{M_\star}(0)+i\big)^{-1}
(1-w_\star)\big(\widehat D(0)+i\big)^{-1}w_\star^{-1},\label{eq_resolvent_bis}
\end{align}
with $w_\star^{-1}$ and $(1-w_\star)$ bounded in $\h_\star$, we obtain that
$\big(\widehat{M_\star}(0)+i\big)^{-1}$ is compact in $\h_\star$.

(c) Let $Y^*_\star\times\C$ be endowed with the topology induced by $\C\times\C$.
Point (a) implies that the set
$$
\Sigma^{\rm c}_\star
:=\big\{(k,z)\in Y^*_\star\times\C\mid z\in\rho\big(\widehat{M_\star}(k)\big)\big\}
$$
is open in $Y^*_\star\times\C$. Therefore, the set $\Sigma_\star$ is closed in
$Y^*_\star\times\C$ and the inclusion $\iota:\Sigma_\star\to Y_\star^*\times\C$ is a
closed map. Since the projection $\pi_\C:Y_\star^*\times\C\to\C$ given by
$\pi_\C(k,z):=z$ is also a closed map (because $Y^*_\star$ is compact, see
\cite[Ex.~7,~p.~171]{Mun00}) and $\pi_\R=\pi_\C\circ\iota$, we infer that $\pi_\R$ is
a closed map. Moreover, $\pi_\R$ is continuous because it is the restriction to the
subset $\Sigma_\star$ of the continuous projection $\pi_\C:Y_\star^*\times\C\to\C$. In
consequence, in order to prove that $\pi_\R$ is proper it is sufficient to show that
$\pi_\R^{-1}(\{\lambda\})$ is compact in $\Sigma_\star$ for each
$\lambda\in\pi_\R(\Sigma_\star)$. But since
$$
\pi_\R^{-1}(\{\lambda\})
=\big(\iota^{-1}\circ\pi_\C^{-1}\big)(\{\lambda\})
=\iota^{-1}(Y^*_\star\times\{\lambda\})
=(Y^*_\star\times\{\lambda\})\cap\Sigma_\star,
$$
this follows from compactness of $Y^*_\star$ and the closedness of $\Sigma_\star$ in
$Y^*_\star\times\C$.
\end{proof}

Proposition \ref{prop_fibered} can be combined with the theorem of Rellich
\cite[Thm.~VII.3.9]{Kat95} which, adapted to our notations, states\;\!:

\begin{Theorem}[Rellich]\label{thm_Rellich}
Let $\Omega\subset\C$ be a neighborhood of an interval $I_0\subset\R$ and let
$\{T(\omega)\}_{\omega\in\Omega}$ be a self-adjoint analytic family of type (A), with
each $T(\omega)$ having compact resolvent. Then there is a sequence of scalar-valued
functions $\lambda_n$ and a sequence of vector-valued functions $u_n$, all analytic on
$I_0$, such that for $\omega\in I_0$ the $\lambda_n(\omega)$ are the repeated
eigenvalues of $T(\omega)$ and the $u_n(\omega)$ form a complete orthonormal family of
the associated eigenvectors of $T(\omega)$.
\end{Theorem}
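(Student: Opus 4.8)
This is a classical statement (it is Theorem VII.3.9 in \cite{Kat95}); I sketch the argument I would follow, which reduces it to the finite-dimensional Rellich theorem for Hermitian analytic matrix families. The plan is to work locally near an arbitrary $\omega_0\in I_0$ and then globalize by a connectedness argument on the interval $I_0$.

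Since the family is of type (A) the resolvent $(T(\omega)-z)^{-1}$ is jointly holomorphic in $(\omega,z)$ where it is defined, and since $T(\omega_0)$ has compact resolvent its spectrum is discrete. Fix $\lambda_0\in\sigma\big(T(\omega_0)\big)$ and a small circle $\Gamma$ centered on $\R$ enclosing $\lambda_0$ and no other point of $\sigma\big(T(\omega_0)\big)$. For $\omega$ in a complex neighbourhood $U$ of $\omega_0$ the contour $\Gamma$ stays in the resolvent set and the Riesz projection
$$
P(\omega):=\frac1{2\pi i}\oint_\Gamma\big(z-T(\omega)\big)^{-1}\,\d z
$$
is a holomorphic family of finite-rank projections of constant rank $m$, with $P(t)$ an orthogonal projection for real $t$ (as $T(t)$ is self-adjoint and $\Gamma$ is symmetric about $\R$). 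Following \cite{Kat95}, I would introduce the transformation function $U(\omega)$, i.e. the holomorphic solution of $U'(\omega)=[P'(\omega),P(\omega)]\;\!U(\omega)$, $U(\omega_0)=\mathrm{id}$, which satisfies $U(\omega)\;\!P(\omega_0)\;\!U(\omega)^{-1}=P(\omega)$ and, because $[P'(t),P(t)]$ is anti-self-adjoint for real $t$, is unitary for real $t$. The reduced operator
$$
B(\omega):=U(\omega)^{-1}\;\!P(\omega)\;\!T(\omega)\;\!P(\omega)\;\!U(\omega)\big|_{\Ran P(\omega_0)}
$$
is then a holomorphic family of operators on the fixed finite-dimensional space $\Ran P(\omega_0)$, and $B(t)$ is self-adjoint for real $t$.

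Now the problem is finite-dimensional, and I would invoke the Rellich theorem for Hermitian analytic matrix families: a real-analytic Hermitian-matrix-valued function on a real interval admits a real-analytic enumeration of its repeated eigenvalues together with a real-analytic orthonormal eigenframe. This is the step I expect to be the genuine obstacle: a priori Kato's reduction only produces eigenvalue branches with Puiseux expansions, and the point is that reality of the spectrum for real parameter forces them to be honest Taylor expansions (no branch points); the standard proof inducts on the matrix size via the grouping of eigenvalues, handling the degenerate case $B(t_0)=\lambda\,\mathrm{id}$ by repeatedly factoring out the scalar part $(t-t_0)^k\,\mathrm{id}$. Applying this to $B(\cdot)$ yields, near $\omega_0$, real-analytic eigenvalues and orthonormal eigenvectors of $B(t)$; transporting them back by the unitary $U(t)$ produces real-analytic orthonormal eigenvectors of $T(t)$ with real-analytic eigenvalues, accounting for the $m$ repeated eigenvalues near $\lambda_0$. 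Letting $\lambda_0$ range over $\sigma\big(T(\omega_0)\big)$ exhausts, near $\omega_0$, all repeated eigenvalues of $T(t)$ and, by compactness of the resolvent, a complete orthonormal eigenbasis.

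Finally I would globalize: cover $I_0$ by such local intervals; on each overlap the two local analytic enumerations list the same repeated eigenvalues of $T(t)$, hence differ by a permutation which is locally constant by analyticity, and since $I_0$ is connected these permutations reconcile into a single analytic enumeration $\{\lambda_n\}_n$ on $I_0$, with the eigenvectors $\{u_n\}_n$ patched accordingly (adjusting by constant unitaries on eigenspaces where eigenvalues coincide). This produces the sequences $\lambda_n$ and $u_n$ analytic on all of $I_0$ with the stated properties. Apart from the finite-dimensional Rellich theorem, everything above is routine analytic perturbation theory.
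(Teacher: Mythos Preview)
The paper does not prove this theorem at all: it is quoted verbatim (adapted to the paper's notation) from \cite[Thm.~VII.3.9]{Kat95} and used as a black box. Your sketch is a faithful outline of the standard argument one finds in Kato, so in that sense you are doing more than the paper does.

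One remark on the globalization step, which is the only place your sketch is somewhat loose. Patching the eigenvalue branches across overlaps is indeed fine: two analytic enumerations of the same finite multiset of reals differ by a permutation, and analyticity forces that permutation to be locally constant. Patching the eigenvector frames is more delicate than ``adjusting by constant unitaries'': on an overlap where several $\lambda_n$ coincide identically, the two local frames differ by a \emph{real-analytic} unitary-valued function on the overlap (not a constant), and one must argue that these transition functions form a trivial cocycle. On an interval this is automatic (one can simply propagate a single local frame along $I_0$ by successive extensions, since an interval is contractible), so there is no genuine obstruction, but the phrase ``constant unitaries'' understates what is needed. Apart from this, your outline is correct and matches the classical proof.
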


By applying this theorem to the family $\{\widehat{M_\star}(\omega)\}_{\omega\in\C}$,
we infer the existence of analytic eigenvalue functions
$\lambda_{\star,n}:Y^*_\star\to\R$ and analytic orthonormal eigenvector functions
$u_{\star,n}:Y^*_\star\to\h_\star$. We call band the graph
$\{\big(k,\lambda_{\star,n}(k)\big)\mid k\in Y^*_\star\}$ of the eigenvalue
function $\lambda_{\star,n}$, so that the Bloch variety $\Sigma_\star$ coincides with
the countable union of the bands (see \eqref{def_Bloch_variety}). Since the derivative
$\lambda'_{\star,n}$ of $\lambda_{\star,n}$ exists and is analytic, it is natural to
define the set of thresholds of the operator $M_\star$ as
\begin{equation}\label{eq_thresholds_star}
\Tau_\star
:=\bigcup_{n\in\N}\big\{\lambda\in\R\mid\hbox{$\exists\;\!k\in Y^*_\star$ such that
$\lambda=\lambda_{\star,n}(k)$ and $\lambda'_{\star,n}(k)=0$}\big\},
\end{equation}
and the set of thresholds of both $M_\ell$ and $M_{\rm r}$ as
\begin{equation}\label{eq_thresholds_M}
\Tau_M:=\Tau_\ell\cup\Tau_{\rm r}.
\end{equation}
Proposition \ref{prop_fibered}(b), together with the analyticity of the functions
$\lambda_{\star,n}$, implies that the set $\Tau_\star$ is discrete, with only possible
accumulation point at infinity. Furthermore, \cite[Thm.~XIII.85(e)]{RS4} implies that
the possible eigenvalues of $M_\star$ are contained in $\Tau_\star$. However, these
eigenvalues should be generated by locally (hence globally) flat bands, and one can
show their absence by adapting Thomas' argument \cite[Sec.~II]{Tho73} to our
setup\;\!:

\begin{Proposition}[Spectrum of the asymptotic Hamiltonians]\label{prop_spec_asymp}
The spectrum of $M_\star$ is purely absolutely continuous. In particular,
$$
\sigma(M_\star)=\sigma_{\rm ac}(M_\star)=\sigma_{\rm ess}(M_\star),
$$
with $\sigma_{\rm ac}(M_\star)$ the absolutely continuous spectrum of $M_\star$
and $\sigma_{\rm ess}(M_\star)$ the essential spectrum of $M_\star$.
\end{Proposition}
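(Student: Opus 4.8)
The plan is to prove the absence of flat bands for $\widehat{M_\star}$, which by the preceding discussion (the possible eigenvalues of $M_\star$ lie in $\Tau_\star$ and must be generated by globally flat bands) immediately gives $\sigma(M_\star)=\sigma_{\rm ac}(M_\star)$, and since there is no singular continuous part either (the spectrum being purely band-type by the analytic fibering) we get $\sigma(M_\star)=\sigma_{\rm ac}(M_\star)=\sigma_{\rm ess}(M_\star)$, the last equality holding because a purely absolutely continuous spectrum has no isolated points. So the whole statement reduces to: no eigenvalue function $\lambda_{\star,n}$ is constant on $Y^*_\star$.

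The strategy is Thomas' argument \cite[Sec.~II]{Tho73}, adapted via the complex family $\widehat{M_\star}(\omega)$ introduced in Proposition \ref{prop_fibered}. Suppose, for contradiction, that some band is flat, i.e. there is $\lambda_0\in\R$ with $\lambda_0\in\sigma\big(\widehat{M_\star}(k)\big)$ for all $k\in Y^*_\star$. First I would observe that, since $\big\{\widehat{M_\star}(\omega)\big\}_{\omega\in\C}$ is an analytic family of type (A) with compact resolvent, an eigenvalue branch that is constant on the real interval $Y^*_\star$ persists as a constant branch on all of $\C$ by analytic continuation; hence $\lambda_0\in\sigma\big(\widehat{M_\star}(\omega)\big)$ for every $\omega\in\C$. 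The contradiction will come from letting $\omega=i\eta$ with $\eta\in\R$ and $|\eta|\to\infty$ and showing that $\lambda_0$ cannot stay in the spectrum. Concretely, writing
$$
\widehat{M_\star}(i\eta)-\lambda_0
=\big(w_\star\widehat D(0)-\lambda_0\big)
+i\eta\;\!w_\star\begin{pmatrix}0&1\\1&0\end{pmatrix},
$$
one exploits that $\begin{psmallmatrix}0&1\\1&0\end{psmallmatrix}$ is invertible on $\C^2$ and that $w_\star$ is bounded with bounded inverse, so the leading term for large $|\eta|$ is the invertible multiplication-type operator $i\eta\,w_\star\begin{psmallmatrix}0&1\\1&0\end{psmallmatrix}$ perturbed by the $\eta$-independent operator $w_\star\widehat D(0)-\lambda_0$; since $w_\star\widehat D(0)$ is relatively bounded — in fact, one should factor out $w_\star$ and reduce to estimating $\big(\widehat D(0)+i\eta\begin{psmallmatrix}0&1\\1&0\end{psmallmatrix}-\lambda_0 w_\star^{-1}\big)^{-1}$ — a Neumann-series/resolvent estimate shows this operator is invertible for $|\eta|$ large, i.e. $\lambda_0\in\rho\big(\widehat{M_\star}(i\eta)\big)$, contradicting flatness. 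The precise bookkeeping: on the Fourier side of $\h_\star\cong\ltwo(Y_\star,\C^2)$ the operator $-i\partial_\theta$ has eigenvalues in $\Gamma^*_\star$, and $\widehat D(0)$ acts as $\begin{psmallmatrix}0&-i\partial_\theta\\-i\partial_\theta&0\end{psmallmatrix}$; diagonalizing the constant matrix $\begin{psmallmatrix}0&1\\1&0\end{psmallmatrix}$ one sees that $\widehat D(0)+i\eta\begin{psmallmatrix}0&1\\1&0\end{psmallmatrix}$ has, in each Fourier mode, eigenvalues $\pm(\xi+i\eta)$ with $\xi\in\Gamma^*_\star$, whose moduli are bounded below by $|\eta|$; hence $\big\|(\widehat D(0)+i\eta\begin{psmallmatrix}0&1\\1&0\end{psmallmatrix})^{-1}\big\|\le |\eta|^{-1}$, and absorbing the bounded perturbation $\lambda_0 w_\star^{-1}$ (of norm $\le c_0^{-1}|\lambda_0|$) gives invertibility once $|\eta|>c_0^{-1}|\lambda_0|$.

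The main obstacle, and the step requiring care, is the analytic-continuation argument that turns a band flat on $Y^*_\star$ into a spectral value of $\widehat{M_\star}(\omega)$ for \emph{all} complex $\omega$: a priori Rellich's theorem (Theorem \ref{thm_Rellich}) supplies analytic eigenvalue functions only on the real interval $I_0=Y^*_\star$, and one must argue that a branch that is identically $\lambda_0$ there extends to a (necessarily constant, by the identity theorem) global analytic eigenvalue branch on $\C$ — equivalently, that $\lambda_0\in\sigma\big(\widehat{M_\star}(\omega)\big)$ for all $\omega$, which one can instead get directly from the fact that $\det\big(\widehat{M_\star}(\omega)-\lambda_0\big)$, suitably interpreted through a regularized/Fredholm determinant of the analytic compact-resolvent family, is an entire function of $\omega$ vanishing on $Y^*_\star$ and hence everywhere. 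Once this point is settled, the large-$|\eta|$ resolvent estimate is routine and closes the contradiction, yielding the absence of flat bands and therefore the proposition.
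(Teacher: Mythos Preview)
Your proposal is correct and follows essentially the same route as the paper: Thomas' argument via the analytic family $\widehat{M_\star}(\omega)$, the Fourier-mode estimate $\big\|\widehat D(i\eta)^{-1}\big\|\le C|\eta|^{-1}$ on $\h_\star$, and analytic continuation of a flat band to all of $\C$ to obtain a contradiction. The only noteworthy difference is in how the continuation step is handled: the paper invokes the analytic Fredholm alternative (the set $\{\omega:\lambda_0\in\sigma(\widehat{M_\star}(\omega))\}$ is either discrete or all of $\C$), which is cleaner than the regularized-determinant route you sketch and avoids the technicalities of defining such a determinant for this family; also, the paper proves the slightly stronger estimate $\big\|\widehat{M_\star}(i\rho)^{-1}\big\|\to 0$ and derives the contradiction from the eigenvalue equation $u(i\rho)=c\,\widehat{M_\star}(i\rho)^{-1}u(i\rho)$, whereas you show directly that $\lambda_0\in\rho\big(\widehat{M_\star}(i\eta)\big)$ for large $|\eta|$---both conclusions are equivalent here.
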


\begin{proof}
In view of \cite[Thm.~XIII.86]{RS4}, the claim follows once we prove the absence of
flat bands for $M_\star$. For this purpose, we use the version of the Thomas' argument
as presented in \cite[Sec.~1.3]{Sus00}. Accordingly, we first need to show that, for
$\omega=i\rho$ with $\rho\in\R$ large enough, the operator $\widehat{M_\star}(i\rho)$
is invertible and satisfies
\begin{equation}\label{eq_no_flat}
\lim_{|\rho|\to\infty}\big\|\widehat{M_\star}(i\rho)^{-1}\big\|_{\B(\h_\star)}=0.
\end{equation}
Let us start with the operator
$$
\widehat D(i\rho)=\begin{pmatrix}
0 & -i\partial_\theta+ i\rho\\
-i\partial_\theta+i\rho & 0
\end{pmatrix}
$$
acting on $\h_\star^1\subset\h_\star$. Since the family of functions
$\{e_n^\pm\}_{n\in\Z}$ given by
$$
e_n^+(\theta):=\tfrac1{\sqrt{p_\star}}\e^{2\pi in\theta/p_\star}
\begin{pmatrix}1\\0\end{pmatrix},
\quad e_n^-(\theta):=\tfrac1{\sqrt{p_\star}}\e^{2\pi in\theta/p_\star}
\begin{pmatrix}0\\1\end{pmatrix},
\quad\theta\in Y_\star,
$$
is an orthonormal basis of $\ltwo(Y_\star,\C^2)$, and since $\h_\star$ and
$\ltwo(Y_\star,\C^2)$ have equivalent norms, the family $\{e_n^\pm\}_{n\in\Z}$, with
extended variable $\theta\in\R$, is also a (non-orthogonal) basis for $\h_\star$, and
thus any $\varphi\in\h_\star^1$ can be expanded in $\h_\star$ as
$$
\varphi=\sum_{n\in\Z}\big(\widehat\varphi_n^+e_n^++\widehat\varphi_n^-e_n^-\big)
\quad\hbox{with}\quad
\widehat\varphi_n^\pm:=\langle\varphi,e_n^\pm\rangle_{\ltwo(Y_\star,\C^2)}.
$$
It follows that
\begin{align*}
\big\|\widehat D(\pm i\rho)\varphi\big\|^2_{\h_\star}
&=\left\|\;\!\sum_{n\in\Z}\left(\tfrac{2\pi n}{p_\star}\pm i\rho\right)
\big(\widehat\varphi_n^+e_n^-+\widehat\varphi_n^-e_n^+\big)\right\|^2_{\h_\star}\\
&\ge{\rm Const.}\left\|\sum_{n\in\Z}\left(\tfrac{2\pi n}{p_\star}\pm i\rho\right)
\big(\widehat\varphi_n^+e_n^-
+\widehat\varphi_n^-e_n^+\big)\right\|^2_{\ltwo(Y_\star,\C^2)}\\
&={\rm Const.}\sum_{n\in\Z}\left|\tfrac{2\pi n}{p_\star}\pm i\rho\right|^2
\left(|\widehat\varphi_n^+|^2+|\widehat\varphi_n^-|^2\right)\\
&\ge{\rm Const.}\;\!|\rho|^2\;\!\|\varphi\|^2_{\ltwo(Y_\star,\C^2)}\\
&\ge{\rm Const.}\;\!|\rho|^2\;\!\|\varphi\|^2_{\h_\star}.
\end{align*}
Thus, the operators $\widehat D(\pm i\rho)$ are injective with closed range and
satisfy in $\h_\star$ the relations
$$
\big(\Ran\widehat D(\pm i\rho)\big)^\bot
=\Ker\big(\widehat D(\pm i\rho)^*\big)
=\Ker\big(w_\star\widehat D(\mp i\rho)w_\star^{-1}\big)
=0.
$$
In consequence $\Ran\widehat D(\pm i\rho)=\h_\star$, and the operators
$\widehat D(\pm i\rho)$ are invertible with
$$
\big\|\widehat D(\pm i\rho)^{-1}\big\|_{\B(\h_\star)}\le{\rm Const.}\;\!|\rho|^{-1}.
$$
It follows that $\widehat{M_\star}(i\rho)$ is invertible too, with
$$
\big\|\widehat{M_\star}(i\rho)^{-1}\big\|_{\B(\h_\star)}
=\big\|\widehat D(i\rho)^{-1}w_\star^{-1}\big\|_{\B(\h_\star)}
\le{\rm Const.}\;\!|\rho|^{-1},
$$
which implies \eqref{eq_no_flat}.

Now, let us assume by contradiction that there exists $n\in\N$ such that
$\lambda_{\star,n}(k)$ is equal to a constant $c\in\R$ for all $k\in Y^*_\star$. Then
using the analyticity properties of $\widehat{M_\star}$ (Proposition
\ref{prop_fibered}) in conjunction with the analytic Fredholm alternative, one infers
that $c$ is an eigenvalue of $\widehat{M_\star}(\omega)$ for all $\omega\in\C$.
Letting $u(\omega)$ be the corresponding eigenfunction for
$\widehat{M_\star}(\omega)$, one obtains that
$\widehat{M_\star}(\omega)u(\omega)=cu(\omega)$ for all $\omega\in\C$. Choosing
$\omega=i\rho$ with $\rho\in\R$ and using the fact that $\widehat{M_\star}(i\rho)$ is
invertible, one thus obtains that
$$
u(i\rho)=c\;\!\widehat{M_\star}(i\rho)^{-1}u( i\rho)
\quad\hbox{with}\quad\|u(i\rho)\|_{\h_\star}=1,
$$
which contradicts \eqref{eq_no_flat}.
\end{proof}

\begin{Remark}
The absence of flat bands for the 3-dimensional Maxwell operator has been discussed
in \cite[Sec.~5]{Sus00}. However, the results of \cite{Sus00} do not cover the result
of Proposition \ref{prop_spec_asymp} since the weights considered in \cite{Sus00} are
block-diagonal and smooth while in Proposition \ref{prop_spec_asymp} the weights are
$\linfty$ positive-definite $2\times2$ matrices. Neither diagonality, nor smoothness is
assumed.
\end{Remark}

\section{Mourre theory and spectral results}\label{sec_conjugate}
\setcounter{equation}{0}

\subsection{Commutators}\label{sec_com}

In this section, we recall some definitions appearing in Mourre theory and provide a
precise meaning to the commutators mentioned in the introduction. We refer to
\cite{ABG96,Sah97} for more information and details.

Let $A$ be a self-adjoint operator in a Hilbert space $\H$ with domain $\dom(A)$, and
let $T\in\B(\H)$. For any $k\in\N$, we say that $T$ belongs to $C^k(A)$, with notation
$T\in C^k(A)$, if the map
\begin{equation}\label{eq_group}
\R\ni t\mapsto\e^{-itA}T\e^{itA}\in\B(\H)
\end{equation}
is strongly of class $C^k$. In the case $k=1$, one has $T\in C^1(A)$ if and only if
the quadratic form
$$
\dom(A)\ni\varphi\mapsto\langle\varphi,TA\;\!\varphi\rangle_\H
-\langle A\;\!\varphi,T\varphi\rangle_\H\in\C
$$
is continuous for the norm topology induced by $\H$ on $\dom(A)$. We denote by $[T,A]$
the bounded operator associated with the continuous extension of this form, or
equivalently $-i$ times the strong derivative of the function \eqref{eq_group} at
$t=0$.

If $H$ is a self-adjoint operator in $\H$ with domain $\dom(H)$ and spectrum
$\sigma(H)$, we say that $H$ is of class $C^k(A)$ if $(H-z)^{-1}\in C^k(A)$ for some
$z\in\C\setminus\sigma(H)$. In particular, $H$ is of class $C^1(A)$ if and only if the
quadratic form
$$
\dom(A)\ni\varphi\mapsto
\big\langle\varphi,(H-z)^{-1}A\;\!\varphi\big\rangle_\H
-\big\langle A\;\!\varphi,(H-z)^{-1}\varphi\big\rangle_\H\in\C
$$
extends continuously to a bounded form defined by the operator
$[(H-z)^{-1},A]\in\B(\H)$. In such a case, the set $\dom(H)\cap\dom(A)$ is a
core for $H$ and the quadratic form
$$
\dom(H)\cap\dom(A)\ni\varphi\mapsto\langle H\varphi,A\;\!\varphi\rangle_\H
-\langle A\;\!\varphi,H\varphi\rangle_\H\in\C
$$
is continuous in the natural topology of $\dom(H)$ (\emph{i.e.} the topology of the
graph-norm) \cite[Thm.~6.2.10(b)]{ABG96}. This form then extends uniquely to a
continuous quadratic form on $\dom(H)$ which can be identified with a continuous
operator $[H,A]$ from $\dom(H)$ to the adjoint space $\dom(H)^*$. In addition, one has
the identity
\begin{equation}\label{eq_resolvent}
[(H-z)^{-1},A]=-(H-z)^{-1}[H,A](H-z)^{-1},
\end{equation}
and the following result is verified \cite[Thm.~6.2.15]{ABG96}\;\!: If $H$ is of class
$C^k(A)$ for some $k\in\N$ and $\eta\in\S(\R)$ is a Schwartz function, then
$\eta(H)\in C^k(A)$.

A regularity condition slightly stronger than being of class $C^1(A)$ is defined as
follows\;\!: $H$ is of class $C^{1+\varepsilon}(A)$ for some $\varepsilon\in(0,1)$ if $H$
is of class $C^1(A)$ and if for some $z\in\C\setminus\sigma(H)$
$$
\big\|\e^{-itA}[(H-z)^{-1},A]\e^{itA}-[(H-z)^{-1},A]\big\|_{\B(\H)}
\le{\rm Const.}\;\!t^\varepsilon\quad\hbox{for all $t\in(0,1)$.}
$$
The condition $C^2(A)$ is stronger than $C^{1+\varepsilon}(A)$, which in turn is
stronger than $C^1(A)$.

We now recall the definition of two useful functions introduced in
\cite[Sec.~7.2]{ABG96}. For this, we need the following conventions\;\!: if
$E^H(\;\!\cdot\;\!)$ denotes the spectral projection-valued measure of $H$, then we set
$E^H(\lambda;\varepsilon):=E^H\big((\lambda-\varepsilon,\lambda+\varepsilon)\big)$ for
any $\lambda\in\R$ and $\varepsilon>0$, and if $S,T\in\B(\H)$, then we write
$S\approx T$ if $S-T$ is compact, and $S\lesssim T$ if there exists a compact operator
$K$ such that $S\le T+K$. With these conventions, we define for $H$ of class $C^1(A)$
the function $\varrho^A_H:\R\to(-\infty,\infty]$ by
$$
\varrho^A_H(\lambda)
:=\sup\big\{a\in\R\mid\hbox{$\exists\;\!\varepsilon>0$ such that
$a\;\!E^H(\lambda;\varepsilon)\le E^H(\lambda;\varepsilon)[iH,A]
E^H(\lambda;\varepsilon)$}\big\},
$$
and we define the function $\widetilde\varrho^A_H:\R\to(-\infty,\infty]$ by
$$
\widetilde\varrho^A_H(\lambda)
:=\sup\big\{a\in\R\mid\hbox{$\exists\;\!\varepsilon>0$ such that
$a\;\!E^H(\lambda;\varepsilon)\lesssim E^H(\lambda;\varepsilon)[iH,A]
E^H(\lambda;\varepsilon)$}\big\}.
$$
Note that the following equivalent definition of the function $\widetilde\varrho^A_H$
is often useful\;\!:
\begin{equation}\label{eq_alternate}
\widetilde\varrho^A_H(\lambda)
=\sup\big\{a\in\R\mid\hbox{$\exists\;\!\eta\in C^\infty_{\rm c}(\R,\R)$
such that $\eta(\lambda)\ne0$ and $a\;\!\eta(H)^2\lesssim\eta(H)[iH,A]\eta(H)$}\big\}.
\end{equation}
One says that $A$ is conjugate to $H$ at a point $\lambda\in\R$ if
$\widetilde\varrho^A_H(\lambda)>0$, and that $A$ is strictly conjugate to $H$ at
$\lambda$ if $\varrho^A_H(\lambda)>0$. It is shown in \cite[Prop.~7.2.6]{ABG96} that
the function $\widetilde\varrho^A_H:\R\to(-\infty,\infty]$ is lower semicontinuous,
that $\widetilde\varrho^A_H\ge\varrho^A_H$, and that
$\widetilde\varrho^A_H(\lambda)<\infty$ if and only if $\lambda\in\sigma_{\rm ess}(H)$.
In particular, the set of points where $A$ is
conjugate to $H$,
$$
\widetilde\mu^A(H):=\big\{\lambda\in\R\mid\widetilde\varrho^A_H(\lambda)>0\big\},
$$
is open in $\R$.

The main consequences of the existence of a conjugate operator $A$ for $H$ are given
in the theorem below, which is a particular case of \cite[Thm.~0.1~\&~0.2]{Sah97}. For
its statement, we use the notation $\sigma_{\rm p}(H)$ for the point spectrum of
$H$, and we recall that if $\G$ is an auxiliary Hilbert space, then an operator
$T\in\B(\H,\G)$ is locally $H$-smooth on an open set $I\subset\R$ if for each compact
set $I_0\subset I$ there exists $c_{I_0}\ge0$ such that
\begin{equation}\label{def_H_smooth}
\int_\R\d t\;\!\big\|T\;\!\e^{-itH}E^H(I_0)\varphi\big\|_\G^2
\le c_{I_0}\;\!\|\varphi\|_\H^2\quad\hbox{for each $\varphi\in\H$},
\end{equation}
and $T$ is (globally) $H$-smooth if \eqref{def_H_smooth} is satisfied with $E^H(I_0)$
replaced by the identity $1$.

\begin{Theorem}[Spectrum of $H$]\label{thm_spec_H}
Let $H,A$ be self-adjoint operators in a Hilbert space $\H$, let $\G$ be an
auxiliary Hilbert space, assume that $H$ is of class $C^{1+\varepsilon}(A)$ for some
$\varepsilon\in(0,1)$, and suppose there exist an open set $I\subset\R$, a number
$a>0$ and an operator $K\in\K(\H)$ such that
\begin{equation}\label{eq_Mourre}
E^H(I)\;\![iH,A]\;\!E^H(I)\ge a\;\!E^H(I)+K.
\end{equation}
Then
\begin{enumerate}
\item[(a)] each operator $T\in\B(\H,\G)$ which extends continuously to an element of
$\B\big(\dom(\langle A\rangle^s)^*,\G\big)$ for some $s>1/2$ is locally $H$-smooth on
$I\setminus\sigma_{\rm p}(H)$,
\item[(b)] $H$ has at most finitely many eigenvalues in $I$, each one of finite
multiplicity, and $H$ has no singular continuous spectrum in $I$.
\end{enumerate}
\end{Theorem}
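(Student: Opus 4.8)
Since the statement is nothing but the specialization to a single operator pair $(H,A)$ of the abstract results \cite[Thm.~0.1~\&~0.2]{Sah97}, one legitimate route is simply to verify that their hypotheses are met and invoke them. For completeness let me indicate the mechanism, which is the standard Mourre machinery. The plan is to reduce both conclusions to a \emph{limiting absorption principle} (LAP) on compact subsets of $I\setminus\sigma_{\rm p}(H)$, namely to the bound
\[
\sup\big\{\big\|\langle A\rangle^{-s}(H-z)^{-1}\langle A\rangle^{-s}\big\|_{\B(\H)}\mid\mathrm{Re}\,z\in J,\ 0<|\mathrm{Im}\,z|<1\big\}<\infty
\]
valid for every compact $J\subset I\setminus\sigma_{\rm p}(H)$ and every $s\in(1/2,1]$, together with the existence and local Hölder continuity of the boundary values $(H-\lambda\mp i0)^{-1}\in\B\big(\dom(\langle A\rangle^s),\dom(\langle A\rangle^s)^*\big)$ for $\lambda\in J$. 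Once this is established, (a) and (b) follow quickly.

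First I would dispose of the eigenvalues. Because $H$ is of class $C^1(A)$, the virial theorem holds: $E^H(\{\lambda\})[iH,A]E^H(\{\lambda\})=0$ for every $\lambda\in\R$. Sandwiching \eqref{eq_Mourre}, localized to a small interval about an eigenvalue $\lambda\in I$, between $E^H(\{\lambda\})$ on both sides gives $a\,E^H(\{\lambda\})\le-E^H(\{\lambda\})\,K\,E^H(\{\lambda\})$, so $E^H(\{\lambda\})$ is dominated by a compact operator and therefore has finite rank. Moreover, if some compact $J\subset I$ contained infinitely many eigenvalues, one could pick a sequence of normalized eigenvectors $\varphi_n$ with eigenvalues in $J$ converging weakly to $0$; then $\langle\varphi_n,[iH,A]\varphi_n\rangle_\H=0$ by the virial theorem while $\langle\varphi_n,K\varphi_n\rangle_\H\to0$, contradicting $a\le\langle\varphi_n,[iH,A]\varphi_n\rangle_\H+\langle\varphi_n,K\varphi_n\rangle_\H$ from \eqref{eq_Mourre}. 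Hence the eigenvalues of $H$ in $I$ are isolated, of finite multiplicity, and can accumulate only at $\partial I$; this is already the first half of (b).

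The LAP itself is the technical heart. Fixing $\lambda_0\in I\setminus\sigma_{\rm p}(H)$, one has $E^H(\lambda_0;\varepsilon)\to E^H(\{\lambda_0\})=0$ strongly as $\varepsilon\to0$, so $\big\|E^H(\lambda_0;\varepsilon)\,K\,E^H(\lambda_0;\varepsilon)\big\|\to0$ by compactness of $K$, and \eqref{eq_Mourre} upgrades to a \emph{strict} estimate $E^H(\lambda_0;\varepsilon)[iH,A]E^H(\lambda_0;\varepsilon)\ge\tfrac a2\,E^H(\lambda_0;\varepsilon)$ for $\varepsilon$ small; covering a compact $J\subset I\setminus\sigma_{\rm p}(H)$ by finitely many such intervals yields $E^H(\widetilde J)[iH,A]E^H(\widetilde J)\ge a'E^H(\widetilde J)$ with $a'>0$ on a neighborhood $\widetilde J$ of $J$ (this also follows from lower semicontinuity of $\widetilde\varrho^A_H$ and $\widetilde\varrho^A_H=\varrho^A_H$ off the point spectrum). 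One then runs Mourre's method of differential inequalities as streamlined in \cite[Ch.~7]{ABG96}: for $z=\lambda+i\mu$ with $\lambda\in J$, $0<\mu<1$, and a regularization parameter $\delta>0$ replacing $\langle A\rangle$ by the bounded operator $\langle A\rangle_\delta:=\langle A\rangle(1+\delta\langle A\rangle^2)^{-1/2}$, one estimates $f(\mu):=\big\|\langle A\rangle_\delta^{-s}(H-z)^{-1}\langle A\rangle_\delta^{-s}\big\|$ and derives a differential inequality $|f'(\mu)|\le C\big(1+f(\mu)^2\big)$ uniformly in $\delta$, in which the $C^{1+\varepsilon}(A)$ Hölder bound on $\big[(H-z)^{-1},A\big]$ is exactly what controls the otherwise non-integrable remainder produced by the first-order commutator expansion. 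Integrating from $\mu=1$ downward and letting $\delta\to0$ gives the uniform resolvent bound, and a second iteration of the same estimate gives Hölder continuity of the boundary values. The hard part is precisely this step: closing the differential inequality while keeping every manipulation within bounded operators through the $\delta$-regularization, and genuinely using $C^{1+\varepsilon}$ rather than merely $C^1$.

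Finally, the consequences. For (b), the LAP together with Stone's formula $\tfrac{\d}{\d\lambda}\langle\varphi,E^H(\lambda)\varphi\rangle_\H=\tfrac1\pi\,\mathrm{Im}\,\big\langle\varphi,(H-\lambda-i0)^{-1}\varphi\big\rangle_\H$ shows that $\lambda\mapsto\langle\varphi,E^H(\lambda)\varphi\rangle_\H$ is locally Lipschitz, hence absolutely continuous, on $J$ whenever $\varphi\in\dom(\langle A\rangle^s)$; density of $\dom(\langle A\rangle^s)$ in $\H$ then gives $\sigma_{\rm sc}(H)\cap J=\emptyset$, and since $J$ is an arbitrary compact subset of $I\setminus\sigma_{\rm p}(H)$ while $\sigma_{\rm p}(H)\cap I$ is discrete, $\sigma_{\rm sc}(H)\cap I=\emptyset$; combined with the eigenvalue count above this is (b). For (a), writing such a $T$ as $T=T_0\langle A\rangle^{-s}$ with $T_0\in\B(\H,\G)$ and using Kato's characterization of local smoothness through boundary values of the resolvent, one obtains
\[
\int_\R\d t\,\big\|T\e^{-itH}E^H(J)\varphi\big\|_\G^2\le C\,\Big(\sup_{\lambda\in J,\ 0<|\mu|<1}\big\|\langle A\rangle^{-s}\,\mathrm{Im}\big[(H-\lambda-i\mu)^{-1}\big]\,\langle A\rangle^{-s}\big\|\Big)\,\|\varphi\|_\H^2,
\]
which is finite by the LAP; this is (a).
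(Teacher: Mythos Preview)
Your proposal is correct. The paper itself does not prove this theorem: it is stated as an abstract input, introduced as ``a particular case of \cite[Thm.~0.1~\&~0.2]{Sah97}'', and simply quoted without argument. Your opening sentence --- that one may just invoke Sahbani's theorems after checking the hypotheses --- is therefore exactly the paper's ``proof''. The remainder of your write-up goes beyond what the paper does by sketching the standard Mourre machinery (virial theorem for the eigenvalue count, upgrade to a strict estimate away from $\sigma_{\rm p}(H)$, the differential-inequality route to the LAP using the $C^{1+\varepsilon}(A)$ regularity, and the passage from the LAP to local $H$-smoothness and absence of singular continuous spectrum). This sketch is accurate in outline; one small sharpening: since the Mourre estimate \eqref{eq_Mourre} is assumed on the whole of $I$ (not just locally), your weak-convergence argument for the eigenvectors applies directly with $J=I$, giving finitely many eigenvalues in $I$ rather than merely local finiteness.
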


\subsection{Conjugate operator for the free Hamiltonian}\label{sec_conj_free}

With the definitions of Section \ref{sec_fibering} at hand, we can construct a
conjugate operator for the operator $\widehat{M_\star}$. Our construction follows from
the one given in \cite[Sec.~3]{GN98}, but it is simpler because our base manifold
$Y^*_\star$ is one-dimensional. Indeed, thanks to Theorem \ref{thm_Rellich}, it is
sufficient to construct the conjugate operator band by band.

So, for each $n\in\N$, let
$\widehat\Pi_{\star,n}:=\{\widehat\Pi_{\star,n}(k)\}_{k\in\R}$ and
$\widehat\lambda'_{\star,n}:=\{\widehat\lambda'_{\star,n}(k)\}_{k\in\R}$ be
the bounded decomposable self-adjoint operators in $\H_{\tau,\star}$ defined by
$\tau$-equivariant continuation as in \eqref{eq_cov} and by the relations
$$
\widehat\Pi_{\star,n}(k)\varphi
:=\langle u_{\star,n}(k),\varphi\rangle_{\h_\star}u_{\star,n}(k)
\quad\hbox{and}\quad
\widehat\lambda'_{\star,n}(k)\varphi:=\lambda'_{\star,n}(k)\varphi,
\quad k\in Y^*_\star,~\varphi\in\h_\star.
$$
Set also $\Pi_{\star,n}:=\U^{-1}_\star\widehat\Pi_{\star,n}\U_\star$ and
$\widehat{Q_\star}:=\U_\star Q_\star\U_\star^{-1}$, with $Q_\star$ the operator of
multiplication by the variable in $\H_{w_\star}$
$$
(Q_\star\varphi)(x):=x\;\!\varphi(x),\quad\varphi\in\dom(Q_\star)
:=\big\{\varphi\in\H_{w_\star}\mid\|Q_\star\varphi\|_{\H_{w_\star}}<\infty\big\}.
$$

\begin{Remark}\label{rem_Q}
Since $Q_\star$ commutes with $w_\star^{-1}$, the operator $Q_\star$ is self-adjoint
in $\H_{w_\star}$ and essentially self-adjoint on $\S(\R,\C^2)\subset\H_{w_\star}$.
The definition and the domain of $Q_\star$ are independent of the specific weight
$w_\star^{-1}$ appearing in the scalar product of $\H_{w_\star}$. The insistence on
the label $\star=\ell,{\rm r}$ is only motivated by a notational need that will result
helpful in the next sections.
\end{Remark}

For any compact interval $I\subset\R\setminus\Tau_\star$, we define the finite set
$\N(I):=\big\{n\in\N\mid\lambda_{\star,n}^{-1}(I)\ne\varnothing\big\}$. Finally, we set
\begin{equation}\label{def_D_star}
\D_\star:=\U_\star\S(\R,\C^2)\subset\big\{u\in C^\infty(\R,\h_\star)\mid
\hbox{$u(\;\!\cdot\;\!-\gamma^*)=\tau(\gamma^*)u$ for all
$\gamma^*\in\Gamma^*_\star$}\big\}.
\end{equation}
Then we can define the symmetric operator $\widehat A_{\star,I}$ in $\H_{\tau,\star}$
by
\begin{equation}\label{def_A_star}
\widehat A_{\star,I}u
:=\tfrac12\sum_{n\in\N(I)}\widehat\Pi_{\star,n}
\big(\widehat\lambda'_{\star,n}\;\!\widehat{Q_\star}
+\widehat{Q_\star}\;\!\widehat\lambda'_{\star,n}\big)\widehat\Pi_{\star,n}u,
\quad u\in\D_\star.
\end{equation}

\begin{Theorem}[Mourre estimate for $\widehat{M_\star}$]\label{thm_Mourre_Free}
Let $I\subset\R\setminus\Tau_\star$ be a compact interval. Then
\begin{enumerate}
\item[(a)] the operator $\widehat A_{\star,I}$ is essentially self-adjoint on
$\D_\star$ and on any other core for $\widehat{Q_\star^2}$, with closure denoted by
the same symbol,
\item[(b)] the operator $\widehat{M_\star}$ is of class $C^2(\widehat A_{\star,I})$,
\item[(c)] there exists $c_I>0$ such that
$\varrho_{\widehat{M_\star}}^{\widehat A_{\star,I}}\ge c_I$.
\end{enumerate}
\end{Theorem}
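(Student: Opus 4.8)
The plan is to establish parts (a), (b), (c) in order, working throughout in the Bloch-Floquet representation where $\widehat{M_\star}$ is the direct integral of the fiber operators $\widehat{M_\star}(k)$ and everything decomposes along the finitely many bands indexed by $\N(I)$. The key structural observation is that $\widehat A_{\star,I}$ is, up to the symmetrization, a sum over $n\in\N(I)$ of operators of the form $\widehat\Pi_{\star,n}\,\widehat\lambda'_{\star,n}\,\widehat{Q_\star}\,\widehat\Pi_{\star,n}$; since $\widehat\Pi_{\star,n}(k)$ and $\lambda'_{\star,n}(k)$ are bounded and (by Rellich's theorem, Theorem \ref{thm_Rellich}) real-analytic in $k$, and $\widehat{Q_\star}$ in the Bloch representation acts essentially as $i\partial_k$ (plus bounded terms from the $\theta$-dependence of the pseudo-periodicity), the operator $\widehat A_{\star,I}$ is a first-order differential operator in $k$ with smooth bounded coefficients, acting inside the finite-rank (in the band index) subspace $\bigoplus_{n\in\N(I)}\Ran\widehat\Pi_{\star,n}$.

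\emph{Essential self-adjointness (a).} First I would compute, on $\D_\star$, the commutator structure showing $\widehat A_{\star,I}$ is relatively bounded with respect to $\widehat{Q_\star^2}$: expanding the symmetrized product, one gets $\widehat A_{\star,I} = \sum_{n\in\N(I)}\widehat\Pi_{\star,n}\widehat\lambda'_{\star,n}\widehat{Q_\star}\widehat\Pi_{\star,n} + (\text{bounded})$, and since each $\widehat\Pi_{\star,n}\widehat\lambda'_{\star,n}$ is bounded and smooth, $\widehat A_{\star,I}$ is bounded relative to $\widehat{Q_\star}$, hence relative to $\widehat{Q_\star^2}$. Because $\widehat{Q_\star^2}$ is essentially self-adjoint on $\D_\star=\U_\star\S(\R,\C^2)$ (as $Q_\star^2$ is essentially self-adjoint on $\S(\R,\C^2)$ by Remark \ref{rem_Q}), I would invoke Nelson's commutator theorem, or more directly the fact that a symmetric operator which is $\widehat{Q_\star^2}$-bounded with relative bound controlled and leaves a core invariant is essentially self-adjoint on any core for $\widehat{Q_\star^2}$; the precise vehicle is \cite[Thm.~X.37 or the Faris--Lavine / Nelson argument]{RS4} after checking the double-commutator bound $\pm i[\widehat A_{\star,I},\widehat{Q_\star^2}]\le c(\widehat{Q_\star^2}+1)$, which again reduces to smoothness of the band data. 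This is the place where the one-dimensionality of $Y^*_\star$ genuinely simplifies things compared to \cite[Sec.~3]{GN98}: there is no issue of constructing a vector field on a higher-dimensional manifold, only a scalar multiplication operator times $i\partial_k$.

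\emph{Regularity (b) and the Mourre estimate (c).} For (b), I would verify $\widehat{M_\star}\in C^2(\widehat A_{\star,I})$ by computing the commutators fiberwise: since $\widehat{M_\star}$ is decomposable with fibers $\widehat{M_\star}(k)$ and $\widehat A_{\star,I}$ differentiates in $k$, the commutator $[i\widehat{M_\star},\widehat A_{\star,I}]$ on each band is governed by $\partial_k$ of the band data, i.e.\ by $\widehat\Pi_{\star,n}\,\widehat\lambda'_{\star,n}\,(\partial_k\widehat{M_\star}(k))\,\widehat\Pi_{\star,n}$ together with terms involving $\partial_k\widehat\Pi_{\star,n}$ and $\partial_k\lambda'_{\star,n}=\lambda''_{\star,n}$; all of these are bounded and smooth on the compact set $I$ by Theorem \ref{thm_Rellich}, and iterating once more gives the $C^2$ bound. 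The crucial algebraic identity is that on $\Ran\widehat\Pi_{\star,n}$ one has $\widehat\Pi_{\star,n}\partial_k\widehat{M_\star}(k)\widehat\Pi_{\star,n}=\lambda'_{\star,n}(k)\widehat\Pi_{\star,n}$ (the Feynman--Hellmann relation, valid because $u_{\star,n}(k)$ is a normalized eigenvector of the self-adjoint $\widehat{M_\star}(k)$ — careful here with the $w_\star^{-1}$-weighted inner product, but the identity survives since $\widehat{M_\star}(k)$ is self-adjoint precisely in $\h_\star$). Feeding this into the computation of $[i\widehat{M_\star},\widehat A_{\star,I}]$ yields, on each band, a leading term $\widehat\Pi_{\star,n}\,\lambda'_{\star,n}(k)^2\,\widehat\Pi_{\star,n}$; since $I\subset\R\setminus\Tau_\star$, by definition \eqref{eq_thresholds_star} we have $\lambda'_{\star,n}(k)\ne0$ whenever $\lambda_{\star,n}(k)\in I$, and by compactness of $\lambda_{\star,n}^{-1}(I)$ there is a uniform lower bound $\lambda'_{\star,n}(k)^2\ge c_I>0$ there. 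Localizing with $E^{\widehat{M_\star}}(I)$ collapses the sum over $n\in\N(I)$ onto exactly the relevant spectral subspace and produces $E^{\widehat{M_\star}}(I)[i\widehat{M_\star},\widehat A_{\star,I}]E^{\widehat{M_\star}}(I)\ge c_I E^{\widehat{M_\star}}(I)$, i.e.\ the strict Mourre estimate $\varrho_{\widehat{M_\star}}^{\widehat A_{\star,I}}\ge c_I$ with no compact remainder needed.

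\emph{Main obstacle.} I expect the genuinely delicate point to be part (a): making rigorous the claim that $\widehat A_{\star,I}$, defined by the formal expression \eqref{def_A_star} on $\D_\star$, is essentially self-adjoint on \emph{every} core for $\widehat{Q_\star^2}$ — not just on $\D_\star$ itself. This requires a clean relative-boundedness estimate of $\widehat A_{\star,I}$ and of $[\widehat A_{\star,I},\widehat{Q_\star^2}]$ against $\widehat{Q_\star^2}$, and one must be attentive to the fact that in the $\H_{\tau,\star}$ representation $\widehat{Q_\star}$ is not simply $i\partial_k$ but $i\partial_k$ conjugated appropriately through the $\tau$-equivariance, contributing an extra bounded multiplication by $\theta$ restricted to the fundamental cell — this is harmless for the estimates but must be tracked. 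Once the relative bounds are in hand, the Faris--Lavine/Nelson commutator theorem delivers (a) cleanly, and (b)--(c) are then essentially bookkeeping with the real-analytic band functions supplied by Theorem \ref{thm_Rellich}.
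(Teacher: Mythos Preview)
Your proposal is correct and follows essentially the same route as the paper: Nelson's commutator theorem with reference operator $N_\star=\widehat{Q_\star^2}+1$ for (a), a direct commutator computation yielding the exact identity $[i\widehat{M_\star},\widehat A_{\star,I}]=\sum_{n\in\N(I)}\widehat\Pi_{\star,n}\,|\widehat\lambda'_{\star,n}|^2\,\widehat\Pi_{\star,n}$ for (b), and the compactness bound $|\lambda'_{\star,n}(k)|^2\ge c_I>0$ on $\lambda_{\star,n}^{-1}(I)$ for (c). Two minor points: your worry about extra $\theta$-dependent terms in $\widehat{Q_\star}$ is unfounded (in $\H_{\tau,\star}$ one has $\widehat{Q_\star}=i\partial_k$ exactly, as a short computation with the Bloch--Floquet kernel shows), and the paper reaches the commutator formula above with no remainder terms by using $\widehat{M_\star}\widehat\Pi_{\star,n}=\widehat\lambda_{\star,n}\widehat\Pi_{\star,n}$ directly --- so the whole computation reduces to $[\widehat\lambda_{\star,n},\widehat{Q_\star}]$ --- rather than the Feynman--Hellmann detour through $\partial_k\widehat{M_\star}(k)$ and $\partial_k\widehat\Pi_{\star,n}$.
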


\begin{proof}
(a) The claim is a consequence of Nelson's criterion of self-adjointness
\cite[Thm.~X.37]{RS2} applied to the triple $(\widehat A_{\star,I},N_\star,\D_\star)$,
where $N_\star:=\widehat{Q_\star^2}+1$ and
$\widehat{Q_\star^2}:=\U_\star Q_\star^2\U_\star^{-1}$. Indeed, the operator $N_\star$
is essentially self-adjoint on $\D_\star=\U_\star\S(\R,\C^2)$ since $Q_\star^2$ is
essentially self-adjoint on $\S(\R,\C^2)$. In addition, since $\widehat A_{\star,I}$
is composed of the bounded operators $\widehat\Pi_{\star,n}$ and
$\widehat\lambda'_{\star,n}$ which are analytic in the variable $k\in Y_\star^*$ and
$\widehat{Q_\star}$ acts as $i\partial_k$ in $\H_{\tau,\star}$, a direct computation
gives
$$
\big\|\widehat A_{\star,I}u\big\|_{\H_{\tau,\star}}
\le{\rm Const.}\;\!\|N_\star u\|_{\H_{\tau,\star}},\quad u\in\D_\star.
$$
Similarly, a direct computation using the boundedness and the analyticity of
$\widehat\Pi_{\star,n}$ and $\widehat\lambda'_{\star,n}$ implies that
$$
\big|\langle\widehat A_{\star,I}u,N_\star u\rangle_{\H_{\tau,\star}}
-\langle N_\star u,\widehat A_{\star,I}u\rangle_{\H_{\tau,\star}}\big|
\le{\rm Const.}\;\!\langle N_\star u,u\rangle_{\H_{\tau,\star}},\quad u\in\D_\star.
$$
In both inequalities, we used the fact that
$\dom(\widehat{Q_\star^2})\subset\dom(\widehat{Q_\star})$. As a consequence,
$\widehat A_{\star,I}$ is essentially self-adjoint on $\D_\star$ and on any other core
for $N_\star$.

(b) The set
$$
\E_\star:=\big\{ u\in C^\infty(\R,\h_\star)\mid\hbox{$u(\;\!\cdot\;\!-\gamma^*)
=\tau(\gamma^*)u$ for all $\gamma^*\in\Gamma^*_\star$}\big\}\supset\D_\star
$$
is a core for $N_\star$. So, it follows from point (a) that $\widehat A_{\star,I}$ is
essentially self-adjoint on $\E_\star$. Moreover, since $\widehat{M_\star}(k)$ is
analytic in $k\in\R$ and satisfies the covariance relation \eqref{eq_cov}, we obtain
that $(\widehat{M_\star}-z)\E_\star\subset\E_\star$ for any
$z\in\C\setminus\sigma(\widehat{M_\star})$. Since the same argument applies to the
resolvent, we obtain that $(\widehat{M_\star}-z)^{-1}\E_\star=\E_\star$. 
Therefore, we have the inclusion $(\widehat{M_\star}-z)^{-1}u\in\dom(\widehat A_{\star,I})$ 
for each $u\in\E_\star$, and a calculation using \eqref{eq_resolvent} gives
\begin{align*}
\big\langle u,[i(\widehat{M_\star}-z)^{-1},\widehat A_{\star,I}]u
\big\rangle_{\H_{\tau,\star}}
&=\big\langle u,-(\widehat{M_\star}-z)^{-1}
[i\widehat{M_\star},\widehat A_{\star,I}]
(\widehat{M_\star}-z)^{-1}u\big\rangle_{\H_{\tau,\star}}\\
&=\bigg\langle u,-(\widehat{M_\star}-z)^{-1}\sum_{n\in\N(I)}
\widehat\Pi_{\star,n}\;\!\big|\widehat\lambda'_{\star,n}\big|^2\;\!
\widehat\Pi_{\star,n}(\widehat{M_\star}-z)^{-1}u
\bigg\rangle_{\H_{\tau,\star}}.
\end{align*}
Since
$
\sum_{n\in\N(I)}\widehat\Pi_{\star,n}\;\!\big|\widehat\lambda'_{\star,n}\big|^2
\;\!\widehat\Pi_{\star,n}\in\B(\H_{\tau,\star})
$,
it follows that $\widehat{M_\star}$ is of class $C^1(\widehat A_{\star,I})$ with
\begin{equation}\label{eq_first_com}
[i\widehat{M_\star},\widehat A_{\star,I}]
=\sum_{n\in\N(I)}\widehat\Pi_{\star,n}\;\!\big|\widehat\lambda'_{\star,n}\big|^2
\;\!\widehat\Pi_{\star,n}.
\end{equation}
Finally, since $\widehat\Pi_{\star,n}\in C^1(\widehat A_{\star,I})$ and
$\widehat\lambda'_{\star,n}\in C^1(\widehat A_{\star,I})$ for each $n\in\N$, we infer
from \eqref{eq_first_com} and \cite[Prop.~5.1.5]{ABG96} that $\widehat{M_\star}$ is of
class $C^2(\widehat A_{\star,I})$.

(c) Using point (b) and the definition of the operators $\widehat\Pi_{\star,n}$, we
obtain for all $\eta\in C^\infty_{\rm c}(I,\R)$ and $k\in Y^*_\star$ that
\begin{align*}
&\eta\big(\widehat{M_\star}(k)\big)
[i\widehat{M_\star},\widehat A_{\star,I}](k)\;\!
\eta\big(\widehat{M_\star}(k)\big)\\
&=\eta\big(\widehat{M_\star}(k)\big)\Bigg(\sum_{n\in\N(I)}\widehat\Pi_{\star,n}(k)
\big|\widehat\lambda'_{\star,n}(k)\big|^2\;\!\widehat\Pi_{\star,n}(k)\Bigg)
\eta\big(\widehat{M_\star}(k)\big)\\
&\ge c_I\;\!\eta\big(\widehat{M_\star}(k)\big)\Bigg(\sum_{n\in\N(I)}
\widehat\Pi_{\star,n}(k)^2\Bigg)\eta\big(\widehat{M_\star}(k)\big)\\
&=c_I\;\!\eta\big(\widehat{M_\star}(k)\big)^2
\end{align*}
with
$
c_I:=\min_{n\in\N(I)}\min_{\{k\in Y^*_\star\mid\lambda_{\star,n}(k)\in I\}}
\big|\lambda'_n(k)\big|^2
$.
Thus, by using the definition of the scalar product in $\H_{\tau,\star}$, we infer
that
$$
\eta\big(\widehat{M_\star}\big)[i\widehat{M_\star},\widehat A_{\star,I}]
\eta\big(\widehat{M_\star}\big)\\
\ge c_I\;\!\eta\big(\widehat{M_\star}\big)^2,
$$
which, together with the definition \eqref{eq_alternate}, implies the claim.
\end{proof}

Since the operator $\widehat A_{\star,I}$ is essentially self-adjoint on
$\D_\star=\U_\star\S(\R,\C^2)$ and on any other core for $\widehat{Q_\star^2}$,
it follows by Theorem \ref{thm_Mourre_Free}(a) that the inverse Bloch-Floquet
transform of $\widehat A_{\star,I}$,
$$
A_{\star,I}:=\U^{-1}_\star\widehat A_{\star,I}\U_\star,
$$
is essentially self-adjoint on $\S(\R,\C^2)$ and on any other core for $Q_\star^2$.
Therefore, the results (b) and (c) of Theorem \ref{thm_Mourre_Free} can be restated as
follows\;\!: the operator $M_\star$ is of class $C^2(A_{\star,I})$ and there exists
$c_I>0$ such that $\varrho_{M_\star}^{A_{\star,I}}\ge c_I$. Combining these results
for $\star=\ell$ and $\star={\rm r}$, one obtains a conjugate operator for the free
Hamiltonian $M_0=M_\ell\oplus M_{\rm r}$ introduced in Section \ref{sec_free}. Namely,
for any compact interval $I\subset\R\setminus\Tau_M$, the operator
$$
A_{0,I}:=A_{\ell,I}\oplus A_{{\rm r},I}
$$
satisfies the following properties\;\!:
\begin{enumerate}
\item[(a')] the operator $A_{0,I}$ is essentially self-adjoint on
$\S(\R,\C^2)\oplus\S(\R,\C^2)$ and on any set $\E\oplus\E$ with $\E$ a core for
$Q_\star^2$, with closure denoted by the same symbol,
\item[(b')] the operator $M_0$ is of class $C^2(A_{0,I})$,
\item[(c')] there exists $c_I>0$ such that $\varrho_{M_0}^{A_{0,I}}\ge c_I$.
\end{enumerate}

\begin{Remark}\label{rem_ess_M_0}
What precedes implies in particular that the free Hamiltonian $M_0$ has purely
absolutely continuous spectrum except at the points of $\Tau_M$, where it may have
eigenvalues. However, we already know from Proposition \ref{prop_spec_asymp} that this
does not occur. Therefore,
$$
\sigma(M_0)
=\sigma_{\rm ac}(M_0)
=\sigma_{\rm ess} (M_0)
=\sigma_{\rm ess}(M_\ell)\cup\sigma_{\rm ess}(M_{\rm r}).
$$
\end{Remark}

\subsection{Conjugate operator for the full Hamiltonian}\label{sec_conj_full}

In this section, we show that the operator $JA_{0,I}J^*$ is a conjugate operator for
the full Hamiltonian $M$ introduced in Section \ref{sec_full}. We start with the proof
of the essential self-adjointness of $JA_{0,I}J^*$ in $\H_w$. We use the notation $Q$
(see Remark \ref{rem_Q}) for the operator of multiplication by the variable in $\H_w$,
$$
(Q\varphi)(x):=x\;\!\varphi(x),\quad\varphi\in\dom(Q)
:=\big\{\varphi\in\H_w\mid\|Q\varphi\|_{\H_w}<\infty\big\}.
$$

\begin{Proposition}\label{prop_A_esa}
For each compact interval $I\subset\R\setminus\Tau_M$ the operator $A_I:=JA_{0,I}J^*$
is essentially self-adjoint on $\S(\R,\C^2)$ and on any other core for $Q^2$, with
closure denoted by the same symbol.
\end{Proposition}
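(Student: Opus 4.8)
The plan is to invoke Nelson's commutator theorem \cite[Thm.~X.37]{RS2}, with comparison operator $N:=Q^2+1$ acting in $\H_w$. Since $Q$ commutes with $w^{-1}$, the operator $Q^2$ (hence $N\ge1$) is self-adjoint and essentially self-adjoint on $\S(\R,\C^2)$, and, the norm of $\H_w$ being equivalent to the usual norm of $\ltwo(\R,\C^2)$, the cores for $N$ are exactly the cores for $Q^2$. The whole argument then rests on the elementary observation that $J$ and $J^*$ are built out of the multiplication operators $j_\ell,j_{\rm r},w_\ell w^{-1},w_{\rm r}w^{-1}$, all of which commute with multiplication by $x$ and by $x^2$. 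Writing $Q_0:=Q_\ell\oplus Q_{\rm r}$ and $N_0:=Q_0^2+1$ in $\H_0$, one gets on the natural domains the intertwining relations $JQ_0=QJ$, $JQ_0^2=Q^2J$, $Q_0J^*=J^*Q$ and $Q_0^2J^*=J^*Q^2$; in particular $J^*$ maps $\dom(Q^2)=\dom(N)$ into $\dom(Q_0^2)=\dom(N_0)$.

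The next step is to extend the two estimates proved in Theorem~\ref{thm_Mourre_Free}(a) from the core $\D_\star=\U_\star\S(\R,\C^2)$ to the full domain $\dom(Q_\star^2)$. On $\D_\star$ one has $\|A_{\star,I}u\|_{\H_{w_\star}}\le{\rm Const.}\,\|(Q_\star^2+1)u\|_{\H_{w_\star}}$ together with $\big|\langle A_{\star,I}u,(Q_\star^2+1)u\rangle_{\H_{w_\star}}-\langle (Q_\star^2+1)u,A_{\star,I}u\rangle_{\H_{w_\star}}\big|\le{\rm Const.}\,\langle (Q_\star^2+1)u,u\rangle_{\H_{w_\star}}$; since $\D_\star$ is also a core for $Q_\star^2$ and (by Theorem~\ref{thm_Mourre_Free}(a)) the closure $A_{\star,I}$ is self-adjoint, a routine approximation argument gives $\dom(Q_\star^2)\subset\dom(A_{\star,I})$ and shows that both estimates remain valid for every element of $\dom(Q_\star^2)$. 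Taking the orthogonal sum over $\star\in\{\ell,{\rm r}\}$ (note $I\subset\R\setminus\Tau_M=\R\setminus(\Tau_\ell\cup\Tau_{\rm r})$), one obtains $\dom(N_0)\subset\dom(A_{0,I})$, $\|A_{0,I}\psi\|_{\H_0}\le{\rm Const.}\,\|N_0\psi\|_{\H_0}$ and $\big|\langle A_{0,I}\psi,N_0\psi\rangle_{\H_0}-\langle N_0\psi,A_{0,I}\psi\rangle_{\H_0}\big|\le{\rm Const.}\,\langle N_0\psi,\psi\rangle_{\H_0}$ for all $\psi\in\dom(N_0)$. This is precisely the point where the $\linfty$ (rather than smooth) regularity of the Maxwell weights enters: $J^*\varphi$ need not be Schwartz even when $\varphi\in\S(\R,\C^2)$, so one cannot stay on the convenient core $\D_\star$ and must work on $\dom(Q_\star^2)$ instead. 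I expect this to be the only real technical obstacle.

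It then remains to transport these bounds through $J^*$ and $J$. For $\varphi\in\dom(N)$ one has $J^*\varphi\in\dom(N_0)\subset\dom(A_{0,I})$, so $A_I\varphi=JA_{0,I}J^*\varphi$ is well defined with $\dom(N)\subset\dom(A_I)$, and $A_I$ is symmetric on $\dom(N)$ (because $A_{0,I}$ is self-adjoint and $J^*$ is the adjoint of $J$). Using $\|J\|,\|J^*\|<\infty$ and the intertwining relations $N_0J^*\varphi=J^*N\varphi$ and $Q_0J^*\varphi=J^*Q\varphi$, one gets $\|A_I\varphi\|_{\H_w}\le\|J\|\,\|A_{0,I}J^*\varphi\|_{\H_0}\le{\rm Const.}\,\|J^*N\varphi\|_{\H_0}\le{\rm Const.}\,\|N\varphi\|_{\H_w}$, while with $\psi:=J^*\varphi$
$$
\langle A_I\varphi,N\varphi\rangle_{\H_w}-\langle N\varphi,A_I\varphi\rangle_{\H_w}
=\langle A_{0,I}\psi,N_0\psi\rangle_{\H_0}-\langle N_0\psi,A_{0,I}\psi\rangle_{\H_0}
$$
has modulus at most ${\rm Const.}\,\langle N_0\psi,\psi\rangle_{\H_0}={\rm Const.}\,\big(\|J^*Q\varphi\|_{\H_0}^2+\|J^*\varphi\|_{\H_0}^2\big)\le{\rm Const.}\,\langle N\varphi,\varphi\rangle_{\H_w}$. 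Nelson's commutator theorem now yields essential self-adjointness of $A_I$ on any core for $N$, i.e. on $\S(\R,\C^2)$ and on any other core for $Q^2$, which is the claim. The remaining work is pure bookkeeping with the three equivalent norms of $\H_w,\H_{w_\ell},\H_{w_{\rm r}}$ and the boundedness of $J$ and $J^*$.
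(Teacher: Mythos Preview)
Your proof is correct and takes a genuinely different route from the paper's. Both arguments apply Nelson's commutator theorem with the same comparison operator $N=Q^2+1$, but they verify the two Nelson estimates in different ways.

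The paper works directly on $\S(\R,\C^2)$ and expands $A_I=JA_{0,I}J^*$ in terms of its building blocks $j_\star$, $w_\star w^{-1}j_\star$, $\Pi_{\star,n}$, $\U_\star^{-1}\widehat\lambda'_{\star,n}\U_\star$ and $Q_\star$. It then checks by hand that $\Pi_{\star,n}$ and $\U_\star^{-1}\widehat\lambda'_{\star,n}\U_\star$ are of class $C^1(Q_\star)$, and that the resulting first commutators are again of class $C^1(Q)$; from this the bounds $\|A_I\varphi\|_{\H_w}\le{\rm Const.}\,\|N\varphi\|_{\H_w}$ and $|\langle A_I\varphi,N\varphi\rangle-\langle N\varphi,A_I\varphi\rangle|\le{\rm Const.}\,\langle N\varphi,\varphi\rangle$ are assembled directly. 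In other words, the paper essentially redoes the commutator calculus of Theorem~\ref{thm_Mourre_Free}(a) one level up.

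Your argument avoids this repetition by exploiting the structural fact that $J$ and $J^*$ are multiplication operators, hence intertwine $Q$ with $Q_0$ and $N$ with $N_0$. This lets you pull the Nelson estimates for $A_I$ back to the already-established Nelson estimates for $A_{0,I}$ via $\langle A_I\varphi,N\varphi\rangle_{\H_w}-\langle N\varphi,A_I\varphi\rangle_{\H_w}=\langle A_{0,I}\psi,N_0\psi\rangle_{\H_0}-\langle N_0\psi,A_{0,I}\psi\rangle_{\H_0}$ with $\psi=J^*\varphi$. The price is the small detour you correctly identify: because $w,w_\ell,w_{\rm r}$ are only $\linfty$, $J^*\S(\R,\C^2)\not\subset\S(\R,\C^2)\oplus\S(\R,\C^2)$, so you must first extend the estimates of Theorem~\ref{thm_Mourre_Free}(a) from the core $\D_\star$ to all of $\dom(Q_\star^2)$ by closure/approximation. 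That extension is indeed routine, and once done your approach is shorter and more conceptual than the paper's, recycling the work already invested in $A_{0,I}$ rather than repeating commutator computations.
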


\begin{proof}
First, we observe that since $J^*\S(\R,\C^2)\subset\dom(Q_0^2)$ with
$Q_0:=Q_\ell\oplus Q_{\rm r}$ the operator $A_I$ is well-defined and symmetric on
$\S(\R,\C^2)\subset\H_w$ due to point (a') above. Next, to prove the claim, we use
Nelson's criterion of essential self-adjointness \cite[Thm.~X.37]{RS2} applied to the
triple $\big(A_I,N,\S(\R,\C^2)\big)$ with $N:=Q^2+1$.

For this, we note that $\S(\R,\C^2)$ is a core for $N$ and that the operators
$\frac{Q_\star}{Q^2+1}$, $j_\star$, $w_\star w^{-1}j_\star$, $\Pi_{\star,n}$ and
$\U^{-1}_\star\widehat\lambda'_{\star,n}\U_\star$ are bounded in $\H_w$. Moreover, we
verify with direct calculations on $\S(\R,\C^2)$ that the operators $\Pi_{\star,n}$
and $\U^{-1}_\star\widehat\lambda'_{\star,n}\U_\star$ belong to $C^1(Q_\star)$ (in
$\H_{w_\star}$), and that their commutators $[\Pi_{\star,n},Q_\star]$ and
$[\U^{-1}_\star\widehat\lambda'_{\star,n}\U_\star,Q_\star]$ belong to $C^1(Q)$ (in
$\H_w$). Then a short computation using these properties gives the bound
$$
\|A_I\varphi\|_{\H_w}\le{\rm Const.}\;\!\|N\varphi\|_{\H_w},\quad\varphi\in\S(\R,\C^2),
$$
and a slightly longer computation using the same properties shows that
$$
\big|\langle A_I\varphi,N\varphi\rangle_{\H_w}
-\langle N\varphi,A_{I}\varphi\rangle_{\H_w}\big|
\le{\rm Const.}\;\!\langle N\varphi,\varphi\rangle_{\H_w},\quad\varphi\in\S(\R,\C^2).
$$
Thus, the hypotheses of Nelson's criterion are satisfied, and the claim follows.
\end{proof}

In order to prove that $A_I$ is a conjugate operator for $M$, we need two preliminary
lemmas. They involve the two-Hilbert spaces difference of resolvents of $M_0$ and $M:$
$$
B(z):=J\left(M_0-z\right)^{-1}-(M-z)^{-1}J,\quad z\in\C\setminus\R.
$$

\begin{Lemma}\label{lemma_compact_B}
For each $z\in\C\setminus\R$, one has the inclusion $B(z)\in\K(\H_0,\H_w)$.
\end{Lemma}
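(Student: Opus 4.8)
The plan is to write $B(z)$ as a sum of terms, each of which is manifestly compact because it contains a factor that is itself compact, typically a resolvent of a second-order-type operator composed with a cutoff that localizes to a neighbourhood of the junction (or, more precisely, a commutator of $J$ with the first-order operator $D$, which is multiplication by derivatives of the cutoff functions $j_\ell,j_{\rm r}$ and hence compactly supported). The starting point is the algebraic identity obtained by inserting $M_0$ and $M$ between the two resolvents. Writing $J$ componentwise as $J(\varphi_\ell,\varphi_{\rm r})=j_\ell\varphi_\ell+j_{\rm r}\varphi_{\rm r}$ and recalling $M_\star=w_\star D$, $M=wD$, one computes
\begin{equation*}
(M-z)B(z)=(M-z)J(M_0-z)^{-1}-J=\big((M-z)J-J(M_0-z)\big)(M_0-z)^{-1}.
\end{equation*}
So the core object is the operator $(M-z)J-J(M_0-z):\dom(M_0)\to\H_w$, and then $B(z)=(M-z)^{-1}\big((M-z)J-J(M_0-z)\big)(M_0-z)^{-1}$. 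Since $(M-z)^{-1}\in\B(\H_w)$, it suffices to show that $(M-z)J-J(M_0-z)$, as an operator from $\dom(M_0)$ into $\H_w$, factors through a compact operator — i.e.\ that it extends to an element of $\K\big(\H_0,\H_w\big)$ after composition with $(M_0-z)^{-1}$ on the right.

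The key step is to identify $(M-z)J-J(M_0-z)$ explicitly. On a vector $(\varphi_\ell,\varphi_{\rm r})\in\dom(M_0)=\H^1(\R,\C^2)^{\oplus2}$ one has $MJ(\varphi_\ell,\varphi_{\rm r})=wD(j_\ell\varphi_\ell+j_{\rm r}\varphi_{\rm r})$ and $JM_0(\varphi_\ell,\varphi_{\rm r})=j_\ell w_\ell D\varphi_\ell+j_{\rm r}w_{\rm r}D\varphi_{\rm r}$. Using the Leibniz rule $D(j_\star\varphi_\star)=j_\star D\varphi_\star+[D,j_\star]\varphi_\star$, where $[D,j_\star]$ is the bounded multiplication operator by $-i$ times the matrix $\big(\begin{smallmatrix}0&j_\star'\\ j_\star'&0\end{smallmatrix}\big)$ supported in the compact set $\supp(j_\star')\subset[-1,-1/2]\cup[1/2,1]$, one gets
\begin{equation*}
MJ(\varphi_\ell,\varphi_{\rm r})-JM_0(\varphi_\ell,\varphi_{\rm r})
=\sum_{\star}(w-w_\star)j_\star D\varphi_\star+\sum_\star w[D,j_\star]\varphi_\star.
\end{equation*}
The $z$-dependent terms cancel since $zJ-Jz=0$, so $(M-z)J-J(M_0-z)$ equals exactly the right-hand side above. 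Hence
\begin{equation*}
B(z)=(M-z)^{-1}\Big(\sum_\star(w-w_\star)j_\star D(M_\star-z)^{-1}
+\sum_\star w[D,j_\star](M_\star-z)^{-1}\Big),
\end{equation*}
where I have used that $D(M_\star-z)^{-1}=w_\star^{-1}M_\star(M_\star-z)^{-1}=w_\star^{-1}\big(1+z(M_\star-z)^{-1}\big)$ is bounded, so each summand is a well-defined bounded operator. It remains to argue compactness of each summand. For the second family of terms, $w[D,j_\star]$ is multiplication by a bounded compactly supported matrix function, and composed with the bounded operator $(M_\star-z)^{-1}$ — which maps into $\H^1$ — it becomes compact by Rellich's compactness theorem (a compactly supported $\H^1$ function lies in a compact subset of $\ltwo$); the bounded factor $(M-z)^{-1}$ on the left preserves compactness. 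For the first family, one uses Assumption \ref{ass_weight}: $j_\ell(w-w_\ell)$ (and likewise on the right) is multiplication by a matrix function of norm $\le\mathrm{Const.}\,\langle x\rangle^{-1-\varepsilon}$, hence decaying; composing the decaying multiplier with $D(M_\star-z)^{-1}=w_\star^{-1}\big(1+z(M_\star-z)^{-1}\big)$, one writes $j_\star(w-w_\star)w_\star^{-1}$ as a norm-limit of compactly supported (hence, as above, compact after multiplication by $(M_\star-z)^{-1}$ mapping into $\H^1$… but here there is no smoothing) multipliers — so one needs the standard fact that a decaying multiplier times a resolvent-type operator is compact. Concretely: $\langle Q\rangle^{-1-\varepsilon}(M_\star-z)^{-1}$ is compact in $\ltwo(\R,\C^2)$ because it is a product of two operators each of which is a ``decay $\times$ bounded-continuous-function-of-a-self-adjoint-operator'' type, and on $\R$ such products $\langle Q\rangle^{-s}f(P)$ with $s>0$, $f$ vanishing at infinity, are compact (Kato–Rellich / Cwikel-type argument); one reduces $(M_\star-z)^{-1}$ to functions of $P$ using that $M_\star=w_\star D$ with $w_\star$ bounded and invertible.

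The main obstacle is the first family of terms: unlike the commutator terms, the multiplier $j_\star(w-w_\star)$ is not compactly supported, only decaying, and $D(M_\star-z)^{-1}$ provides no smoothing in the $\ltwo$-to-$\ltwo$ sense (it is order zero). So one cannot invoke Rellich directly; instead one must establish compactness of ``polynomial decay times a resolvent of a first-order operator''. I would handle this by the standard device of approximating $j_\star(w-w_\star)$ in operator norm by its truncations $\chi_{[-n,n]}\,j_\star(w-w_\star)$ (possible since the tail has norm $\le\mathrm{Const.}\,n^{-1-\varepsilon}\to0$), each truncation being compactly supported; then $\chi_{[-n,n]}j_\star(w-w_\star)D(M_\star-z)^{-1}$ is compact because $\chi_{[-n,n]}(M_\star-z)^{-1}$ — or more carefully $\chi_{[-n,n]}\langle Q\rangle(M_\star-z)^{-2}\langle Q\rangle\chi_{[-n,n]}$ after a resolvent identity and an argument moving $\langle Q\rangle$ past the resolvent (using $M_\star$ of class $C^1$ with respect to $Q$, as already noted in the proof of Proposition \ref{prop_A_esa}) — maps $\ltwo$ compactly into $\H^1$ with compactly supported image; and the set of compact operators is norm-closed. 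Alternatively, and perhaps more cleanly, one quotes that $M_\star$ has a spectral representation via Bloch–Floquet in which $(M_\star-z)^{-1}$ is a smooth family of bounded operators and $\langle Q\rangle^{-1-\varepsilon}$ acts as a decaying-order operator, so their product is Hilbert–Schmidt or compact by a direct kernel estimate. Either route is routine once the algebraic decomposition of $B(z)$ above is in hand; the decomposition itself is the crux.
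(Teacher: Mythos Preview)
Your decomposition $B(z)=(M-z)^{-1}(MJ-JM_0)(M_0-z)^{-1}$ with $MJ-JM_0=\sum_\star\big((w-w_\star)j_\star D+w[D,j_\star]\big)$ is correct and in fact a bit more direct than the paper's two-step route via \eqref{eq_two_terms}--\eqref{eq_pre_trick}; the commutator terms $(M-z)^{-1}w[D,j_\star](M_\star-z)^{-1}$ are handled correctly. The gap is in the first family. You assert that the truncated operator $\chi_{[-n,n]}j_\star(w-w_\star)D(M_\star-z)^{-1}$ is compact, but it is not: writing $D(M_\star-z)^{-1}=w_\star^{-1}+zw_\star^{-1}(M_\star-z)^{-1}$, the ``$w_\star^{-1}$'' piece leaves the pure multiplication operator $\chi_{[-n,n]}j_\star(w-w_\star)w_\star^{-1}$, and a nonzero multiplication operator on $\ltwo(\R,\C^2)$ is never compact. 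Your stated justification (``$\chi_{[-n,n]}(M_\star-z)^{-1}$ maps $\ltwo$ compactly into $\H^1$'') is also incorrect---it is compact into $\ltwo$, not into $\H^1$---and in any case the $D$ on its left cancels the $\H^1$ gain.

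The fix is simply to keep the factor $(M-z)^{-1}$ on the left, which is present in your own formula for $B(z)$ but silently dropped in the compactness discussion. With it, the term reads $(M-z)^{-1}(w-w_\star)j_\star\cdot D(M_\star-z)^{-1}$, and since $D(M_\star-z)^{-1}\in\B(\H_{w_\star})$ it suffices to show $(M-z)^{-1}(w-w_\star)j_\star\in\K(\H_{w_\star},\H_w)$. Now your truncation argument works: $(M-z)^{-1}\chi_{[-n,n]}$ is compact because it is the $\H_w$-adjoint of $\chi_{[-n,n]}(M-\bar z)^{-1}$, which is compact by \cite[Thm.~4.1]{Sim05} (this is exactly the local-compactness statement Lemma~\ref{lemma_ess_M}(a)); and $\|(w-w_\star)j_\star(1-\chi_{[-n,n]})\|_{\linfty}\to0$ by Assumption~\ref{ass_weight}. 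This is precisely the mechanism the paper uses---a resolvent sitting adjacent to a multiplier vanishing at infinity---only for this term the smoothing has to come from the \emph{left} resolvent $(M-z)^{-1}$, not from the right.
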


\begin{proof}
One has for $(\varphi_\ell,\varphi_{\rm r})\in\H_0$
\begin{align}
B(z)(\varphi_\ell,\varphi_{\rm r})
&=\sum_{\star\in\{\ell,{\rm r}\}}
\big(j_\star(M_\star-z)^{-1}-(M-z)^{-1}j_\star\big)\varphi_\star\nonumber\\
&=\sum_{\star\in\{\ell,{\rm r}\}}
\big\{\big((M_\star-z)^{-1}-(M-z)^{-1}\big)j_\star \varphi_\star
+[j_\star,(M_\star-z)^{-1}]\varphi_\star\big\}.\label{eq_two_terms}
\end{align}
Thus, an application of the standard result \cite[Thm.~4.1]{Sim05} taking into account
the properties of $j_\star$ implies that the operator
$[j_\star,(M_\star-z)^{-1}]$ is compact. This proves the claim for the second
term in \eqref{eq_two_terms}.

For the first term in \eqref{eq_two_terms}, we have the equalities
\begin{align}
&\big((M_\star-z)^{-1}-(M-z)^{-1}\big)\;\!j_\star\nonumber\\
&=(M-z)^{-1}(M-M_\star)\;\!j_\star (M_\star-z)^{-1}
+(M-z)^{-1}(M-M_\star)[(M_\star-z)^{-1},j_\star]\nonumber\\
&=(M-z)^{-1}j_\star(w-w_\star)D(M_\star-z)^{-1}
+(M-z)^{-1}(w-w_\star)[D,j_\star](M_\star-z)^{-1}\nonumber\\
&\quad+(M-z)^{-1}(M-M_\star)[(M_\star-z)^{-1},j_\star],\label{eq_pre_trick}
\end{align}
with $j_\star(w-w_\star)$ and $[D,j_\star]$ matrix-valued functions vanishing at
$\pm\infty$. Thus, the operator in the first term in \eqref{eq_two_terms} is also
compact, which concludes the proof.
\end{proof}

\begin{Lemma}\label{lema_compact_closure}
For each $z\in\C\setminus\R$ and each compact interval $I\subset\R\setminus\Tau_M$,
one has the inclusion
$$
\overline{B(z)A_{0,I}\upharpoonright\dom(A_{0,I})}\in\K(\H_0,\H_w).
$$
\end{Lemma}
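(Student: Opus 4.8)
The plan is to reduce the claim to the compactness of $B(z)$ composed with the multiplication operator $\langle Q_0\rangle:=(1+Q_0^2)^{1/2}$, where $Q_0:=Q_\ell\oplus Q_{\rm r}$, and then to read this compactness off the explicit decomposition of $B_\star(z)$ obtained in the proof of Lemma \ref{lemma_compact_B}. For the reduction, I would first note that, since $\N(I)$ is finite and the bounded operators $\Pi_{\star,n}$, $\U^{-1}_\star\widehat\lambda'_{\star,n}\U_\star$ entering \eqref{def_A_star} --- together with their commutators with $Q_\star$ --- are of class $C^1(Q_\star)$ (see the proof of Proposition \ref{prop_A_esa}), a direct computation based on \eqref{def_A_star} shows that $A_{0,I}$ is bounded relatively to $\langle Q_0\rangle$: there exists $L\in\B(\H_0)$ with $A_{0,I}\varphi=\langle Q_0\rangle L\varphi$ for every $\varphi\in\dom(\langle Q_0\rangle)\subset\dom(A_{0,I})$. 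Since $\S(\R,\C^2)\oplus\S(\R,\C^2)\subset\dom(\langle Q_0\rangle)$ is a core for $A_{0,I}$ by point (a'), one has $B(z)A_{0,I}\varphi=\big(B(z)\langle Q_0\rangle\big)L\varphi$ on this core, with $L\varphi\in\dom(\langle Q_0\rangle)$; hence it is enough to prove $\overline{B(z)\langle Q_0\rangle\upharpoonright\dom(\langle Q_0\rangle)}\in\K(\H_0,\H_w)$, as composing with the bounded operator $L$ preserves compactness. Writing $B(z)=\sum_{\star}B_\star(z)$ with $B_\star(z):=j_\star(M_\star-z)^{-1}-(M-z)^{-1}j_\star$ as in \eqref{eq_two_terms}, it then suffices to treat each $B_\star(z)\langle Q_\star\rangle$ separately.

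For the main step I would use the fact --- established in the proof of Lemma \ref{lemma_compact_B} via the identity \eqref{eq_resolvent} --- that $B_\star(z)$ is a finite sum of products of operators, each product containing exactly one matrix-valued multiplication factor $f$ that vanishes at infinity and which stands immediately next to a resolvent of $M$ or $M_\star$, all other factors being the resolvents $(M-z)^{-1}$, $(M_\star-z)^{-1}$, the bounded multiplication operators $w,w_\star,w-w_\star$, or the bounded operators $D(M-z)^{-1}=w^{-1}\big(1+z(M-z)^{-1}\big)$ and $D(M_\star-z)^{-1}=w_\star^{-1}\big(1+z(M_\star-z)^{-1}\big)$. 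Concretely, $f$ is either $[D,j_\star]=-i\,j'_\star\big(\begin{smallmatrix}0&1\\1&0\end{smallmatrix}\big)$, which is compactly supported because $j'_\star\in C^\infty_{\rm c}(\R)$, or $f=j_\star(w-w_\star)$, which satisfies $\|f(x)\|_{\B(\C^2)}\le{\rm Const.}\,\langle x\rangle^{-1-\varepsilon}$ by Assumption \ref{ass_weight}. The two ingredients I would rely on are: \emph{(i)} a resolvent of $M$ or $M_\star$ multiplied (on either side) by a multiplication operator vanishing at infinity is compact --- this is the estimate already used for $[j_\star,(M_\star-z)^{-1}]$ in the proof of Lemma \ref{lemma_compact_B}, cf.\ \cite[Thm.~4.1]{Sim05}; and \emph{(ii)} $[D,Q_\star]=\big(\begin{smallmatrix}0&-i\\-i&0\end{smallmatrix}\big)$ and $[D,\langle Q_\star\rangle]$ (whose entries are the bounded operator $-i\,Q_\star\langle Q_\star\rangle^{-1}$) are bounded, so that $[M_\star,\langle Q_\star\rangle]=w_\star[D,\langle Q_\star\rangle]$, $[(M_\star-z)^{-1},\langle Q_\star\rangle]$ and $[D(M_\star-z)^{-1},\langle Q_\star\rangle]$ are bounded, and likewise with $M$ in place of $M_\star$.

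With these at hand, for each of the finitely many products I would slide the extra factor $\langle Q_\star\rangle$ from the far right of $B_\star(z)\langle Q_\star\rangle$ towards $f$, using \emph{(ii)} to commute it across the resolvents and bounded operators that separate it from $f$; since $\langle Q_\star\rangle$ commutes with the multiplication operator $f$, this produces a leading term in which $\langle Q_\star\rangle$ multiplies $f$, plus finitely many remainder terms in which one commutation is replaced by its (bounded) commutator. In the leading term, $[D,j_\star]\langle Q_\star\rangle$ is still compactly supported and $\big\|j_\star(x)\big(w(x)-w_\star(x)\big)\langle x\rangle\big\|_{\B(\C^2)}\le{\rm Const.}\,\langle x\rangle^{-\varepsilon}\to0$ as $|x|\to\infty$, so $f\langle Q_\star\rangle$ still vanishes at infinity and still stands next to a resolvent; by \emph{(i)} this term is compact (a compact operator sandwiched between bounded ones). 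Each remainder term still contains $f$ next to a resolvent, all other factors being bounded, so it is compact by \emph{(i)} as well. Summing over the products and over $\star\in\{\ell,{\rm r}\}$ gives $\overline{B(z)\langle Q_0\rangle\upharpoonright\dom(\langle Q_0\rangle)}\in\K(\H_0,\H_w)$, and the reduction of the first paragraph yields the claim. I expect the main obstacle to be the bookkeeping of these commutations: one must check that every commutator produced while sliding $\langle Q_\star\rangle$ to the left is bounded --- which is precisely where the first-order, elliptic nature of $D$ enters --- and that the interface factor $f$ retains its vanishing at infinity after being multiplied by one extra power of $\langle x\rangle$, which is exactly why Assumption \ref{ass_weight} imposes the decay $\langle x\rangle^{-1-\varepsilon}$ with a genuine $\varepsilon>0$ rather than merely $\langle x\rangle^{-1}$.
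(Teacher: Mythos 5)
Your proof is correct and follows essentially the same route as the paper's: you reduce to the Schwartz core using essential self-adjointness of $A_{0,I}$, factor the conjugate operator as a product of $\langle Q_0\rangle$ with a bounded operator (the paper writes $A_{\star,I}=Q_\star F_{\star,I}+G_{\star,I}$, which is the same idea), and then show that $B_\star(z)$ composed with the extra power of $Q_\star$ is still compact by revisiting the decomposition \eqref{eq_two_terms}--\eqref{eq_pre_trick} and observing that the interface factors $j_\star(w-w_\star)$ and $[D,j_\star]$ absorb one power of $\langle x\rangle$ while remaining vanishing at infinity, thanks to the $\langle x\rangle^{-1-\varepsilon}$ decay in Assumption \ref{ass_weight}. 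The paper states this step as ``similar to the proof of Lemma \ref{lemma_compact_B}'' and leaves the commutation bookkeeping to the reader; you have carried out precisely those details, correctly identifying that all the commutators generated while sliding $\langle Q_\star\rangle$ to the left are bounded because $D$ is first order.
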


\begin{proof}
Since $A_{0,I}$ is essentially self-adjoint on $\S(\R,\C^2)\oplus\S(\R,\C^2)$, it is
sufficient to show that
$$
\overline{B(z)A_{0,I}\upharpoonright\big(\S(\R,\C^2)\oplus\S(\R,\C^2)\big)}
\in\K(\H_0,\H_w).
$$
Furthermore, we have $A_{0,I}=A_{\ell,I}\oplus A_{{\rm r},I}$, and each operator
$A_{\star,I}$ acts on $\S(\R,\C^2)$ as a sum $Q_\star F_{\star,I}+G_{\star,I}$, with
$F_{\star,I},G_{\star,I}$ bounded operators in $\H_{w_\star}$ mapping the set
$\S(\R,\C^2)$ into $\dom(Q_\star)$. These facts, together with the compactness result
of Lemma \ref{lemma_compact_B} and \eqref{eq_two_terms}, imply that it is sufficient
to show that
$$
\overline{\big((M_\star-z)^{-1}-(M-z)^{-1}\big)j_\star
\;\!Q_\star\upharpoonright\S(\R,\C^2)}\in\K(\H_{w_\star},\H_w).
$$
and
$$
\overline{[j_\star,(M_\star-z)^{-1}]Q_\star
\upharpoonright\S(\R,\C^2)}\in\K(\H_{w_\star},\H_w).
$$
Now, if one takes Assumption \ref{ass_weight} into account, the proof of these
inclusions is similar to the proof of Lemma \ref{lemma_compact_B}. We leave the
details to the reader.
\end{proof}

Next, we will need the following theorem which is a direct consequence of Theorem 3.1
and Corollaries 3.7-3.8 of \cite{RT13_2}.

\begin{Theorem}\label{thm_2_Hilbert}
Let $H_0,A_0$ be self-adjoint operators in a Hilbert space $\H_0$, let $H$ be a
self-adjoint operator in a Hilbert space $\H$, let $J\in\B(\H_0,\H)$, and let
$$
{\cal B}(z):=J(H_0-z)^{-1}-(H-z)^{-1}J,\quad z\in\C\setminus\R.
$$
Suppose there exists a set $\D\subset\dom(A_0J^*)$ such that $JA_0J^*$ is essentially
self-adjoint on $\D$, with $A$ its self-adjoint extension. Finally, assume that
\begin{enumerate}
\item[(i)] $H_0$ is of class $C^1(A_0)$,
\item[(ii)] for each $z\in\C\setminus\R$, one has ${\cal B}(z)\in\K(\H_0,\H)$,
\item[(iii)] for each $z\in\C\setminus\R$, one has
$\overline{{\cal B}(z)A_0\upharpoonright\dom(A_0)}\in\K(\H_0,\H)$,
\item[(iv)] for each $\eta\in C_{\rm c}^\infty(\R)$, one has
$\eta(H)\left(JJ^*-1\right)\eta(H)\in\K(\H)$.
\end{enumerate}
Then $H$ is of class $C^1(A)$ and
$\widetilde\varrho_H^A\ge\widetilde\varrho_{H_0}^{A_0}$. In particular, if $A_0$ is
conjugate to $H_0$ at $\lambda\in\R$, then $A$ is conjugate to $H$ at $\lambda$.
\end{Theorem}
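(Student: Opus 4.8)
The plan is to obtain the statement directly from the abstract two-Hilbert spaces Mourre theory developed in \cite{RT13_2}, the only task being to check that our standing data and the four numbered conditions coincide with the hypotheses required there. First I would note that the framework of \cite{RT13_2} asks for exactly a pair of self-adjoint operators $H_0,A_0$ in $\H_0$, a self-adjoint operator $H$ in $\H$, a bounded identification operator $J\in\B(\H_0,\H)$, and a subspace $\D\subset\dom(A_0J^*)$ on which $JA_0J^*$ is essentially self-adjoint; its self-adjoint extension is the operator we call $A$. Then I would match the numbered conditions one by one: (i) is the $C^1(A_0)$-regularity of $H_0$; (ii) and (iii) are the two compactness requirements on ${\cal B}(z)$ and on $\overline{{\cal B}(z)A_0\upharpoonright\dom(A_0)}$, stated verbatim as in \cite{RT13_2}; and (iv) is the compactness of the energy-localized defect $\eta(H)(JJ^*-1)\eta(H)$ for $\eta\in C_{\rm c}^\infty(\R)$. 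So all the assumptions of \cite[Thm.~3.1]{RT13_2} are in force.

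Granting this, \cite[Thm.~3.1]{RT13_2} gives at once that $H$ is of class $C^1(A)$, and it also produces the comparison — valid modulo compact and energy-localized remainders — between the commutator $[iH,A]$ and the transported commutator $J[iH_0,A_0]J^*$. This is precisely the input needed by \cite[Cor.~3.7~\&~3.8]{RT13_2}, which convert such a comparison into the pointwise inequality $\widetilde\varrho_H^A\ge\widetilde\varrho_{H_0}^{A_0}$ between the two lower-semicontinuous functions recalled in Section \ref{sec_com}. The final assertion is then immediate: if $A_0$ is conjugate to $H_0$ at $\lambda$, i.e.\ $\widetilde\varrho_{H_0}^{A_0}(\lambda)>0$, the inequality forces $\widetilde\varrho_H^A(\lambda)>0$, which is exactly the statement that $A$ is conjugate to $H$ at $\lambda$.

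Since the theorem is essentially a repackaging of the cited abstract results, there is no real mathematical obstacle beyond careful bookkeeping; the point worth stressing is rather \emph{why} the two-Hilbert spaces formalism is needed here and what shape the conclusion takes. Because hypotheses (ii)--(iv) only provide \emph{compactness} of the relevant error terms (not smallness or vanishing), the output one can extract is necessarily phrased through $\widetilde\varrho$, which by construction is insensitive to compact perturbations of the commutator, rather than the stronger $\varrho$. This is also the structural reason one cannot reduce to ordinary one-space Mourre theory: $J$ is neither isometric nor invertible (Remark \ref{rem_junction}), so one cannot simply conjugate the Mourre inequality for $H_0$ through $J$, and the defect $JJ^*-1$ together with the difference of resolvents ${\cal B}(z)$ must instead be absorbed into the compact remainder that $\widetilde\varrho$ tolerates. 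The genuine work of this part of the paper therefore lies not in the present theorem but in supplying the compactness inputs (ii)--(iv) for the concrete pair $\{M_0,M\}$, which is done via Lemmas \ref{lemma_compact_B} and \ref{lema_compact_closure}.
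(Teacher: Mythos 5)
Your proposal is correct and takes essentially the same approach as the paper: the paper itself simply introduces the theorem as ``a direct consequence of Theorem 3.1 and Corollaries 3.7--3.8 of \cite{RT13_2}'' without further elaboration, and your matching of hypotheses (i)--(iv) to the assumptions there and the two-step extraction ($C^1(A)$ regularity plus commutator comparison from Thm.~3.1, then $\widetilde\varrho_H^A\ge\widetilde\varrho_{H_0}^{A_0}$ from Cors.~3.7--3.8) is exactly the intended argument. Your added commentary on why compactness of the error terms forces the conclusion through $\widetilde\varrho$ rather than $\varrho$, and why the non-isometric $J$ rules out a one-space reduction, is accurate and explains the structure well, though it goes beyond what the paper records.
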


We are now ready to prove a Mourre estimate for $M:$

\begin{Theorem}[Mourre estimate for $M$]\label{thm_regul}
Let $I\subset\R\setminus\Tau_M$ be a compact interval. Then $M$ is of class
$C^1(A_I)$, and
$$
\widetilde\varrho_M^{A_I}(\lambda)
\ge\widetilde\varrho_{M_0}^{A_{0,I}}(\lambda)
=\min\big\{\widetilde\varrho_{M_\ell}^{A_{\ell,I}}(\lambda),
\widetilde\varrho_{M_{\rm r}}^{A_{{\rm r},I}}(\lambda)\big\}
>0\quad\hbox{for every $\lambda\in I$.}
$$
\end{Theorem}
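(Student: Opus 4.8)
The plan is to verify the four hypotheses of Theorem \ref{thm_2_Hilbert} for the pair $\{M_0,M\}$ with conjugate operators $\{A_{0,I},A_I\}$ and identification operator $J$, and then read off the conclusion. Concretely, I would take $\H_0$ and $\H$ in that theorem to be our $\H_0$ and $\H_w$, take $H_0=M_0$, $H=M$, $A_0=A_{0,I}$, and $\D=\S(\R,\C^2)\subset\H_w$; note that $\D\subset\dom(A_{0,I}J^*)$ because $J^*\S(\R,\C^2)\subset\dom(Q_0^2)\subset\dom(A_{0,I})$ as observed in the proof of Proposition \ref{prop_A_esa}, so $A=A_I$ is precisely the self-adjoint extension produced by the theorem. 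Hypothesis (i), that $M_0$ is of class $C^1(A_{0,I})$, is an immediate consequence of property (b') above, since $C^2(A_{0,I})\subset C^1(A_{0,I})$. Hypotheses (ii) and (iii) are exactly the statements of Lemma \ref{lemma_compact_B} and Lemma \ref{lema_compact_closure}, respectively, so nothing further is needed there.

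The only hypothesis requiring a genuine (if short) argument is (iv): for each $\eta\in C_{\rm c}^\infty(\R)$, one has $\eta(M)\big(JJ^*-1\big)\eta(M)\in\K(\H_w)$. Here I would compute $JJ^*$ from Definition \ref{def_junction}: for $\varphi\in\H_w$,
$$
JJ^*\varphi
=j_\ell\;\!w_\ell w^{-1}j_\ell\;\!\varphi+j_{\rm r}\;\!w_{\rm r}w^{-1}j_{\rm r}\;\!\varphi,
$$
so that
$$
JJ^*-1
=\big(j_\ell^2+j_{\rm r}^2-1\big)
+j_\ell\big(w_\ell-w\big)w^{-1}j_\ell
+j_{\rm r}\big(w_{\rm r}-w\big)w^{-1}j_{\rm r}.
$$
The first term $j_\ell^2+j_{\rm r}^2-1$ is a bounded continuous matrix-valued function supported in the compact central region where $j_\ell,j_{\rm r}$ are not locally constant, hence vanishing at $\pm\infty$; the second and third terms are bounded matrix-valued functions vanishing at $\pm\infty$ by Assumption \ref{ass_weight}, since $j_\ell$ is supported in $x\le-1/2$ and $j_{\rm r}$ in $x\ge1/2$. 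Thus $JJ^*-1$ is multiplication by a bounded matrix-valued function $v$ with $v(x)\to0$ as $|x|\to\infty$. Then $\eta(M)v\,\eta(M)\in\K(\H_w)$ follows from the standard fact that $\varphi(Q)\psi(M)$ is compact whenever $\varphi$ vanishes at infinity and $\psi\in C_{\rm c}^\infty(\R)$ — which in turn is established by writing $\psi(M)=(M+i)^{-1}\big[(M+i)\psi(M)\big]$, using that $(M+i)^{-1}=(D+i)^{-1}w^{-1}+(\text{lower order})$ as in \eqref{eq_resolvent_bis}, and invoking \cite[Thm.~4.1]{Sim05} exactly as in the proof of Lemma \ref{lemma_compact_B}.

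With (i)--(iv) in hand, Theorem \ref{thm_2_Hilbert} gives directly that $M$ is of class $C^1(A_I)$ and that $\widetilde\varrho_M^{A_I}(\lambda)\ge\widetilde\varrho_{M_0}^{A_{0,I}}(\lambda)$ for every $\lambda\in\R$. Finally, the equality $\widetilde\varrho_{M_0}^{A_{0,I}}(\lambda)=\min\big\{\widetilde\varrho_{M_\ell}^{A_{\ell,I}}(\lambda),\widetilde\varrho_{M_{\rm r}}^{A_{{\rm r},I}}(\lambda)\big\}$ is the standard behaviour of the $\widetilde\varrho$ function under direct sums (since $M_0=M_\ell\oplus M_{\rm r}$ and $A_{0,I}=A_{\ell,I}\oplus A_{{\rm r},I}$, the spectral measure and the commutator both split), and the strict positivity for $\lambda\in I$ follows from property (c') above together with $\widetilde\varrho\ge\varrho$ (Proposition 7.2.6 of \cite{ABG96}), which gives $\widetilde\varrho_{M_0}^{A_{0,I}}(\lambda)\ge\varrho_{M_0}^{A_{0,I}}(\lambda)\ge c_I>0$ on $I$. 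I expect hypothesis (iv) to be the only place where a real computation is needed; everything else is a matter of quoting the preceding results, so the main (modest) obstacle is simply organizing the compactness argument for $\eta(M)(JJ^*-1)\eta(M)$ cleanly.
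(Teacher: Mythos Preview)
Your proposal is correct and follows essentially the same route as the paper's proof: both verify hypotheses (i)--(iv) of Theorem \ref{thm_2_Hilbert} using (b'), Lemma \ref{lemma_compact_B}, Lemma \ref{lema_compact_closure}, and the same decomposition of $JJ^*-1$ into terms vanishing at $\pm\infty$, and both obtain the direct-sum equality and the strict positivity from the results of Section \ref{sec_conj_free}. The only differences are cosmetic: the paper cites \cite[Prop.~8.3.5]{ABG96} for the direct-sum formula and is terser about the compactness of $\eta(M)(JJ^*-1)\eta(M)$, whereas you spell out the resolvent argument a bit more.
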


\begin{proof}
Theorem \ref{thm_Mourre_Free} and its restatement at the end of Section
\ref{sec_conj_free} give us the estimate
$$
\min\big\{\widetilde\varrho_{M_\ell}^{A_{\ell,I}}(\lambda),
\widetilde\varrho_{M_{\rm r}}^{A_{{\rm r},I}}(\lambda)\big\}>0
\quad\hbox{for every $\lambda\in I$.}
$$
In addition, the equality
$
\widetilde\varrho_{M_0}^{A_{0,I}}
=\min\big\{\widetilde\varrho_{M_\ell}^{A_{\ell,I}},
\widetilde\varrho_{M_{\rm r}}^{A_{{\rm r},I}}\big\}
$
is a consequence of the definition of $A_{0,I}$ as a direct sum of $A_{\ell,I}$ and
$A_{{\rm r},I}$ (see \cite[Prop. 8.3.5]{ABG96}).

So, it only remains to show the inequality
$\widetilde\varrho_M^{A_I}\ge\widetilde\varrho_{M_0}^{A_{0,I}}$ to prove the claim.
For this, we apply Theorem \ref{thm_2_Hilbert} with $H_0=M_0$, $H=M$ and
$A_0=A_{0,I}$, starting with the verification of its assumptions (i)-(iv)\;\!: the
assumptions (i), (ii) and (iii) of Theorem \ref{thm_2_Hilbert} follow from point (b')
above, Lemma \ref{lemma_compact_B}, and Lemma \ref{lema_compact_closure},
respectively. Furthermore, the assumption (iv) of Theorem \ref{thm_2_Hilbert} follows
from the fact that for any $\eta\in C^\infty_{\rm c}(\R)$ we have the inclusion
$$
\eta(M)\left(JJ^*-1\right)\eta(M)
=\eta(M)\big(w_\ell w^{-1}j^2_\ell+w_{\rm r}w^{-1}j^2_{\rm r}-1\big)(Q)\;\!\eta(M)
\in\K(\H_w),
$$
since
$$
w_\ell w^{-1}j^2_\ell+w_{\rm r}w^{-1}j^2_{\rm r}-1 
=(w_\ell-w)\;\!j_\ell^2w^{-1}+(w_{\rm r}-w)\;\!j_{\rm r}^2w^{-1}
+(j_\ell^2+j_{\rm r}^2-1)
$$
is a matrix-valued function vanishing at $\pm\infty$. These facts, together with
Proposition \ref{prop_A_esa} and the inclusion $\S(\R,\C^2)\subset\dom(A_0 J^*)$,
imply that all the assumptions of Theorem \ref{thm_2_Hilbert} are satisfied. We thus
obtain that $\widetilde\varrho_M^{A_I}\ge\widetilde\varrho_{M_0}^{A_{0,I}}$, as
desired.
\end{proof}

\subsection{Spectral properties of the full Hamiltonian}\label{sec_spec}

In this section, we determine the spectral properties of the full Hamiltonian $M$. We
start by proving that $M$ has the same essential spectrum as the free Hamiltonian
$M_0:$

\begin{Proposition}\label{proposition_ess_M}
One has
$$
\sigma_{\rm ess}(M)
=\sigma_{\rm ess}(M_0)
=\sigma(M_\ell)\cup\sigma(M_{\rm r}).
$$
\end{Proposition}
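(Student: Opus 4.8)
The statement contains two claims of very unequal difficulty. The equality $\sigma_{\rm ess}(M_0)=\sigma(M_\ell)\cup\sigma(M_{\rm r})$ is immediate: since $M_0=M_\ell\oplus M_{\rm r}$ is a direct sum one has $\sigma_{\rm ess}(M_0)=\sigma_{\rm ess}(M_\ell)\cup\sigma_{\rm ess}(M_{\rm r})$ (cf. Remark \ref{rem_ess_M_0}), and $\sigma_{\rm ess}(M_\star)=\sigma(M_\star)$ by Proposition \ref{prop_spec_asymp}. So the real content is $\sigma_{\rm ess}(M)=\sigma_{\rm ess}(M_0)$, which I would prove by a two-sided inclusion based on Weyl sequences whose supports escape to infinity (Zhislin sequences). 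The basic tool is that for any self-adjoint operator $H$ of the form $wD$ on $\ltwo(\R,\C^2)$ one has $(H-z)^{-1}\colon\H_w\to\H^1(\R,\C^2)$ (because $D(H-z)^{-1}=w^{-1}\big(1+z(H-z)^{-1}\big)$ is bounded), hence $\chi\,(H-z)^{-1}$ is compact for every $\chi\in C^\infty_{\rm c}(\R)$ by the Rellich theorem. Starting from any singular Weyl sequence at $\lambda\in\sigma_{\rm ess}(H)$, multiplication by smooth cut-offs $\zeta_n$ supported outside $[-R_n,R_n]$ with $R_n\to\infty$ and $\|\zeta_n'\|_\infty=O(R_n^{-1})$, followed by splitting according to the sign of $x$, then produces (after passing to a subsequence and renormalising) a singular Weyl sequence for $H$ at $\lambda$ supported either in $(-\infty,-n)$ or in $(n,\infty)$.

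For $\sigma_{\rm ess}(M_0)\subseteq\sigma_{\rm ess}(M)$, let $\lambda\in\sigma_{\rm ess}(M_\ell)$, the case of $M_{\rm r}$ being symmetric. Applying the construction above to $M_\ell$ and then translating by the unitaries $T_{np_\ell}$ on $\H_{w_\ell}$ (which commute with $M_\ell=w_\ell D$ by periodicity) yields a Weyl sequence $(\varphi_n)\subset\H^1(\R,\C^2)$ for $M_\ell$ at $\lambda$ with $\|\varphi_n\|_{\H_{w_\ell}}=1$ and $\supp\varphi_n\subset(-\infty,-n)$. Viewing $\varphi_n$ in $\H_w=\dom(M)$, we have $(M-\lambda)\varphi_n=(M_\ell-\lambda)\varphi_n+(w-w_\ell)D\varphi_n$, where $\|(M_\ell-\lambda)\varphi_n\|\to0$, where $\|D\varphi_n\|=\|w_\ell^{-1}M_\ell\varphi_n\|$ stays bounded since $\|M_\ell\varphi_n\|\le\|(M_\ell-\lambda)\varphi_n\|+|\lambda|$, and where $\|w-w_\ell\|_{\B(\C^2)}\le{\rm Const.}\,\langle n\rangle^{-1-\varepsilon}$ on $\supp\varphi_n$ by Assumption \ref{ass_weight}; hence $(M-\lambda)\varphi_n\to0$. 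Equivalence of the norms of $\H_w$, $\H_{w_\ell}$ and $\ltwo(\R,\C^2)$ gives $\varphi_n\rightharpoonup0$ in $\H_w$, and the analogous estimate on $\langle\varphi_n,(w^{-1}-w_\ell^{-1})\varphi_n\rangle$ gives $\|\varphi_n\|_{\H_w}\to1$, so $(\varphi_n)$ is a singular Weyl sequence for $M$ at $\lambda$ and $\lambda\in\sigma_{\rm ess}(M)$.

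For the reverse inclusion $\sigma_{\rm ess}(M)\subseteq\sigma_{\rm ess}(M_0)$, start from $\lambda\in\sigma_{\rm ess}(M)$ and apply the same construction to $M$ itself, obtaining after a subsequence a Weyl sequence $(\varphi_n)$ for $M$ at $\lambda$ supported in $(-\infty,-n)$ (or in $(n,\infty)$, which is handled identically with $M_{\rm r}$). The identity $(M_\ell-\lambda)\varphi_n=(M-\lambda)\varphi_n-(w-w_\ell)D\varphi_n$, together with the three analogous estimates ($\|(M-\lambda)\varphi_n\|\to0$; $\|D\varphi_n\|$ bounded via $\|M\varphi_n\|\le\|(M-\lambda)\varphi_n\|+|\lambda|$; $\|w-w_\ell\|_{\B(\C^2)}\to0$ on the support), then shows that $(\varphi_n)$ is a singular Weyl sequence for $M_\ell$ at $\lambda$, whence $\lambda\in\sigma_{\rm ess}(M_\ell)\subseteq\sigma_{\rm ess}(M_0)$.

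The perturbative part of both inclusions is routine: it is driven entirely by the decay in Assumption \ref{ass_weight} and by the a priori bound $\|D\varphi_n\|\le{\rm Const.}$ Accordingly, the main obstacle — and the reason for working with Zhislin rather than arbitrary Weyl sequences — is the construction in the first paragraph: one must push the mass of a Weyl sequence out to $\pm\infty$ and, after a subsequence, concentrate it on a single side of the origin instead of letting it split between $-\infty$, the central region of the junction, and $+\infty$. This requires both the local compactness coming from ellipticity of $D$ and careful bookkeeping in the smooth localisations, the commutators $w\,[D,\zeta_n]$ being under control only because $\zeta_n'$ is supported where $\varphi_n$ eventually vanishes and is of size $O(R_n^{-1})$. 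Once such support-at-infinity sequences are available, the rest is just a transcription of the asymptotic-periodicity hypothesis.
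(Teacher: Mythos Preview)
Your argument is correct and follows essentially the same strategy as the paper: both proofs hinge on local compactness of $M$ (your Rellich argument is the content of the paper's Lemma \ref{lemma_ess_M}), the resulting characterisation of $\sigma_{\rm ess}$ by Zhislin sequences (the paper's Lemma \ref{lemma_ess_M_bis}), and then the perturbative comparison $(M-M_\star)\varphi_n=(w-w_\star)D\varphi_n$ controlled by Assumption \ref{ass_weight} together with the a priori bound on $\|D\varphi_n\|$. For the inclusion $\sigma_{\rm ess}(M_0)\subset\sigma_{\rm ess}(M)$ the two proofs are virtually identical, including the use of translations by multiples of the period to push the support of an $M_\star$-Zhislin sequence to the correct side.

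The one genuine difference is in the inclusion $\sigma_{\rm ess}(M)\subset\sigma_{\rm ess}(M_0)$. You extract a one-sided subsequence by pigeonhole (mass must accumulate on at least one of $(-\infty,-n)$ or $(n,\infty)$) and then compare with a single $M_\star$. The paper instead avoids the subsequence altogether: given a Zhislin sequence $\phi_m$ for $M$ it sets $\phi_m^0:=c_m(j_\ell\phi_m,j_{\rm r}\phi_m)\in\H_0$ and checks directly that this is a Zhislin sequence for the direct sum $M_0$, without ever deciding which side carries the mass. Your route is perfectly valid and perhaps more intuitive; the paper's is slightly slicker in that it exploits the direct-sum structure of $M_0$ to sidestep the case split.
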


To prove Proposition \ref{proposition_ess_M}, we first need two preliminary lemmas. In
the first lemma, we use the notation $\chi_\Lambda$ for the characteristic function of a
Borel set $\Lambda\subset\R$.

\begin{Lemma}\label{lemma_ess_M}
\begin{enumerate}
\item[(a)] The operator $M$ is locally compact in $\H_w$, that is,
$\chi_B(Q)(M-i)^{-1}\in\K(\H_w)$ for each bounded Borel set $B\subset\R$.
\item[(b)] Let $\zeta\in C^\infty_{\rm c}\big(\R,[0,\infty)\big)$ satisfy $\zeta(x)=1$
for $|x|\le1$ and $\zeta(x)=0$ for $|x|\ge2$, and set $\zeta_n(x):=\zeta(x/n)$ for all
$x\in\R$ and $n\in\N\setminus\{0\}$. Then
$$
\lim_{n\to \infty}\big\|[M,\zeta_n(Q)](M-i)^{-1}\big\|_{\B(\H_w)}=0.
$$
\end{enumerate}
\noindent
Moreover, the results of (a) and (b) also hold true for the operators $M_0$ and $Q_0$
in the Hilbert space $\H_0$.
\end{Lemma}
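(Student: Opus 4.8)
The plan is to transfer everything to the Hilbert space $\ltwo(\R,\C^2)$, whose norm is equivalent to that of $\H_w$ by Lemma~\ref{lemma_self}(b), and then to exploit the factorisation $M=wD$ with $w,w^{-1}\in\linfty\big(\R,\B(\C^2)\big)$ together with standard compactness criteria.

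First, for part~(a), I would record the following elementary but crucial fact: the graph norm of $M$ is equivalent to the norm of $\H^1(\R,\C^2)$. Indeed, $c_0\le w\le c_1$ gives $c_0\;\!\|D\varphi\|_{\ltwo(\R,\C^2)}\le\|M\varphi\|_{\ltwo(\R,\C^2)}\le c_1\;\!\|D\varphi\|_{\ltwo(\R,\C^2)}$ for $\varphi\in\dom(M)=\H^1(\R,\C^2)$, while $\|D\varphi\|_{\ltwo(\R,\C^2)}$ coincides with the norm of $P$ acting componentwise; hence $(M-i)^{-1}$, and also $\langle D\rangle(M-i)^{-1}$ with $\langle D\rangle:=(1+D^2)^{1/2}$, are bounded from $\H_w$ to $\ltwo(\R,\C^2)$. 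Then, for $B\subset[-R,R]$, I would write
\[
\chi_B(Q)(M-i)^{-1}=\big(\chi_B(Q)\langle D\rangle^{-1}\big)\big(\langle D\rangle(M-i)^{-1}\big),
\]
and observe that $\chi_B(Q)\langle D\rangle^{-1}$ acts componentwise as $\chi_B(Q)\langle P\rangle^{-1}$, which is Hilbert--Schmidt (hence compact) on $\ltwo(\R)$ by \cite[Thm.~4.1]{Sim05}, since both $\chi_B$ and $\langle\;\!\cdot\;\!\rangle^{-1}$ belong to $\ltwo(\R)$. A product of a compact operator with a bounded one being compact, this gives $\chi_B(Q)(M-i)^{-1}\in\K\big(\ltwo(\R,\C^2)\big)=\K(\H_w)$; the case of a general bounded Borel set $B$ follows since $\chi_B(Q)=\chi_B(Q)\chi_{[-R,R]}(Q)$ whenever $B\subset[-R,R]$.

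For part~(b), the point is that $w$ commutes with the scalar multiplication operator $\zeta_n(Q)$, so on $\H^1(\R,\C^2)$ one has
\[
[M,\zeta_n(Q)]=w\;\![D,\zeta_n(Q)]=-i\;\!w\begin{pmatrix}0&\zeta_n'(Q)\\\zeta_n'(Q)&0\end{pmatrix},
\]
which extends to a bounded operator on $\H_w$ with norm at most ${\rm Const.}\;\!\|\zeta_n'\|_{\linfty(\R)}={\rm Const.}\;\!n^{-1}\|\zeta'\|_{\linfty(\R)}$. Composing with the bounded operator $(M-i)^{-1}$ and letting $n\to\infty$ proves the claim. For the ``moreover'' part, since $M_0=M_\ell\oplus M_{\rm r}$ and $Q_0=Q_\ell\oplus Q_{\rm r}$, the operators $\chi_B(Q_0)(M_0-i)^{-1}$ and $[M_0,\zeta_n(Q_0)](M_0-i)^{-1}$ are the direct sums of the corresponding operators associated with $M_\ell$ and $M_{\rm r}$; the arguments above apply verbatim to each $M_\star=w_\star D$ because $c_0\le w_\star\le c_1$ (Lemma~\ref{lemma_self}(a)) and $\dom(M_\star)=\H^1(\R,\C^2)$ (Lemma~\ref{lemma_self}(c)), a direct sum of two compact operators is compact, and the norm bound in~(b) is uniform in $\star$, so the direct sum still tends to $0$ in norm.

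I do not expect a serious obstacle here; the only point requiring a little care is the bookkeeping that turns the range-regularity of $(M-i)^{-1}$ (range in $\H^1(\R,\C^2)$, with graph norm equivalent to the Sobolev norm) into the boundedness of $\langle D\rangle(M-i)^{-1}$, after which part~(a) is a one-line application of a Hilbert--Schmidt criterion and part~(b) is the explicit commutator computation above.
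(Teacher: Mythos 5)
Your proof is correct, and part~(b) and the ``moreover'' part are essentially identical to the paper's proof. Part~(a) is where you take a genuinely different route: you factor
$\chi_B(Q)(M-i)^{-1}=\big(\chi_B(Q)\langle D\rangle^{-1}\big)\big(\langle D\rangle(M-i)^{-1}\big)$
and use the observation that the graph norm of $M$ is equivalent to the $\H^1$-norm (because $c_0\le w\le c_1$) to conclude that $\langle D\rangle(M-i)^{-1}$ is bounded. The paper instead writes the matrix $\chi_B(Q)(D-i)^{-1}$ explicitly, shows each entry is compact via \cite[Thm.~4.1]{Sim05}, and then passes from $(D-i)^{-1}$ to $(M-i)^{-1}$ through the resolvent identity $(M-i)^{-1}=(D-i)^{-1}w^{-1}+i\;\!(D-i)^{-1}(w^{-1}-1)(M-i)^{-1}$. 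Both arguments ultimately invoke the same $f(Q)g(P)$ Hilbert--Schmidt criterion from \cite{Sim05}; yours avoids the resolvent identity at the cost of the small piece of bookkeeping you flag at the end (boundedness of $\langle D\rangle(M-i)^{-1}$, which follows since $(M-i)^{-1}$ maps $\ltwo(\R,\C^2)$ into $\H^1(\R,\C^2)=\dom(D)$ continuously), while the paper's version trades that for an explicit $2\times2$ matrix computation plus the resolvent trick. The two approaches are of comparable length and difficulty; your factorisation is arguably cleaner because it treats $\langle D\rangle$ as a single scalar-type operator acting componentwise, whereas the paper's version requires noticing that each of the four entries $\chi_B(Q)(1+P^2)^{-1}$, $\chi_B(Q)P(1+P^2)^{-1}$ is individually compact.
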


\begin{proof}
(a) A direct computation shows that
$$
\chi_B(Q)(D-i)^{-1}
=\begin{pmatrix}
i\;\!\chi_B(Q)(1+P^2)^{-1} & \chi_B(Q)P\;\!(1+P^2)^{-1}\\
\chi_B(Q)P\;\!(1+P^2)^{-1} & i\;\!\chi_B(Q)(1+P^2)^{-1}
\end{pmatrix},
$$
which implies that $\chi_B(Q)(D-i)^{-1}$ is compact in $\ltwo(\R,\C^2)$ since every
entry of the matrix is compact in $\ltwo(\R)$ (see \cite[Thm.~4.1]{Sim05}). Given that
$\ltwo(\R,\C^2)$ and $\H_w$ have equivalent norms by Lemma \ref{lemma_self}(b), it
follows that $\chi_B(Q)(D-i)^{-1}$ is also compact in $\H_w$. Finally, the resolvent
identity (similar to \eqref{eq_resolvent_bis})
$$
(M-i)^{-1}
=(D-i)^{-1}w^{-1}+i\;\!(D-i)^{-1}(w^{-1}-1)(M-i)^{-1}
$$
shows that $\chi_B(Q)(M-i)^{-1}$ is the sum of two compact operators in $\H_w$, and
hence compact in $\H_w$. The same argument also shows that the operators $M_\star$ are
locally compact in $\H_{w_\star}$, and thus that $M_0=M_\ell\oplus M_{\rm r}$ is
locally compact in $\H_0=\H_{w_\ell}\oplus\H_{w_{\rm r}}$.

(b) Let $\varphi\in\dom(M)=\H^1(\R,\C^2)$. Then a direct computation taking into
account the inclusion $\zeta_n(Q)\varphi\in\dom(M)$ gives
$$
[M,\zeta_n(Q)]\varphi
=w
\begin{pmatrix}
0 & [P,\zeta_n(Q)]\\
[P,\zeta_n(Q)] & 0
\end{pmatrix}
\varphi
=-\tfrac inw\;\!\zeta'_n(Q)
\begin{pmatrix}
0 & 1\\
1 & 0
\end{pmatrix}
\varphi.
$$
In consequence, we obtain that
$\big\|[M,\zeta_n(Q)](M-i)^{-1}\big\|_{\B(\H_w)}\le{\rm Const.}\;\!n^{-1}$
which proves the claim. As before, the same argument also applies to the operators
$M_\star$ in $\H_{w_\star}$, and thus to the operator $M_0=M_\ell\oplus M_{\rm r}$ in
$\H_0=\H_{w_\ell}\oplus\H_{w_{\rm r}}$.
\end{proof}

Lemma \ref{lemma_ess_M} is needed to prove that the essential spectra of $M$ and $M_0$
can be characterised in terms of \emph{Zhislin sequences} (see
\cite[Def.~10.4]{HS96}). Zhislin sequences are particular types of Weyl sequences
supported at infinity as in the following lemma\;\!:

\begin{Lemma}[Zhislin sequences]\label{lemma_ess_M_bis}
Let $\lambda\in\R$. Then $\lambda\in\sigma_{\rm ess}(M)$ if and only if there exists
a sequence $\{\phi_m\}_{m\in\N\setminus\{0\}}\subset\dom(M)$, called Zhislin sequence, such that\;\!:
\begin{enumerate}
\item[(a)] $\|\phi_m\|_{\H_w}=1$ for all $m\in\N\setminus\{0\}$,
\item[(b)] for each $m\in\N\setminus\{0\}$, one has $\phi_m(x)=0$ if $|x|\le m$,
\item[(c)] $\lim_{m\to\infty}\|(M-\lambda)\phi_m\|_{\H_w}=0$.
\end{enumerate}
Similarly, $\lambda\in\sigma_{\rm ess}(M_0)$ if and only if there exists a sequence
$\{\phi^0_m\}_{m\in\N\setminus\{0\}}\subset\dom(M_0)$ which meets the properties (a), (b), (c)
relative to the operator $M_0$.
\end{Lemma}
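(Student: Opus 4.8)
The plan is to prove the equivalence for $M$; the statement for $M_0$ follows by exactly the same argument using the second halves of Lemmas \ref{lemma_ess_M}(a) and (b). The ``if'' direction is immediate: a sequence satisfying (a) and (c) is a Weyl sequence for $M$ at $\lambda$, and Weyl's criterion gives $\lambda\in\sigma_{\rm ess}(M)$ (no weak-convergence issue arises because, by property (b), the $\phi_m$ are eventually supported outside any fixed ball, so $\phi_m\rightharpoonup0$ automatically). The real content is the ``only if'' direction, so I will concentrate on that.

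Assume $\lambda\in\sigma_{\rm ess}(M)$. By Weyl's criterion there is a singular Weyl sequence $\{\psi_m\}\subset\dom(M)$ with $\|\psi_m\|_{\H_w}=1$, $\psi_m\rightharpoonup0$, and $\|(M-\lambda)\psi_m\|_{\H_w}\to0$. The idea is to turn this into a Zhislin sequence by multiplying by cut-offs $1-\zeta_{n}(Q)$ that vanish near the origin, where $\zeta_n$ is as in Lemma \ref{lemma_ess_M}(b), and then extract a diagonal subsequence. Concretely, I set $\phi_{m,n}:=\big(1-\zeta_n(Q)\big)\psi_m$, which lies in $\dom(M)=\H^1(\R,\C^2)$ and vanishes for $|x|\le n$. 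First I show that for each fixed $n$ one has $\|\zeta_n(Q)\psi_m\|_{\H_w}\to0$ as $m\to\infty$: indeed $\zeta_n(Q)\psi_m=\zeta_n(Q)(M-i)^{-1}(M-i)\psi_m$, the operator $\zeta_n(Q)(M-i)^{-1}$ is compact by Lemma \ref{lemma_ess_M}(a) (since $\zeta_n$ has bounded support), the vectors $(M-i)\psi_m=(M-\lambda)\psi_m+(\lambda-i)\psi_m$ are bounded in $\H_w$ and converge weakly to $0$, and a compact operator maps a weakly null bounded sequence to a norm-null sequence. Hence $\|\phi_{m,n}\|_{\H_w}\to1$ and in particular $\phi_{m,n}\ne0$ for $m$ large.

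Next I control $(M-\lambda)\phi_{m,n}$. Using $(M-\lambda)\phi_{m,n}=\big(1-\zeta_n(Q)\big)(M-\lambda)\psi_m-[M,\zeta_n(Q)]\psi_m$, the first term is bounded in norm by $\|(M-\lambda)\psi_m\|_{\H_w}\to0$, while the second satisfies
$$
\big\|[M,\zeta_n(Q)]\psi_m\big\|_{\H_w}
\le\big\|[M,\zeta_n(Q)](M-i)^{-1}\big\|_{\B(\H_w)}\,\big\|(M-i)\psi_m\big\|_{\H_w},
$$
and Lemma \ref{lemma_ess_M}(b) makes the operator norm here small for $n$ large, uniformly in $m$, while $\|(M-i)\psi_m\|_{\H_w}$ stays bounded. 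Combining, given any $\epsilon_k\downarrow0$ I first pick $n_k$ so large that $\|[M,\zeta_{n_k}(Q)](M-i)^{-1}\|_{\B(\H_w)}$ forces the commutator term below $\epsilon_k/2$, then pick $m_k>n_k$ (and increasing) so large that $\|(M-\lambda)\psi_{m_k}\|_{\H_w}<\epsilon_k/2$ and $\|\zeta_{n_k}(Q)\psi_{m_k}\|_{\H_w}<1/2$; finally I normalise, setting $\phi_k:=\phi_{m_k,n_k}/\|\phi_{m_k,n_k}\|_{\H_w}$. Then $\|\phi_k\|_{\H_w}=1$, $\phi_k(x)=0$ for $|x|\le n_k\ge k$ (so after relabelling $k\mapsto m$ property (b) holds in the form stated), and $\|(M-\lambda)\phi_k\|_{\H_w}\le 2\epsilon_k\to0$. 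Thus $\{\phi_k\}$ is the desired Zhislin sequence. The main obstacle is purely the bookkeeping of this double limit: one must be careful that the smallness of the commutator term is uniform in $m$ (which is exactly what the operator-norm bound in Lemma \ref{lemma_ess_M}(b) provides) while the smallness of the other two terms is extracted in $m$ for the already-fixed $n$; the compactness input of Lemma \ref{lemma_ess_M}(a) and the weak nullity of the Weyl sequence are what make the latter work.
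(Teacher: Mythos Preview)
Your proof is correct and is precisely the argument the paper invokes: the paper's proof merely says that, given Lemma~\ref{lemma_ess_M}, one can repeat step by step the proof of \cite[Thm.~10.6]{HS96}, and what you have written is exactly that argument (local compactness to kill $\zeta_n(Q)\psi_m$ via weak nullity of the Weyl sequence, the commutator bound to control $[M,\zeta_n(Q)]\psi_m$ uniformly in $m$, then a diagonal extraction). One small cosmetic point: when choosing $n_k$ you should also impose $n_k\ge k$ so that property~(b) holds verbatim after relabelling, but this is clearly compatible with your construction.
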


\begin{proof}
In view of Lemma \ref{lemma_ess_M}, the claim can be proved by repeating step by step
the arguments in the proof of \cite[Thm.~10.6]{HS96}.
\end{proof}

We are now ready to complete the description of the essential spectrum of $M:$

\begin{proof}[Proof of Proposition \ref{proposition_ess_M}]
Take $\lambda\in\sigma_{\rm ess}(M)$, let
$\{\phi_m\}_{m\in\N\setminus\{0\}}\subset\dom(M)$ be an associated Zhislin sequence,
and define for each $m\in\N\setminus\{0\}$
$$
\phi^0_m:=c_m\;\!(j_\ell\;\!\phi_m,j_{\rm r}\;\!\phi_m)\in\dom(M_0)
\quad\hbox{with}\quad
c_m:=\big\|\big(j_\ell\;\!\phi_m,j_{\rm r}\;\!\phi_m\big)\big\|_{\H_0}^{-1}.
$$
Then one has $\big\|\phi^0_m\big\|_{\H_0}=1$ for all $m\in\N\setminus\{0\}$ and
$\phi^0_m(x)=0$ if $|x|\le m$. Furthermore, using successively the facts that
$j_\ell\;\!j_{\rm r}=0$, that $(j_\ell+j_{\rm r})\phi_m=\phi_m$, and that
$1=\|\phi_m\|^2_{\H_w}=\langle\phi_m,w^{-1}\phi_m\rangle_{\ltwo(\R,\C^2)}$, one
obtains that
$$
c^{-2}_m
=\big\langle\phi_m,\big(w^{-1}_\ell j_\ell
+w^{-1}_{\rm r}j_{\rm r}\big)\phi_m\big\rangle_{\ltwo(\R,\C^2)}
=1+\big\langle\phi_m,\big(w^{-1}_\ell j_\ell
+w^{-1}_{\rm r}j_{\rm r}-w^{-1}\big)\phi_m\big\rangle_{\ltwo(\R,\C^2)},
$$
which implies that $\lim_{m\to \infty}c_m=1$ due to Assumption \ref{ass_weight}.

Now, consider the inequalities
\begin{align*}
\big\|(M_0-\lambda)\phi^0_m\big\|_{\H_0}^2
&=c^2_m\sum_{\star\in\{\ell,\rm r\}}\big\|(M_\star-\lambda)
\;\!j_\star\phi_m\big\|^2_{\H_{w_\star}}\\
&\le c^2_m\sum_{\star\in\{\ell,\rm r\}}
\left(\big\|(M-\lambda)\;\!j_\star\phi_m\big\|_{\H_{w_\star}}
+\big\|(M_\star-M)\;\!j_\star\phi_m\big\|_{\H_{w_\star}}\right)^2\\
&\le c^2_m\sum_{\star\in\{\ell,\rm r\}}
\left(\big\|j_\star(M-\lambda)\phi_m\big\|_{\H_{w_\star}}
+\big\|[M,j_\star]\;\!\phi_m\big\|_{\H_{w_\star}}
+\big\|(w_\star-w)D\;\!j_\star\phi_m\big\|_{\H_{w_\star}}\right)^2.
\end{align*}
From the property (c) of Zhislin sequences, the boundedness of $j_\star$, and the
equivalence of the norms of $\H_{w_\star}$ and $\H_w$, one gets that
$$
\lim_{m\to\infty}\big\|\;\!j_\star(M-\lambda)\phi_m\big\|_{\H_{w_\star}}
\le{\rm Const.}\lim_{m\to\infty}\big\|(M-\lambda)\phi_m\big\|_{\H_w}=0.
$$
Moreover, one has
$
[M,j_\star]\;\!\phi_m
=-iwj_\star'\big(\begin{smallmatrix}0&1\\1&0\end{smallmatrix}\big)\phi_m
$,
with $j_\star'$ supported in $[-1,1]$. This implies that $[M,j_\star]\;\!\phi_m=0$.
For the same reason, one has $Dj_\star\phi_m=j_\star D\phi_m$, with the latter vector
supported in $x\le-m$ if $\star=\ell$ and in $x\ge m$ if $\star={\rm r}$. This,
together with Assumption \ref{ass_weight}, implies that
$$
\big\|(w_\star-w)Dj_\star\phi_m\big\|_{\H_{w_\star}}
\le{\rm Const.}\;\!\langle m\rangle^{-1-\varepsilon}\|D\phi_m\|_{\H_{w_\star}}
\le{\rm Const.}\;\!\langle m\rangle^{-1-\varepsilon}\|M\phi_m\|_{\H_w}.
$$
The last inequality, along with the equality
$\lim_{m\to\infty}\|M\phi_m\|_{\H_w}=|\lambda|$ (which follows from the property (c)
of Zhislin sequences), gives
$$
\lim_{m\to\infty}\big\|(w_\star-w)Dj_\star\phi_m\big\|_{\H_{w_\star}}=0.
$$
Putting all the pieces together, we obtain that
$\lim_{m\to\infty}\big\|(M_0-\lambda)\phi^0_m\big\|_{\H_0}=0$. This concludes the
proof that $\{\phi^0_m\}_{m\in\N\setminus\{0\}}$ is a Zhislin sequence for the
operator $M_0$ and the point $\lambda\in\sigma_{\rm ess}(M)$, and thus that
$\sigma_{\rm ess}(M)\subset\sigma_{\rm ess}(M_0)$.

For the opposite inclusion, take $\{\phi^0_m\}_{m\in\N\setminus\{0\}}\subset\dom(M_0)$
a Zhislin sequence for the operator $M_0$ and the point
$\lambda\in\sigma_{\rm ess}(M_0)$, and assume that
$\lambda\in\sigma_{\rm ess}(M_{\rm r})$ (if
$\lambda\notin\sigma_{\rm ess}(M_{\rm r})$, then $\lambda\in\sigma_{\rm ess}(M_\ell)$
and the same proof applies if one replaces ``right'' with ``left''). By extracting the
nonzero right components from $\phi^0_m$ and normalising, we can form a new Zhislin
sequence $\{(0,\phi^{\rm r}_m)\}_{m\in\N\setminus\{0\}}$ for $M_0$ with
$\{\phi^{\rm r}_m\}_{m\in\N\setminus\{0\}}\subset\dom(M_{\rm r})$ a Zhislin sequence
for $M_{\rm r}$. Then we can construct as follows a new Zhislin sequence for
$M_{\rm r}$ with vectors supported in $[m,\infty)$\;\!: Let
$\zeta^{\rm r}\in C^\infty_{\rm c}(\R,[0,1])$ satisfy $\zeta^{\rm r}(x)=0$ for $x\le1$
and $\zeta^{\rm r}(x)=1$ for $x\ge2$, set $\zeta^{\rm r}_m(x):=\zeta^{\rm r}(x/m)$ for
all $x\in\R$ and $m\in\N\setminus\{0\}$, and choose $n_m^{\rm r}\in\N$ such that
$
\big\|\chi_{[-n_m^{\rm r},\infty)}(Q_{\rm r})\phi_m^{\rm r}\big\|_{\H_{w_{\rm r}}}
\in(1-1/m,1]
$.
Next, define for each $m\in\N\setminus\{0\}$
$$
\widetilde\phi_m^{\rm r}
:=d_m\;\!\zeta^{\rm r}_m(Q_{\rm r})T_{k_m^{\rm r}p_{\rm r}}\phi_m^{\rm r}
\in\dom(M_{\rm r}),
$$
with
$
d_m:=\big\|\zeta^{\rm r}_m(Q_{\rm r})T_{k_m^{\rm r}p_{\rm r}}
\phi_m^{\rm r}\big\|^{-1}_{\H_{w_{\rm r}}}
$,
$k_m^{\rm r}\in\N$ such that $k_m^{\rm r}p_{\rm r}\ge n_m^{\rm r}+2m$, and
$T_{k_m^{\rm r}p_{\rm r}}$ the operator of translation by $k_m^{\rm r}p_{\rm r}$. One
verifies easily that $\lim_{m\to\infty}d_m=1$ and that $\widetilde\phi_m^{\rm r}$ has
support in $[m,\infty)$. Furthermore, since the operators $M_{\rm r}$ and
$T_{k_m^{\rm r}p_{\rm r}}$ commute, one also has
\begin{align*}
\big\|(M_{\rm r}-\lambda)\widetilde\phi_m^{\rm r}\big\|_{\H_{w_{\rm r}}}
&\le d_m\;\!\big\|\zeta^{\rm r}_m(Q_{\rm r})T_{k_m^{\rm r}p_{\rm r}} 
(M_{\rm r}-\lambda) \phi_m^r\big\|_{\H_{w_{\rm r}}}
+d_m\;\!\big\|[M_{\rm r},\zeta^{\rm r}_m(Q_{\rm r})]T_{k_m^{\rm r}p_{\rm r}}
\phi_m^{\rm r}\big\|_{\H_{w_{\rm r}}} \\
&\le d_m\;\!\big\|(M_{\rm r}-\lambda)\phi_m^r\big\|_{\H_{w_{\rm r}}}
+{\rm Const.}\;\!m^{-1}\big\|\phi_m^{\rm r}\big\|_{\H_{w_{\rm r}}},
\end{align*}
which implies that
$
\lim_{m\to\infty}
\big\|(M_{\rm r}-\lambda)\widetilde\phi_m^{\rm r}\big\|_{\H_{w_{\rm r}}}=0
$.
Thus, $\{\widetilde\phi_m^{\rm r}\}_{m\in\N\setminus\{0\}}\subset\dom(M_{\rm r})$ is a
new Zhislin sequence for $M_{\rm r}$ with $\widetilde\phi_m^{\rm r}$ supported in
$[m,\infty)$.

Now, define for each $m\in\N\setminus\{0\}$
$$
\phi_m:=b_m\;\!\widetilde\phi_m^{\rm r}\in\dom(M)
\quad\hbox{with}\quad
b_m:=\big\|\widetilde\phi_m^{\rm r}\big\|^{-1}_{\H_w}.
$$
Then one has $\|\phi_m\|_{\H_w}=1$ for all $m\in\N\setminus\{0\}$ and $\phi_m(x)=0$
if $x\le m$. Furthermore, using that
$$
1
=\;\!\big\|\widetilde\phi_m^{\rm r}\big\|^2_{\H_{w_{\rm r}}}
=\langle\widetilde\phi_m^{\rm r},
w^{-1}_{\rm r}\widetilde\phi_m^{\rm r}\rangle_{\ltwo(\R,\C^2)},
$$
one obtains that
$$
b_m^{-2}
:=\big\langle\widetilde\phi_m^{\rm r},w^{-1}\widetilde\phi_m^{\rm r}
\big\rangle_{\ltwo(\R,\C^2)}
=1+\big\langle\widetilde\phi_m^{\rm r},
\big(w^{-1}-w^{-1}_{\rm r}\big)\widetilde\phi_m^{\rm r}\big\rangle_{\ltwo(\R,\C^2)},
$$
which implies that $\lim_{m\to \infty}b_m=1$ due to Assumption \ref{ass_weight}. Now,
consider the inequality
$$
\big\|(M-\lambda)\phi_m\big\|_{\H_w}
=b_m\;\!\big\|(M-\lambda)\widetilde\phi_m^{\rm r}\big\|_{\H_w}
\le b_m\;\!\big\|(M_{\rm r}-\lambda)\;\!\widetilde\phi_m^{\rm r}\big\|_{\H_w}
+b_m\;\!\big\|(w-w_{\rm r})D\widetilde\phi_m^{\rm r}\big\|_{\H_w}.
$$
From the property (c) of Zhislin sequences and the equivalence of the norms of
$\H_{w_{\rm r}}$ and $\H_w$, one gets that
$$
\lim_{m\to\infty}\big\|(M_{\rm r}-\lambda)\;\!\widetilde\phi_m^{\rm r}\big\|_{\H_w}
\le{\rm Const.}\lim_{m\to\infty}
\big\|(M_{\rm r}-\lambda)\;\!\widetilde\phi_m^{\rm r}\big\|_{\H_{w_{\rm r}}}=0.
$$
Moreover, since $D\widetilde\phi_m^{\rm r}$ is supported in $[m,\infty)$, one infers
again from Assumption \ref{ass_weight} that
$$
\big\|(w-w_{\rm r})D\widetilde\phi_m^{\rm r}\big\|_{\H_w}
\le{\rm Const.}\;\!\langle m\rangle^{-1-\varepsilon}
\big\|D\widetilde\phi_m^{\rm r}\big\|_{\H_w}
\le{\rm Const.}\;\!\langle m\rangle^{-1-\varepsilon}
\big\|M_{\rm r}\widetilde\phi_m^{\rm r}\big\|_{\H_{w_{\rm r}}}.
$$
The last inequality, along with the equality
$
\lim_{m\to\infty}\big\|M_{\rm r}\widetilde\phi_m^{\rm r}\big\|_{\H_{w_{\rm r}}}
=|\lambda|
$,
gives
$$
\lim_{m\to\infty}\big\|(w-w_{\rm r})D\widetilde\phi_m^{\rm r}\big\|_{\H_w}=0.
$$
Putting all the pieces together, we obtain that
$\lim_{m\to\infty}\big\|(M-\lambda)\phi_m\big\|_{\H_w}=0$. This concludes the proof
that $\{\phi_m\}_{m\in\N\setminus\{0\}}$ is a Zhislin sequence for the operator $M$
and the point $\lambda\in\sigma_{\rm ess}(M_0)$, and thus that
$\sigma_{\rm ess}(M_0)\subset\sigma_{\rm ess}(M)$. In consequence, we obtained that
$\sigma_{\rm ess}(M)=\sigma_{\rm ess}(M_0)$, which completes the proof in view of
Remark \ref{rem_ess_M_0}.
\end{proof}

In order to determine more precise spectral properties of $M$, we now prove that for
each compact interval $I\subset\R\setminus\Tau_M$ the Hamiltonian $M$ is of class
$C^{1+\varepsilon}(A_I)$ for some $\varepsilon\in(0,1)$, which is a regularity
condition slightly stronger than the condition $M$ of class $C^1(A_I)$ already
established in Theorem \ref{thm_regul}. We start by giving a convenient formula for
the commutator $[(M-z)^{-1},A_I]$, $z\in\C\setminus\R$, in the form sense on
$\S(\R,\C^2):$
\begin{align*}
&[(M-z)^{-1},A_I]\\
&=\big((M-z)^{-1}J-J(M_0-z)^{-1}\big)A_{0,I}J^*
-JA_{0,I}\big(J^*(M-z)^{-1}-(M_0-z)^{-1}J^*\big)\\
&\quad+J\;\![(M_0-z)^{-1},A_{0,I}]J^*\\
&=\sum_{\star\in\{\ell,{\rm r}\}}\big\{
\big((M-z)^{-1}-(M_\star-z)^{-1}\big)j_\star A_{\star,I}Z^\star
+[(M_\star-z)^{-1},j_\star]A_{\star,I}Z^\star\\
&\quad-j_\star A_{\star,I}Z^\star\big((M-z)^{-1}-(M_\star-z)^{-1}\big)
+j_\star A_{\star,I}[(M_\star-z)^{-1},Z^\star]
+j_\star[(M_\star-z)^{-1},A_{\star,I}]Z^\star\big\}\\
&=\sum_{\star\in\{\ell,{\rm r}\}}
\big(C_\star+j_\star[(M_\star-z)^{-1},A_{\star,I}]Z^\star\big)
\end{align*}
with
$$
Z^\star
:=w_\star w^{-1}j_\star
=j_\star-(w-w_\star)\;\!j_\star w^{-1},
$$
and
\begin{align*}
C_\star
&:=\big((M-z)^{-1}-(M_\star-z)^{-1}\big)\;\!j_\star A_{\star,I}Z^\star
+[(M_\star-z)^{-1},j_\star]A_{\star,I}Z^\star\\
&\quad-j_\star A_{\star,I}Z^\star\big((M-z)^{-1}-(M_\star-z)^{-1}\big)
+j_\star A_{\star,I}[(M_\star-z)^{-1},Z^\star].
\end{align*}
As already shown in the previous section, all the terms in $C_\star$ extend to bounded
operators, and we keep the same notation for these extensions.

In order to show that $(M-z)^{-1}\in C^{1+\varepsilon}(A_I)$, it is enough to prove
that $j_\star[(M_\star-z)^{-1},A_{\star,I}]Z^\star\in C^1(A_I)$ and to check
that
\begin{equation}\label{eq_regularity}
\big\|\e^{-itA_I}C_\star\e^{itA_I}-C_\star\big\|_{\B(\H_w)}
\le{\rm Const.}\;\!t^\varepsilon\quad\hbox{for all $t\in(0,1)$.}
\end{equation}
Since the first proof reduces to computations similar to the ones presented in the
previous section, we shall concentrate on the proof of \eqref{eq_regularity}. First of
all, algebraic manipulations as presented in \cite[pp.~325-326]{ABG96} or
\cite[Sec.~4.3]{RST_1} show that for all $t\in(0,1)$
\begin{align*}
\big\|\e^{-itA_I}C_\star\e^{itA_I}-C_\star\big\|_{\B(\H_w)}
&\le{\rm Const.}\;\!\Big(\big\|\sin(tA_I)C_\star\big\|_{\B(\H_w)}
+\big\|\sin(tA_I)(C_\star)^*\big\|_{\B(\H_w)}\Big)\\
&\le{\rm Const.}\;\!\Big(\big\|tA_I\;\!(tA_I+i)^{-1}C_\star\big\|_{\B(\H_w)}
+\big\|tA_I\;\!(tA_I+i)^{-1}(C_\star)^*\big\|_{\B(\H_w)}\Big).
\end{align*}
Furthermore, if we set $A_t:=tA_I\;\!(tA_I+i)^{-1}$ and
$\Lambda_t:=t\;\!\langle Q\rangle(t\;\!\langle Q\rangle+i)^{-1}$, we obtain that
$$
A_t=\big(A_t+i(tA_I +i)^{-1}A_I\;\!\langle Q\rangle^{-1}\big)\Lambda_t
$$
with $A_I\langle Q\rangle^{-1}\in\B(\H_w)$ due to \eqref{def_A_star}. Thus, since
$\big\|A_t+i(tA_I +i)^{-1}A_I\;\!\langle Q\rangle^{-1}\big\|_{\B(\H_w)}$ is bounded by
a constant independent of $t\in(0,1)$, it is sufficient to prove that
$$
\|\Lambda_t C_\star\|_{\B(\H_w)}+\|\Lambda_t(C_\star)^*\|_{\B(\H_w)}
\le{\rm Const.}\;\!t^\varepsilon\quad\hbox{for all $t\in(0,1)$,}
$$
and to prove this estimate it is sufficient to show that the operators
$\langle Q\rangle^\varepsilon C_\star$ and $\langle Q\rangle^\varepsilon(C_\star)^*$
defined in the form sense on $\S(\R,\C^2)$ extend continuously to elements of
$\B(\H_w)$. Finally, some lengthy but straightforward computations show that these two
last conditions are implied by the following two lemmas\;\!:

\begin{Lemma}\label{lemma_M_Q}
$M_\star$ is of class $C^1(\langle Q\rangle^\alpha)$ for each
$\star\in\{\ell,{\rm r}\}$ and $\alpha\in [0,1]$.
\end{Lemma}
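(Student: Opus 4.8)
The plan is to show that the commutator $[M_\star,\langle Q\rangle^\alpha]$, computed on the core $\S(\R,\C^2)$, extends to a \emph{bounded} operator on $\H_{w_\star}$, and then to upgrade this into the $C^1(\langle Q\rangle^\alpha)$-property of $M_\star$ by the standard group argument based on the invariance of $\dom(M_\star)$ under $\e^{it\langle Q\rangle^\alpha}$.

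First I would collect the preliminary facts. Since $\langle Q\rangle^\alpha$ is multiplication by the scalar function $\langle x\rangle^\alpha$, it commutes with $w_\star^{-1}$, so (exactly as for $Q_\star$ in Remark \ref{rem_Q}) it is self-adjoint in $\H_{w_\star}$ and essentially self-adjoint on $\S(\R,\C^2)$. Moreover $\S(\R,\C^2)$ is a core for $M_\star$, because $\H^1(\R,\C^2)$ with its Sobolev norm coincides with $\dom(M_\star)$ equipped with the graph norm, up to equivalence of norms (use the bounds $c_0\le w_\star\le c_1$ of Lemma \ref{lemma_self}(a)); and $\S(\R,\C^2)$ is left invariant both by $\langle Q\rangle^\alpha$ and by $\e^{it\langle Q\rangle^\alpha}$, the latter being multiplication by the smooth function $\e^{it\langle x\rangle^\alpha}$, all of whose derivatives are polynomially bounded.

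The heart of the proof is then a one-line computation. On $\S(\R,\C^2)$, the scalar multiplication operator $\langle Q\rangle^\alpha$ commutes with $w_\star$, so that
\[
[M_\star,\langle Q\rangle^\alpha]\varphi
=w_\star[D,\langle Q\rangle^\alpha]\varphi
=-i\alpha\;\!w_\star\big(x\langle x\rangle^{\alpha-2}\big)
\begin{pmatrix}0&1\\1&0\end{pmatrix}\varphi ,
\]
where I used $[P,\langle Q\rangle^\alpha]=-i\;\!\partial_x\langle x\rangle^\alpha=-i\alpha\;\!x\langle x\rangle^{\alpha-2}$. For $\alpha\in[0,1]$ one has $|x|\langle x\rangle^{\alpha-2}\le\langle x\rangle^{\alpha-1}\le1$, hence $x\langle x\rangle^{\alpha-2}\in\linfty(\R)$; together with $w_\star\in\linfty(\R,\B(\C^2))$ and the equivalence of the norms of $\ltwo(\R,\C^2)$ and $\H_{w_\star}$ (Lemma \ref{lemma_self}(b)), this shows that the form $[M_\star,\langle Q\rangle^\alpha]$ extends from $\S(\R,\C^2)$ to a bounded operator $C_\star\in\B(\H_{w_\star})$. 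Note that no derivative of $w_\star$ enters this computation, which is precisely why the mere $\linfty$-regularity of $w_\star$ causes no difficulty here, in contrast with commutators involving $P$ alone.

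Finally, to conclude, I would run the usual argument: for $\varphi\in\S(\R,\C^2)$ the map $t\mapsto\e^{-it\langle Q\rangle^\alpha}M_\star\e^{it\langle Q\rangle^\alpha}\varphi$ is differentiable in $\H_{w_\star}$ with derivative $i\;\!\e^{-it\langle Q\rangle^\alpha}C_\star\e^{it\langle Q\rangle^\alpha}\varphi$ — this uses the invariance $\e^{it\langle Q\rangle^\alpha}\S(\R,\C^2)\subset\S(\R,\C^2)\subset\dom(M_\star)$ and the commutator identity just obtained — so that
\[
\big\|M_\star\e^{it\langle Q\rangle^\alpha}\varphi\big\|_{\H_{w_\star}}
\le\|M_\star\varphi\|_{\H_{w_\star}}
+|t|\;\!\|C_\star\|_{\B(\H_{w_\star})}\;\!\|\varphi\|_{\H_{w_\star}} .
\]
Since $\S(\R,\C^2)$ is a core for the closed operator $M_\star$, a density argument then yields $\e^{it\langle Q\rangle^\alpha}\dom(M_\star)\subseteq\dom(M_\star)$ and $\sup_{|t|\le1}\big\|M_\star\e^{it\langle Q\rangle^\alpha}\varphi\big\|_{\H_{w_\star}}<\infty$ for all $\varphi\in\dom(M_\star)$, and the $C^1$-criterion recalled in Section \ref{sec_com} (see \cite[Thm.~6.2.10]{ABG96}) gives $M_\star\in C^1(\langle Q\rangle^\alpha)$ with $[M_\star,\langle Q\rangle^\alpha]=C_\star$. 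I expect the only genuinely delicate point to be this last upgrade from ``bounded commutator form on a core'' to the $C^1(\langle Q\rangle^\alpha)$-property; everything else reduces to the short computation above, and the restriction $\alpha\le1$ is exactly what keeps $\partial_x\langle x\rangle^\alpha$ (equivalently $x\langle x\rangle^{\alpha-2}$) bounded.
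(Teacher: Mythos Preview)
Your proposal is correct and follows essentially the same route as the paper: compute the commutator $[M_\star,\langle Q\rangle^\alpha]$ explicitly on the core $\S(\R,\C^2)$, observe it is a bounded multiplication operator since $\alpha\le1$, check that the group $\e^{it\langle Q\rangle^\alpha}$ preserves $\dom(M_\star)=\H^1(\R,\C^2)$, and then invoke the standard criterion from \cite{ABG96}. The paper's proof is a two-line sketch of exactly this argument; your version simply spells out the commutator identity and the domain-invariance step in more detail. One small bibliographic point: the precise reference the paper uses is \cite[Thm.~6.3.4(a)]{ABG96} rather than Thm.~6.2.10, the former being the criterion that combines invariance of $\dom(H)$ under $\e^{itA}$ with boundedness of $[H,A]$ on a core to conclude $H\in C^1(A)$.
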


\begin{proof}
One can verify directly that the unitary group generated by the operator
$\langle Q\rangle^\alpha$ leaves the domain $\dom(M_\star)=\H^1(\R,\C^2)$ invariant
and that the commutator $[M_\star,\langle Q\rangle^\alpha]$ defined in the form sense
on $\S(\R,\C^2)$ extends continuously to a bounded operator. Since the set
$\S(\R,\C^2)$ is a core for $M_\star$, these properties together with
\cite[Thm.~6.3.4(a)]{ABG96} imply the claim.
\end{proof}

\begin{Lemma}\label{lemma_j_star_C1}
One has $j_\star\in C^1(A_{\star,I})$ for each $\star\in\{\ell,{\rm r}\}$ and each
compact interval $I\subset\R\setminus\Tau_M$.
\end{Lemma}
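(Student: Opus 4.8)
The plan is to verify directly the $C^1$ criterion recalled in Section~\ref{sec_com}. Since $\S(\R,\C^2)$ is a core for $A_{\star,I}$ (Theorem~\ref{thm_Mourre_Free}(a) and the discussion following it) and $j_\star\,\S(\R,\C^2)\subseteq\S(\R,\C^2)\subseteq\dom(A_{\star,I})$ — because $j_\star\varphi$ is again Schwartz when $\varphi$ is, $j_\star$ being smooth and constant outside $[-1,1]$ — it is enough to show that the commutator $[j_\star,A_{\star,I}]$, defined as a form on $\S(\R,\C^2)$, extends to an element of $\B(\H_{w_\star})$. On $\S(\R,\C^2)$ one has, by \eqref{def_A_star} and the fact that $\widehat{Q_\star}$ acts as $i\partial_k$, the identity $A_{\star,I}=Q_\star F_{\star,I}+G_{\star,I}$ with $F_{\star,I}=\sum_{n\in\N(I)}\Pi_{\star,n}L_{\star,n}$, where $L_{\star,n}:=\U^{-1}_\star\widehat\lambda'_{\star,n}\U_\star$, and $G_{\star,I}$ a finite linear combination of products of the operators $\Pi_{\star,n}$, $L_{\star,n}$, $[Q_\star,\Pi_{\star,n}]$, $[Q_\star,L_{\star,n}]$, all of which are bounded in $\H_{w_\star}$ (proof of Proposition~\ref{prop_A_esa} and Lemma~\ref{lema_compact_closure}).

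Next I would reduce the problem to two boundedness statements. Since $j_\star$ and $Q_\star$ are both multiplication operators, $[j_\star,Q_\star]=0$, hence $[j_\star,A_{\star,I}]=Q_\star[j_\star,F_{\star,I}]+[j_\star,G_{\star,I}]$ on $\S(\R,\C^2)$. The term $[j_\star,G_{\star,I}]$ is bounded, being a commutator of the bounded operator $j_\star$ with the bounded operator $G_{\star,I}$. For the first term, $[j_\star,F_{\star,I}]=\sum_{n\in\N(I)}\big([j_\star,\Pi_{\star,n}]L_{\star,n}+\Pi_{\star,n}[j_\star,L_{\star,n}]\big)$, and pushing $Q_\star$ to the right through the bounded factor $\Pi_{\star,n}$ at the cost of the bounded operator $[Q_\star,\Pi_{\star,n}]$ gives
$$
Q_\star[j_\star,F_{\star,I}]=\sum_{n\in\N(I)}\Big(\big(Q_\star[j_\star,\Pi_{\star,n}]\big)L_{\star,n}+\Pi_{\star,n}\big(Q_\star[j_\star,L_{\star,n}]\big)\Big)+B,
$$
with $B\in\B(\H_{w_\star})$. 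Thus everything reduces to showing that $Q_\star[j_\star,\Pi_{\star,n}]$ and $Q_\star[j_\star,L_{\star,n}]$ (as forms on $\S(\R,\C^2)$) extend to elements of $\B(\H_{w_\star})$ for each $n\in\N(I)$.

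The two structural inputs are: \emph{(i)} $j_\star$ is asymptotically constant with $\supp(j_\star')$ compact, which yields the elementary pointwise bound $|x|\le|x-y|+1$ on the set $\{(x,y)\mid j_\star(x)\ne j_\star(y)\}$; and \emph{(ii)} the operators $\Pi_{\star,n}$ and $L_{\star,n}$ have rapidly decaying kernels. For $L_{\star,n}$ this is immediate: $L_{\star,n}=\sum_{m\in\Z}c_m\,T_{mp_\star}$, where the $c_m$ are the Fourier coefficients of the smooth $\Gamma^*_\star$-periodic function $\lambda'_{\star,n}$ (by analyticity of the band functions, see Theorem~\ref{thm_Rellich}), so $\sum_{m\in\Z}|c_m|(1+|m|)<\infty$; moreover $Q_\star[j_\star,T_{mp_\star}]$ is multiplication by $x\big(j_\star(x)-j_\star(x-mp_\star)\big)$ followed by $T_{mp_\star}$, and this function is supported in $\{|x|\le|m|p_\star+1\}$, whence $\|Q_\star[j_\star,T_{mp_\star}]\|_{\B(\H_{w_\star})}\le|m|p_\star+1$ and $\|Q_\star[j_\star,L_{\star,n}]\|_{\B(\H_{w_\star})}\le\sum_{m\in\Z}|c_m|(|m|p_\star+1)<\infty$. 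For $\Pi_{\star,n}$ I would invoke the analyticity of the Bloch fibers (Proposition~\ref{prop_fibered}, Theorem~\ref{thm_Rellich}): the fiber $k\mapsto\widehat\Pi_{\star,n}(k)$ is not $\Gamma^*_\star$-periodic but $\tau$-covariant, and conjugating it by the unitary multiplication operator $\theta\mapsto\e^{i\theta k}$ turns it into a genuinely $\Gamma^*_\star$-periodic analytic $\B(\h_\star)$-valued function, the conjugating phase being exactly absorbed by the oscillatory factor $\e^{ik(x-y)}$ of the inverse Bloch–Floquet transform; a Paley–Wiener (Combes–Thomas type) estimate then gives, for the integral kernel $K_{\star,n}$ of $\Pi_{\star,n}$, the bound $|K_{\star,n}(x,y)|\le C_N\langle x-y\rangle^{-N}$ for every $N$. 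Consequently $Q_\star[j_\star,\Pi_{\star,n}]$ has kernel $x\big(j_\star(x)-j_\star(y)\big)K_{\star,n}(x,y)$, bounded in modulus by $(|x-y|+1)\,C_N\langle x-y\rangle^{-N}$ on its support, which for $N$ large passes Schur's test; hence $Q_\star[j_\star,\Pi_{\star,n}]\in\B(\H_{w_\star})$.

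Plugging these two facts back into the expansion of $[j_\star,A_{\star,I}]$ shows that this commutator form extends to an element of $\B(\H_{w_\star})$, so that $j_\star\in C^1(A_{\star,I})$. The main obstacle is input \emph{(ii)} for $\Pi_{\star,n}$: one must establish the rapid (indeed exponential) decay of the band-projection kernel, which is not quite a black box because it requires carefully separating the $\tau$-covariance of $\widehat\Pi_{\star,n}(k)$ from the genuine periodicity recovered after the phase conjugation, before the Paley–Wiener argument can be applied. The remaining steps — the Leibniz expansion, the commutator bookkeeping, and the $L_{\star,n}$ estimate — are routine computations of the kind already carried out in Sections~\ref{sec_conj_free} and~\ref{sec_conj_full}.
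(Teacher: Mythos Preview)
Your argument is correct, modulo a minor algebraic slip: from the definition \eqref{def_A_star} one actually gets $F_{\star,I}=\sum_{n\in\N(I)}\Pi_{\star,n}L_{\star,n}\Pi_{\star,n}$, with a second projection on the right. This does not affect the reduction, which still comes down to the boundedness of $Q_\star[j_\star,\Pi_{\star,n}]$ and $Q_\star[j_\star,L_{\star,n}]$ after the obvious additional commutation.

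Your route, however, is genuinely different from the paper's. The paper never writes out the kernel of $\Pi_{\star,n}$ nor invokes a Schur test; instead it applies the abstract commutator-expansion formula of \cite[Thm.~5.5.3]{ABG96} to write
\[
[j_\star,A_{\star,I}]
=\tfrac1{\sqrt{2\pi}}\int_0^1\!\d\tau\!\int_\R\!\d x\,
\e^{i\tau x Q_\star}\,[Q_\star,A_{\star,I}]\,\e^{i(1-\tau)xQ_\star}\,\widehat{j_\star'}(x),
\]
computes $[Q_\star,A_{\star,I}]$ explicitly in terms of the blocks $\Pi_{\star,n}\widecheck\lambda'_{\star,n}\Pi_{\star,n}$, and then argues that the only potentially unbounded contributions, of the form $f(B)\,Q_\star+Q_\star\,f(B)$ with $B=[Q_\star,\Pi_{\star,n}\widecheck\lambda'_{\star,n}\Pi_{\star,n}]$ and $f$ the above integral map, are in fact bounded because $B\in C^k(Q_\star)$ for every $k$, $\widehat{j_\star'}$ is Schwartz, and hence $f$ is regularising in the Besov scale associated with $Q_\star$. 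Both proofs rest on the same analyticity in $k$ of the band data, but the paper packages it as ``$B\in C^k(Q_\star)$ for all $k$ plus Besov regularisation'' and treats $\Pi_{\star,n}\widecheck\lambda'_{\star,n}\Pi_{\star,n}$ as a single object, whereas you unpack it as ``Paley--Wiener/Combes--Thomas kernel decay plus Schur test'' and work term by term. Your version is more explicit and self-contained (no appeal to Besov-scale machinery), at the price of having to disentangle the $\tau$-covariance of $\widehat\Pi_{\star,n}$ by hand before the Paley--Wiener estimate applies; the paper's version is shorter once one is willing to quote \cite{ABG96}.
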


\begin{proof}
By using the commutator expansions \cite[Thm.~5.5.3]{ABG96} and \eqref{def_A_star},
one gets the following equalities in form sense on $\S(\R,\C^2):$
\begin{align*}
&\big[j_\star,A_{\star,I}\big]\\
&=\tfrac1{\sqrt{2\pi}}\int_0^1\d\tau\int_\R\d x\,\e^{i\tau x Q_\star}
[Q_\star,A_{\star,I}]\e^{i(1-\tau)xQ_\star}\widehat{j_\star'}(x)\\
&=\tfrac1{2\sqrt{2\pi}}\sum_{n\in\N(I)}\int_0^1\d\tau\int_\R\d x\,\e^{i\tau x Q_\star}
\big[Q_\star,\Pi_{\star,n}\big(\widecheck\lambda'_{\star,n}\;\!Q_\star
+Q_\star\;\!\widecheck\lambda'_{\star,n}\big)\Pi_{\star,n}\big]
\e^{i(1-\tau)xQ_\star}\widehat{j_\star'}(x)\\
&=\tfrac1{2\sqrt{2\pi}}\sum_{n\in\N(I)}\int_0^1\d\tau\int_\R\d x\,\e^{i\tau x Q_\star}
\big\{[Q_\star,\Pi_{\star,n}\widecheck\lambda'_{\star,n}\Pi_{\star,n}]Q_\star
+Q_\star[Q_\star,\Pi_{\star,n}\widecheck\lambda'_{\star,n}\Pi_{\star,n}]\\
&\quad+\big[Q_\star,\Pi_{\star,n}\widecheck\lambda_{\star,n}[Q_\star,\Pi_{\star,n}]\big]
-\big[Q_\star,[Q_\star,\Pi_{\star,n}]\;\!\widecheck\lambda_{\star,n}
\Pi_{\star,n}\big]\big\}\e^{i(1-\tau)xQ_\star}\widehat{j_\star'}(x)
\end{align*}
with $\widecheck\lambda'_{\star,n}:=\U_\star^{-1}\widehat\lambda'_{\star,n}\U_\star$
and with each commutator in the last equality extending continuously to a bounded
operator. Since $\widehat{j_\star'}$ is integrable, the last two terms give bounded
contributions. Furthermore, the first two terms can be rewritten as
$$
f\big([Q_\star,\Pi_{\star,n}\widecheck\lambda'_{\star,n}\Pi_{\star,n}]\big)
Q_\star+Q_\star f\big(\big[Q_\star,\Pi_{\star,n}\widecheck\lambda'_{\star,n}
\Pi_{\star,n}\big]\big)
$$
with
$$
f:\B(\H_{w_\star})\to\B(\H_{w_\star}),
~~B\mapsto f(B):=\tfrac1{2\sqrt{2\pi}}\sum_{n\in\N(I)}\int_0^1\d\tau\int_\R\d x\,
\e^{i\tau x Q_\star}B\e^{i(1-\tau)xQ_\star}\widehat{j_\star'}(x).
$$
But, since
$
[Q_\star,\Pi_{\star,n}\widecheck\lambda'_{\star,n}\Pi_{\star,n}]
\in C^k(Q_\star)
$
for each $k\in\N$, and since $\widehat{j_\star'}$ is a Schwartz function, one infers
from \cite[Thm.~5.5.3]{ABG96} that the operator
$f\big([Q_\star,\Pi_{\star,n}\widecheck\lambda'_{\star,n}\Pi_{\star,n}]\big)$
is regularising in the Besov scale associated to the operator $Q_\star$. This implies
in particular that the operators
$
f\big([Q_\star,\Pi_{\star,n}\widecheck\lambda'_{\star,n}\Pi_{\star,n}]\big)
Q_\star
$
and
$
Q_\star f\big([Q_\star,\Pi_{\star,n}\widecheck\lambda'_{\star,n}\Pi_{\star,n}]\big)
$
extend continuously to bounded operators, as desired.
\end{proof}

We can now give in the next theorem a description of the structure of the spectrum of
the full Hamiltonian $M$. The next theorem also shows that the set $\Tau_M$ can be
interpreted as the set of thresholds in the spectrum of $M:$

\newpage

\begin{Theorem}\label{thm_spec_M}
In any compact interval $I\subset\R\setminus\Tau_M$, the operator $M$ has at most
finitely many eigenvalues, each one of finite multiplicity, and no singular continuous
spectrum.
\end{Theorem}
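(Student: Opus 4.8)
The plan is to deduce the statement from the abstract spectral result Theorem~\ref{thm_spec_H}, used with $H=M$ and $A=A_I$. That theorem has two inputs: that $M$ be of class $C^{1+\varepsilon}(A_I)$ for some $\varepsilon\in(0,1)$, and that a Mourre estimate of the form \eqref{eq_Mourre} hold on an open set. The first input is precisely what the discussion preceding the present statement supplies: Lemmas~\ref{lemma_M_Q} and~\ref{lemma_j_star_C1}, combined with the explicit expression for $[(M-z)^{-1},A_I]$ and the bound \eqref{eq_regularity}, give $(M-z)^{-1}\in C^{1+\varepsilon}(A_I)$ for every compact interval $I\subset\R\setminus\Tau_M$ and some $\varepsilon\in(0,1)$. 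So the only thing left to organise is the Mourre estimate, which Theorem~\ref{thm_regul} already provides in the form $\widetilde\varrho_M^{A_I}(\lambda)>0$ for every $\lambda\in I$.

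I would then pass from this pointwise information to a finite family of honest Mourre inequalities. Fix $\lambda\in I$. Since $\widetilde\varrho_M^{A_I}(\lambda)>0$, the very definition of $\widetilde\varrho$ yields an open interval $J_\lambda\ni\lambda$, a number $a_\lambda>0$, and a compact operator $K_\lambda$ such that $E^M(J_\lambda)[iM,A_I]E^M(J_\lambda)\ge a_\lambda E^M(J_\lambda)+K_\lambda$, that is, \eqref{eq_Mourre} holds on the open set $J_\lambda$. By compactness of $I$, finitely many of these intervals, say $J_1,\dots,J_n$, cover $I$. On each $J_i$ the operator $M$ is of class $C^{1+\varepsilon}(A_I)$ and \eqref{eq_Mourre} holds, so Theorem~\ref{thm_spec_H}(b) gives that $M$ has at most finitely many eigenvalues in $J_i$, each of finite multiplicity, and no singular continuous spectrum in $J_i$. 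Taking the union over $i=1,\dots,n$, one concludes that $M$ has at most finitely many eigenvalues in $\bigcup_{i=1}^n J_i$, each of finite multiplicity, and no singular continuous spectrum in $\bigcup_{i=1}^n J_i$. Since $I\subset\bigcup_{i=1}^n J_i$, the same conclusions hold a fortiori in $I$, which is exactly the claim.

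Since the substantial analytic work—the $C^{1+\varepsilon}(A_I)$ regularity of $M$ and the strict positivity of $\widetilde\varrho_M^{A_I}$ on $I$—has already been accomplished in Theorem~\ref{thm_regul} and in the computations preceding this theorem, I do not expect a genuine obstacle here. The only point requiring a little care is that $A_I$ genuinely depends on $I$, so one must keep the same conjugate operator $A_I$ fixed throughout, and observe that the open intervals $J_\lambda$ produced above need not be contained in $I$; this causes no difficulty, since Theorem~\ref{thm_spec_H} only requires \eqref{eq_Mourre} to hold on some open set, not on $I$ itself. The remainder is the routine unwinding of the definition of $\widetilde\varrho$ together with a finite-covering argument.
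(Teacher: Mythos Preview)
Your proposal is correct and follows essentially the same approach as the paper: invoke the $C^{1+\varepsilon}(A_I)$ regularity established just before the theorem (via Lemmas~\ref{lemma_M_Q}--\ref{lemma_j_star_C1}), use Theorem~\ref{thm_regul} for the Mourre estimate, and conclude by Theorem~\ref{thm_spec_H}(b). The only difference is that the paper compresses your finite-covering argument into the single sentence ``Theorem~\ref{thm_regul} implies that the condition~\eqref{eq_Mourre} of Theorem~\ref{thm_spec_H} is satisfied on $I$'', whereas you spell out explicitly how the pointwise positivity of $\widetilde\varrho_M^{A_I}$ yields local Mourre inequalities that cover $I$.
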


\begin{proof}
The computations at the beginning of this section together with Lemmas \ref{lemma_M_Q}
\& \ref{lemma_j_star_C1} imply that $M$ is of class $C^{1+\varepsilon}(A_I)$ for some
$\varepsilon\in(0,1)$, and Theorem \ref{thm_regul} implies that the condition
\eqref{eq_Mourre} of Theorem~\ref{thm_spec_H} is satisfied on $I$. So, one can apply
Theorem \ref{thm_spec_H}(b) to conclude.
\end{proof}

\begin{Remark}
As a final remark about the spectral properties of the operator $M$, let us mention
that the techniques used in this work do not provide any further information about the
existence of eigenvalues at thresholds or embedded in the continuous spectrum. For
additive perturbations, powerful techniques have been developed over the last decades,
and these methods can be applied to several Schrödinger-type operators. On the other
hand, for multiplicative perturbations, abstract methods have not been developed so
far. To the best of our knowledge, there exist only a few results about embedded
eigenvalues in the case of quantum walks with multiplicative perturbations as for
example in \cite{KMS,MSW,MS19}, or more abstractly in \cite{B13}. For photonic
crystals, especially in the presence of bi-anisotropic media, eigenvalues at
thresholds or embedded in the continuous spectrum certainly deserve an independent
study and this could be the subject of future investigations.
\end{Remark}

\section{Scattering theory}\label{sec_scattering}
\setcounter{equation}{0}

\subsection{Scattering theory in a two-Hilbert spaces setting}\label{sec_scatt_two}

We discuss in this section the existence and the completeness, under smooth
perturbations, of the local wave operators for self-adjoint operators in a two-Hilbert
spaces setting. Namely, given two self-adjoint operators $H_0,H$ in Hilbert spaces
$\H_0,\H$ with spectral measures $E^{H_0},E^H$, an identification operator
$J\in\B(\H_0,\H)$, and an open set $I\subset\R$, we recall criteria for the existence
and the completeness of the strong limits
$$
W_\pm(H,H_0,J,I):=\slim_{t\to\pm\infty}\e^{itH}J\e^{-itH_0}E^{H_0}(I)
$$
under the assumption that the two-Hilbert spaces difference of resolvents
$$
J(H_0-z)^{-1}-(H-z)^{-1}J,\quad z\in\C\setminus\R,
$$
factorises as a product of a locally $H$-smooth operator on $I$ and a locally
$H_0$-smooth operator on $I$.

We start by recalling some facts related to the notion of $J$-completeness. Let
$\NN_\pm(H,J,I)$ be the subsets of $\H$ defined by
$$
\NN_\pm(H,J,I)
:=\left\{\varphi\in\H\mid\lim_{t\to\pm\infty}
\big\|J^*\e^{-itH}E^H(I)\varphi\big\|_{\H_0}=0\right\}.
$$
Then it is clear that $\NN_\pm(H,J,I)$ are closed subspaces of $\H$ and that
$E^H(\R\setminus I)\;\!\H\subset\NN_\pm(H,J,I)$, and it is shown in
\cite[Sec.~3.2]{Yaf92} that $H$ is reduced by $\NN_\pm(H,J,I)$ and that
$$
\overline{\Ran\big(W_\pm(H,H_0,J,I)\big)}\perp\NN_\pm(H,J,I).
$$
In particular, one has the inclusion
$$
\overline{\Ran\big(W_\pm(H,H_0,J,I)\big)}\subset E^H(I)\H\ominus\NN_\pm(H,J,I),
$$
which motivates the following definition\;\!:

\begin{Definition}[$J$-completeness]\label{def_J_complete}
Assume that the local wave operators $W_\pm(H,H_0,J,I)$ exist. Then
$W_\pm(H,H_0,J,I)$ are $J$-complete on $I$ if
$$
\overline{\Ran\big(W_\pm(H,H_0,J,I)\big)}=E^H(I)\H\ominus\NN_\pm(H,J,I).
$$
\end{Definition}

\begin{Remark}
In the particular case $\H_0=\H$ and $J=1_\H$, the $J$-completeness on $I$ reduces to
the completeness of the local wave operators $W_\pm(H,H_0,J,I)$ on $I$ in the usual
sense. Namely, $\overline{\Ran\big(W_\pm(H,H_0,1_\H,I)\big)}=E^H(I)\H$, and the
operators $W_\pm(H,H_0,1_\H,I)$ are unitary from $E^{H_0}(I)\H$ to $E^H(I)\H$.
\end{Remark}

The following criterion for $J$-completeness has been established in
\cite[Thm.~3.2.4]{Yaf92}\;\!:

\begin{Lemma}
If the local wave operators $W_\pm(H,H_0,J,I)$ and $W_\pm(H_0,H,J^*,I)$ exist, then
$W_\pm(H,H_0,J,I)$ are $J$-complete on $I$.
\end{Lemma}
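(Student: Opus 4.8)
The plan is to reproduce the classical argument of \cite[Sec.~3.2]{Yaf92}. The inclusion $\overline{\Ran\big(W_\pm(H,H_0,J,I)\big)}\subseteq E^H(I)\H\ominus\NN_\pm(H,J,I)$ has already been recorded above, and since $E^H(I)\H\ominus\NN_\pm(H,J,I)$ is a closed subspace of $\H$ containing $\overline{\Ran\big(W_\pm(H,H_0,J,I)\big)}$, it suffices to show that every vector $\psi\in E^H(I)\H\ominus\NN_\pm(H,J,I)$ that is orthogonal to $\overline{\Ran\big(W_\pm(H,H_0,J,I)\big)}$ must vanish. That is the only thing to prove.

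The main tool is the reverse wave operator $\widetilde W_\pm:=W_\pm(H_0,H,J^*,I)=\slim_{t\to\pm\infty}\e^{itH_0}J^*\e^{-itH}E^H(I)$, which exists by hypothesis. First I would establish two elementary facts about it. Since $\e^{itH_0}$ is unitary and $\widetilde W_\pm\varphi$ exists, the scalars $\big\|J^*\e^{-itH}E^H(I)\varphi\big\|_{\H_0}$ converge to $\big\|\widetilde W_\pm\varphi\big\|_{\H_0}$ as $t\to\pm\infty$; comparing with the definition of $\NN_\pm(H,J,I)$, this gives $\Ker\widetilde W_\pm=\NN_\pm(H,J,I)$. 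Next, the relation $\e^{isH_0}\widetilde W_\pm=\widetilde W_\pm\e^{isH}$ for all $s\in\R$ follows directly from the definition of $\widetilde W_\pm$ as a strong limit together with $[\e^{isH},E^H(I)]=0$ (change of variable $t\mapsto t+s$ in the limit), and hence $E^{H_0}(\Omega)\widetilde W_\pm=\widetilde W_\pm E^H(\Omega)$ for every Borel set $\Omega\subseteq\R$; in particular $E^{H_0}(I)\widetilde W_\pm=\widetilde W_\pm E^H(I)=\widetilde W_\pm$.

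Then I would argue as follows. Let $\psi$ be as described above. For every $f\in\H_0$, the orthogonality of $\psi$ to $\Ran\big(W_\pm(H,H_0,J,I)\big)$ yields $0=\big\langle\psi,W_\pm(H,H_0,J,I)f\big\rangle_\H=\lim_{t\to\pm\infty}\big\langle J^*\e^{-itH}\psi,\e^{-itH_0}E^{H_0}(I)f\big\rangle_{\H_0}$, where passing to the limit inside the inner product is legitimate because the second argument is held fixed. Using $\psi=E^H(I)\psi$, the right-hand limit equals $\big\langle\widetilde W_\pm\psi,E^{H_0}(I)f\big\rangle_{\H_0}$, whence $E^{H_0}(I)\widetilde W_\pm\psi=0$; combined with $E^{H_0}(I)\widetilde W_\pm=\widetilde W_\pm$ this forces $\widetilde W_\pm\psi=0$, i.e.\ $\psi\in\Ker\widetilde W_\pm=\NN_\pm(H,J,I)$ by the first fact. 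Since $\psi$ was also chosen orthogonal to $\NN_\pm(H,J,I)$, we conclude $\psi=0$, completing the proof.

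The steps are essentially bookkeeping, and I do not expect a genuine obstacle. The only points needing a little care are the interchange of the inner product with the strong limits defining the wave operators and the verification of the intertwining relations for the non-isometric identification operator $J^*$ — precisely the places where one uses that $W_\pm(H,H_0,J,I)$ and $W_\pm(H_0,H,J^*,I)$ exist as genuine strong limits rather than merely along subsequences. Conceptually, the statement is the two-Hilbert-spaces counterpart of the elementary fact that, for a single operator pair, existence of the wave operators in both directions forces their completeness.
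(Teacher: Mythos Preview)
The paper does not supply its own proof of this lemma; it merely cites \cite[Thm.~3.2.4]{Yaf92}. Your argument is correct and is exactly the classical proof from Yafaev's textbook, so there is nothing to compare (one small expositional remark: when you write ``the second argument is held fixed'' you should first move the unitary $\e^{itH_0}$ onto the first slot, so that the inner product reads $\big\langle\e^{itH_0}J^*\e^{-itH}E^H(I)\psi,\,E^{H_0}(I)f\big\rangle_{\H_0}$ and the strong convergence of the first entry to $\widetilde W_\pm\psi$ does the work).
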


For the next theorem, we recall that the spectral support $\supp_H(\varphi)$ of a
vector $\varphi\in\H$ with respect to $H$ is the smallest closed set $I\subset\R$ such
that $E^H(I)\varphi=\varphi$.

\begin{Theorem}\label{thm_wave}
Let $H_0,H$ be self-adjoint operators in Hilbert spaces $\H_0,\H$ with spectral
measures $E^{H_0},E^H$, $J\in\B(\H_0,\H)$, $I\subset\R$ an open set, and $\G$ an
auxiliary Hilbert space. For each $z\in\C\setminus\R$, assume there exist
$T_0(z)\in\B(\H_0,\G)$ locally $H_0$-smooth on $I$ and $T(z)\in\B(\H,\G)$ locally
$H$-smooth $I$ such that
$$
J(H_0-z)^{-1}-(H-z)^{-1}J=T(z)^*T_0(z).
$$
Then the local wave operators
\begin{equation}\label{eq_wave}
W_\pm(H,H_0,J,I)=\slim_{t\to\pm\infty}\e^{itH}J\e^{-itH_0}E^{H_0}(I)
\end{equation}
exist, are $J$-complete on $I$, and satisfy the relations
$$
W_\pm(H,H_0,J,I)^*=W_\pm(H_0,H,J^*,I)
\quad\hbox{and}\quad
W_\pm(H,H_0,J,I)\;\!\eta(H_0)=\eta(H)\;\!W_\pm(H,H_0,J,I)
$$
for each bounded Borel function $\eta:\R\to\C$.
\end{Theorem}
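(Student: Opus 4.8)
The plan is to derive the theorem from the general theory of scattering under smooth perturbations in a two-Hilbert spaces setting \cite{Yaf92}. The argument breaks into four parts: existence of $W_\pm(H,H_0,J,I)$; existence of $W_\pm(H_0,H,J^*,I)$; $J$-completeness; and the adjoint and intertwining relations.

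For the first part I would fix a compact interval $I_0\subset I$ and a vector $\varphi$ with $E^{H_0}(I_0)\varphi=\varphi$; since $I$ is open, such vectors are dense in $E^{H_0}(I)\H_0$, so it is enough to treat them. Picking $\eta_0\in C^\infty_{\rm c}(I)$ with $\eta_0\equiv1$ near $I_0$ (so that $\eta_0(H_0)\varphi=\varphi$), I would show that $t\mapsto\e^{itH}J\e^{-itH_0}\varphi$ satisfies the Cauchy criterion as $t\to\pm\infty$, which is the classical Kato--Pearson smooth-perturbation argument. Concretely, one differentiates $\e^{itH}J\e^{-itH_0}\varphi$ and rewrites the derivative using the identity $J(H_0-z)^{-1}-(H-z)^{-1}J=(H-z)^{-1}(HJ-JH_0)(H_0-z)^{-1}$ together with the factorisation hypothesis, inserting spectral cutoffs $\eta_0(H_0)$ and $\zeta(H)$ (with $\zeta\in C^\infty_{\rm c}(\R)$, $\zeta\equiv1$ near $I_0$) to absorb the unbounded factors and to localise in the spectrum of $H$, so that the \emph{local} smoothness of $T(z)$ becomes applicable. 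A Schwarz inequality then bounds the increments by
$$
\Big(\int_\R\big\|T_0(z)\,\e^{-i\tau H_0}E^{H_0}(I_0)\varphi_1\big\|_\G^2\,\d\tau\Big)^{1/2}
\Big(\int_\R\big\|T(z)\,\e^{-i\tau H}E^{H}(I_0')\psi_1\big\|_\G^2\,\d\tau\Big)^{1/2}
$$
for suitable auxiliary vectors $\varphi_1,\psi_1$ and a compact interval $I_0'\subset I$; both integrals are finite by the local $H_0$-smoothness of $T_0(z)$ and the local $H$-smoothness of $T(z)$, and the first one is the tail of an $\mathrm{L}^1(\R)$ function, hence tends to $0$. (Equivalently, this step is a direct application of the two-Hilbert spaces smooth-perturbation theorem of \cite{Yaf92}.)

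The second part rests on the symmetry of the hypothesis under the exchange $(H_0,H,J)\leftrightarrow(H,H_0,J^*)$: taking adjoints in $J(H_0-z)^{-1}-(H-z)^{-1}J=T(z)^*T_0(z)$ and relabelling $z$ yields $J^*(H-z)^{-1}-(H_0-z)^{-1}J^*=-T_0(\bar z)^*T(\bar z)$, with $T(\bar z)$ locally $H$-smooth on $I$ and $T_0(\bar z)$ locally $H_0$-smooth on $I$; hence the first part, applied with the two operators interchanged, gives the existence of $W_\pm(H_0,H,J^*,I)$. Since both families of wave operators now exist, the third part, namely the $J$-completeness of $W_\pm(H,H_0,J,I)$ on $I$, is immediate from the criterion recalled in the Lemma above (\cite[Thm.~3.2.4]{Yaf92}). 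For the fourth part, the intertwining relation, first for $\eta(H_0)=\e^{isH_0}$, follows from the substitution $t\mapsto t+s$ in the defining strong limit and the strong continuity of the unitary groups; one extends it to $\eta$ with integrable Fourier transform by integration in $s$, and then to arbitrary bounded Borel $\eta$ by a monotone-class/dominated-convergence argument. Taking $\eta_n\uparrow\chi_I$ with $\eta_n\in C_{\rm c}(\R)$ gives $W_\pm(H,H_0,J,I)=E^H(I)W_\pm(H,H_0,J,I)$ and the analogous identity for $W_\pm(H_0,H,J^*,I)$, and with these inclusions the adjoint relation $W_\pm(H,H_0,J,I)^*=W_\pm(H_0,H,J^*,I)$ follows by expanding $\langle W_\pm(H,H_0,J,I)\varphi,\psi\rangle$ and passing to the limit, as in \cite[Sec.~3.2]{Yaf92}.

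The step I expect to be the main obstacle is the first one: one has to commute the unbounded factors occurring in $HJ-JH_0$ past the unitary groups and the spectral cutoffs so that only the local smoothness bounds and the boundedness of $J$, of the resolvents and of $\eta_0(H_0),\zeta(H)$ enter, and one has to check that the extra terms produced by the cutoffs — which can be written in terms of $B(z)=T(z)^*T_0(z)$ — contribute nothing in the limit. Parts two to four are then routine consequences of the cited abstract results.
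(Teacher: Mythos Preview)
Your approach is essentially the same as the paper's: both localise first in the spectrum of $H_0$ (compact $I_0\subset I$, cutoff $\eta_0$), then in the spectrum of $H$ (cutoff $\eta$ or your $\zeta$), run the Cauchy--Schwarz/smoothness argument on the localised piece, and finally invoke symmetry plus the cited lemma from \cite{Yaf92} for $J$-completeness and the remaining relations.

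The one place where the paper is more explicit than your sketch is precisely the point you flag as the ``main obstacle''. Rather than inserting $\zeta(H)$ inside the Cauchy estimate and tracking commutator remainders, the paper splits $\e^{itH}J\e^{-itH_0}\varphi_0$ into $\eta(H)\e^{itH}J\e^{-itH_0}\varphi_0$ and $(1-\eta(H))\e^{itH}J\e^{-itH_0}\varphi_0$ at the outset. The first piece is handled by your Cauchy argument (with the $\eta(H)$ already providing the compact $H$-support needed for local smoothness of $T(z)$). For the second piece one writes $(1-\eta(H))J\eta_0(H_0)=(1-\eta(H))\big(J\eta_0(H_0)-\eta_0(H)J\big)$, approximates $\eta_0$ by linear combinations of resolvents (density in $C_0(\R)$), and is reduced to showing $\|T(z)^*T_0(z)\e^{-itH_0}\varphi_0\|\to0$. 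This last step uses a Sobolev-embedding argument: $t\mapsto T_0(z)\e^{-itH_0}\varphi_0$ lies in $\ltwo(\R,\G)$ by local $H_0$-smoothness, and so does its $t$-derivative (since $H_0\varphi_0$ has the same compact spectral support), whence the function itself tends to $0$ pointwise. This is the concrete mechanism behind your ``extra terms produced by the cutoffs contribute nothing in the limit'', and it avoids any manipulation of the unbounded expression $HJ-JH_0$.
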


\begin{proof}
We adapt the proof of \cite[Thm.~7.1.4]{ABG96} to the two-Hilbert spaces setting. The
existence of the limits \eqref{eq_wave} is a direct consequence of the following
claims\;\!: for each $\varphi_0\in\H_0$ with $I_0:=\supp_{H_0}(\varphi_0)\subset I$
compact, and for each $\eta\in C_{\rm c}^\infty(I)$ such that $\eta\equiv1$ on a
neighbourhood of $I_0$, we have that
\begin{equation}\label{eq_equiv_wave}
\slim_{t\to\pm\infty}\eta(H)\e^{itH}J\e^{-itH_0}\varphi_0~\hbox{exist}
\quad\hbox{and}\quad
\lim_{t\to\pm\infty}\big\|\big(1-\eta(H)\big)\e^{itH}J\e^{-itH_0}\varphi_0\big\|_\H=0.
\end{equation}
To prove the first claim in \eqref{eq_equiv_wave}, take $\varphi\in\H$ and $t\in\R$,
and observe that the operators $W(t):=\eta(H)\e^{itH}J\e^{-itH_0}$ satisfy for
$z\in\C\setminus\R$ and $s\le t$
\begin{align*}
&\big|\big\langle(H-\overline z)^{-1}\varphi,
\big(W(t)-W(s)\big)(H_0-z)^{-1}\varphi_0\big\rangle_\H\big|\\
&=\left|\int_s^t\d u\,\tfrac\d{\d u}\,\big\langle\e^{-iuH}\overline\eta(H)\varphi,
(H-z)^{-1}J(H_0-z)^{-1}\e^{-iuH_0}\varphi_0\big\rangle_\H\right|\\
&=\left|\int_s^t\d u\,\big\langle\e^{-iuH}\overline\eta(H)\varphi,
(H-z)^{-1}(HJ-JH_0)(H_0-z)^{-1}\e^{-iuH_0}\varphi_0\big\rangle_\H\right|\\
&=\left|\int_s^t\d u\,\big\langle\e^{-iuH}\overline\eta(H)\varphi,
\big(J(H_0-z)^{-1}-(H-z)^{-1}J\big)\e^{-iuH_0}\varphi_0\big\rangle_\H\right|\\
&=\left|\int_s^t\d u\,\big\langle T(z)\e^{-iuH}\overline\eta(H)\varphi,
T_0(z)\e^{-iuH_0}\varphi_0\big\rangle_\G\right|\\
&\le\left(\int_s^t\d u\,
\big\|T(z)\e^{-iuH}\overline\eta(H)\varphi\big\|^2_\G\right)^{1/2}
\left(\int_s^t\d u\,\big\|T_0(z)\e^{-iuH_0}\varphi_0\big\|^2_\G\right)^{1/2}\\
&\le c_{I_1}^{1/2}\;\!\|\varphi\|_\H
\left(\int_s^t\d u\,\big\|T_0(z)\e^{-iuH_0}\varphi_0\big\|^2_\G\right)^{1/2},
\end{align*}
with $I_1:=\supp(\eta)$ and $c_{I_1}$ the constant appearing in the definition
\eqref{def_H_smooth} of a locally $H$-smooth operator. Since the set
$(H-\overline z)^{-1}\H$ is dense in $\H$ and $T_0$ is locally $H_0$-smooth on $I$, it
follows that $\big\|\big(W(t)-W(s)\big)(H_0-z)^{-1}\varphi_0\big\|_\H\to0$ as
$s\to\infty$ or $t\to-\infty$. Applying this result with $\varphi_0$ replaced by
$(H_0-z)\varphi_0$ we infer that
$$
\big\|\big(W(t)-W(s)\big)\varphi_0\big\|_\H
=\big\|\big(W(t)-W(s)\big)(H_0-z)^{-1}(H_0-z)\varphi_0\big\|_\H\to0
$$
as $s\to\infty$ or $t\to-\infty$, which proves the first claim in
\eqref{eq_equiv_wave}.

To prove the second claim in \eqref{eq_equiv_wave}, we take
$\eta_0\in C_{\rm c}^\infty(I)$ such that $\eta_0\equiv1$ on $I_0$ and
$\eta\;\!\eta_0=\eta_0$. Then we have $\varphi_0=\eta_0(H_0)\varphi_0$ and
$$
\big(1-\eta(H)\big)J\;\!\eta_0(H_0)
=\big(1-\eta(H)\big)\big(J\;\!\eta_0(H_0)-\eta_0(H)J\big),
$$
and thus the second claim in \eqref{eq_equiv_wave} follows from
$$
\lim_{t\to\pm\infty}
\big\|\big(J\;\!\eta_0(H_0)-\eta_0(H)J\big)\e^{-itH_0}\varphi_0\big\|_\H=0.
$$
Since the vector space generated by the functions $\R\ni x\mapsto(x-z)^{-1}\in\C$,
$z\in\C\setminus\R$, is dense in $C_0(\R)$, it is sufficient to show that
$$
\lim_{t\to\pm\infty}
\big\|\big(J(H_0-z)^{-1}-(H-z)^{-1}J\big)\e^{-itH_0}\varphi_0\big\|_\H=0,
\quad z\in\C\setminus\R.
$$
Now, we have for every $\varphi\in\H$
\begin{align*}
\big|\big\langle\varphi,
\big(J(H_0-z)^{-1}-(H-z)^{-1}J\big)\e^{-itH_0}\varphi_0\big\rangle_\H\big|
&=\big|\big\langle T(z)\varphi,T_0 (z)\e^{-itH_0}\varphi_0\big\rangle_\G\big|\\
&\le\big\|T(z)\varphi\big\|_\G\;\!\big\|T_0(z)\e^{-itH_0}\varphi_0\big\|_\G.
\end{align*}
Therefore, it is enough to prove that $\big\|T_0(z)\e^{-itH_0}\varphi_0\big\|_\G\to0$
as $|t|\to\infty$. But since $T_0(z)\e^{-itH_0}\varphi_0$ and its derivative are
square integrable in $t$, this follows from a standard Sobolev embedding argument. So,
the existence of the limits \eqref{eq_wave} has been established. Similar arguments,
using the relation
$$
(H_0-\overline z)^{-1}J^*-J^*(H-\overline z)^{-1}=T_0(z)^*T(z)
$$
instead of
$$
J(H_0-z)^{-1}-(H-z)^{-1}J=T(z)^*T_0(z),
$$
show that $W_\pm(H_0,H,J^*,I)$ exists too. This, together with standard arguments in
scattering theory, implies the claims that follow \eqref{eq_wave}.
\end{proof}

As a consequence of Theorem \ref{thm_spec_H}(a) \& Theorem \ref{thm_wave}, we obtain
the following criterion for the existence and completeness of the local wave
operators\;\!:

\begin{Corollary}\label{corol_wave}
Let $H_0,H$ be self-adjoint operators in Hilbert spaces $\H_0,\H$ with spectral
measures $E^{H_0},E^H$ and $A_0,A$ self-adjoint operators in $\H_0,\H$. Assume that
$H_0,H$ are of class $C^{1+\varepsilon}(A_0),C^{1+\varepsilon}(A)$ for some
$\varepsilon\in(0,1)$. Let
$$
I:=\big\{\widetilde\mu^{A_0}(H_0)\setminus\sigma_{\rm p}(H_0)\big\}
\cap\big\{\widetilde\mu^A(H)\setminus\sigma_{\rm p}(H)\big\},
$$
$J\in\B(\H_0,\H)$, $\G$ an auxiliary Hilbert space, and for each $z\in\C\setminus\R$
suppose there exist $T_0(z)\in\B(\H_0,\G)$ and $T(z)\in\B(\H,\G)$ with
\begin{equation}\label{eq_decomposition}
J(H_0-z)^{-1}-(H-z)^{-1}J=T(z)^*T_0(z)
\end{equation}
and such that $T_0(z)$ extends continuously to an element of
$\B\big(\dom(\langle A_0\rangle^s)^*,\G\big)$ and $T(z)$ extends continuously to an
element of $\B\big(\dom(\langle A\rangle^s)^*,\G\big)$ for some $s>1/2$. Then the
local wave operators
$$
W_\pm(H,H_0,J,I)=\slim_{t\to\pm\infty}\e^{itH}J\e^{-itH_0}E^{H_0}(I)
$$
exist, are $J$-complete on $I$, and satisfy the relations
$$
W_\pm(H,H_0,J,I)^*=W_\pm(H_0,H,J^*,I)
\quad\hbox{and}\quad
W_\pm(H,H_0,J,I)\;\!\eta(H_0)=\eta(H)\;\!W_\pm(H,H_0,J,I)
$$
for each bounded Borel function $\eta:\R\to\C$.
\end{Corollary}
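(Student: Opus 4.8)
The plan is to obtain the statement by feeding the factorization \eqref{eq_decomposition} into Theorem \ref{thm_wave}; all that has to be produced is the local smoothness of the two factors $T_0(z)$ and $T(z)$, which is exactly what Theorem \ref{thm_spec_H}(a) delivers once a Mourre estimate is available. Observe first that, by Theorem \ref{thm_spec_H}(b), the eigenvalues of $H_0$ (resp.\ of $H$) can only accumulate outside $\widetilde\mu^{A_0}(H_0)$ (resp.\ $\widetilde\mu^A(H)$), so $\widetilde\mu^{A_0}(H_0)\setminus\sigma_{\rm p}(H_0)$ and $\widetilde\mu^A(H)\setminus\sigma_{\rm p}(H)$ are open, and hence so is $I$.

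Next I would fix $z\in\C\setminus\R$ and show that $T_0(z)$ is locally $H_0$-smooth on $I$. For each $\lambda\in I\subset\widetilde\mu^{A_0}(H_0)$ one has $\widetilde\varrho_{H_0}^{A_0}(\lambda)>0$, so by lower semicontinuity of $\widetilde\varrho_{H_0}^{A_0}$ one can pick a bounded open interval $I_\lambda\ni\lambda$ with $\overline{I_\lambda}\subset\widetilde\mu^{A_0}(H_0)$ on which $\widetilde\varrho_{H_0}^{A_0}$ is bounded below by a positive constant; by the standard consequence of the definition of $\widetilde\varrho_{H_0}^{A_0}$ recalled in \cite[Sec.~7.2]{ABG96}, this gives a Mourre estimate of the form \eqref{eq_Mourre} on the open set $I_\lambda$ (with some $a>0$ and some $K\in\K(\H_0)$). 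Since $H_0$ is of class $C^{1+\varepsilon}(A_0)$ and $T_0(z)$ extends continuously to $\B\big(\dom(\langle A_0\rangle^s)^*,\G\big)$ for some $s>1/2$, Theorem \ref{thm_spec_H}(a) shows that $T_0(z)$ is locally $H_0$-smooth on $I_\lambda\setminus\sigma_{\rm p}(H_0)$. The family $\{I_\lambda\}_{\lambda\in I}$ is an open cover of $I$, and a routine localization argument — over a compact $I_0\subset I$ take a finite subcover, shrink it to relatively compact pieces, split $E^{H_0}(I_0)$ into the corresponding disjoint spectral parts, and combine the resulting bounds via the triangle inequality in \eqref{def_H_smooth} — together with the fact that $I$ is disjoint from $\sigma_{\rm p}(H_0)$, promotes this to local $H_0$-smoothness of $T_0(z)$ on $I$. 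The identical argument with $(H,A)$ in place of $(H_0,A_0)$, using $I\subset\widetilde\mu^A(H)\setminus\sigma_{\rm p}(H)$ and $H$ of class $C^{1+\varepsilon}(A)$, shows that $T(z)$ is locally $H$-smooth on $I$.

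Finally, with $T_0(z)$ locally $H_0$-smooth on $I$, $T(z)$ locally $H$-smooth on $I$, and the identity $J(H_0-z)^{-1}-(H-z)^{-1}J=T(z)^*T_0(z)$ of \eqref{eq_decomposition}, Theorem \ref{thm_wave} (applied with the auxiliary space $\G$) yields at once the existence of $W_\pm(H,H_0,J,I)$, their $J$-completeness on $I$, the relation $W_\pm(H,H_0,J,I)^*=W_\pm(H_0,H,J^*,I)$, and the intertwining property $W_\pm(H,H_0,J,I)\,\eta(H_0)=\eta(H)\,W_\pm(H,H_0,J,I)$ for every bounded Borel function $\eta$ — which is the full content of the corollary. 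The proof is thus essentially a composition of the two previously established theorems; the only step needing a little care is the localization argument in the previous paragraph, i.e.\ passing from the pointwise positivity of $\widetilde\varrho_{H_0}^{A_0}$ and $\widetilde\varrho_H^A$ that defines membership in $\widetilde\mu^{A_0}(H_0)$ and $\widetilde\mu^A(H)$ to honest local-smoothness statements on the open set $I$.
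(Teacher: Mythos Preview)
Your proposal is correct and matches the paper's approach: the paper does not spell out a proof but simply presents the corollary ``as a consequence of Theorem~\ref{thm_spec_H}(a) and Theorem~\ref{thm_wave}'', which is exactly the composition you carry out. Your added detail on why $I$ is open and on the covering argument for passing from pointwise positivity of $\widetilde\varrho$ to local smoothness on all of $I$ is a welcome clarification of what the paper leaves implicit.
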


\subsection{Scattering theory for one-dimensional coupled photonic crystals}
\label{sec_scatt_photonic}

In the case of the pair $(M_0,M)$, we obtain the following result on the existence
and completeness of the wave operators; we use the notation $E_{\rm ac}^M$ for the
orthogonal projection on the absolutely continuous subspace of $M:$

\begin{Theorem}\label{thm_wave_max}
Let $I_{\rm max}:=\sigma(M_0)\setminus\{\Tau_M\cup\sigma_{\rm p}(M)\}$. Then
the local wave operators
$$
W_\pm(M,M_0,J,I_{\rm max})
:=\slim_{t\to\pm\infty}\e^{itM}J\e^{-itM_0}E^{M_0}(I_{\rm max})
$$
exist and satisfy
$\overline{\Ran\big(W_\pm(M,M_0,J,I_{\rm max})\big)}=E_{\rm ac}^M\;\!\H_w$. In
addition, the relations
$$
W_\pm(M,M_0,J,I_{\rm max})^*=W_\pm(M_0,M,J^*,I_{\rm max})
$$
and
$$
W_\pm(M,M_0,J,I_{\rm max})\;\!\eta(M_0)=\eta(M)\;\!W_\pm(M,M_0,J,I_{\rm max})
$$
hold for each bounded Borel function $\eta:\R\to\C$.
\end{Theorem}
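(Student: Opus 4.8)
The plan is to obtain the statement from Corollary~\ref{corol_wave}, applied for each compact interval $I\subset\R\setminus\Tau_M$ to the pair $\{M_0,M\}$ with the conjugate operators $A_{0,I}$ and $A_I=JA_{0,I}J^*$, and then to let $I$ exhaust $\R\setminus\Tau_M$. Most hypotheses of Corollary~\ref{corol_wave} are already available: $M_0$ is of class $C^2(A_{0,I})$, hence $C^{1+\varepsilon}(A_{0,I})$, by point~(b') of Section~\ref{sec_conj_free}, while $M$ is of class $C^{1+\varepsilon}(A_I)$ by the computations preceding Theorem~\ref{thm_spec_M}; Theorems~\ref{thm_Mourre_Free} and~\ref{thm_regul} give $\widetilde\varrho_{M_0}^{A_{0,I}}>0$ and $\widetilde\varrho_M^{A_I}\ge\widetilde\varrho_{M_0}^{A_{0,I}}>0$ on $I$, so $I\subset\widetilde\mu^{A_{0,I}}(M_0)\cap\widetilde\mu^{A_I}(M)$; and since $\sigma_{\rm p}(M_0)=\varnothing$ by Remark~\ref{rem_ess_M_0} and $\sigma_{\rm p}(M)\cap I$ is finite by Theorem~\ref{thm_spec_M}, the open set entering Corollary~\ref{corol_wave} contains $\mathring I\setminus\sigma_{\rm p}(M)$.

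The one genuinely new point is the factorisation~\eqref{eq_decomposition} of $B(z)=J(M_0-z)^{-1}-(M-z)^{-1}J$ with $T_0(z)$ controlled by $A_{0,I}$ and $T(z)$ controlled by $A_I$. I would start from $B(z)=(M-z)^{-1}(MJ-JM_0)(M_0-z)^{-1}$ and note, as in the proof of Lemma~\ref{lemma_compact_B}, that $MJ-JM_0$ is the sum of a compactly supported matrix multiplication operator and a first-order differential operator whose coefficient, by Assumption~\ref{ass_weight}, decays like $\langle x\rangle^{-1-\varepsilon}$. Splitting this decay symmetrically as $\langle Q\rangle^{-s}(\;\!\cdot\;\!)\langle Q\rangle^{-s}$ for $s\in(1/2,1)$ small enough, and using that $M_0$ and $M$ are of class $C^1(\langle Q_0\rangle^s)$ and $C^1(\langle Q\rangle^s)$ (Lemma~\ref{lemma_M_Q} and the same argument for $M$), so that the resolvents preserve $\dom(\langle Q_0\rangle^s)$, $\dom(\langle Q\rangle^s)$ while $D(M-z)^{-1}$ and $D(M_0-z)^{-1}$ are bounded, one checks that $\langle Q\rangle^sB(z)\langle Q_0\rangle^s$ extends to a bounded operator $K(z)\in\B(\H_0,\H_w)$. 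Taking a polar decomposition $K(z)=V(z)\;\!|K(z)|$ and putting $\G:=\H_0$, $T_0(z):=|K(z)|^{1/2}\langle Q_0\rangle^{-s}$ and $T(z):=|K(z)|^{1/2}V(z)^*\langle Q\rangle^{-s}$, one has $B(z)=T(z)^*T_0(z)$; since $A_{0,I}\langle Q_0\rangle^{-1}$ and $A_I\langle Q\rangle^{-1}$ are bounded by~\eqref{def_A_star}, one gets $\dom(\langle Q_0\rangle^s)\subset\dom(\langle A_{0,I}\rangle^s)$ and $\dom(\langle Q\rangle^s)\subset\dom(\langle A_I\rangle^s)$, whence $\langle A_{0,I}\rangle^sT_0(z)^*$ and $\langle A_I\rangle^sT(z)^*$ are bounded, i.e.\ $T_0(z)$ and $T(z)$ extend to elements of $\B(\dom(\langle A_{0,I}\rangle^s)^*,\G)$ and $\B(\dom(\langle A_I\rangle^s)^*,\G)$ with $s>1/2$. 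Corollary~\ref{corol_wave} then gives the existence and $J$-completeness of $W_\pm(M,M_0,J,\mathring I\setminus\sigma_{\rm p}(M))$ and of $W_\pm(M_0,M,J^*,\mathring I\setminus\sigma_{\rm p}(M))$, with the intertwining and adjoint relations.

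Choosing compact intervals $I_n\subset\R\setminus\Tau_M$ with $\bigcup_n\mathring I_n=\R\setminus\Tau_M$ (possible since $\Tau_M$ is discrete), writing $\Lambda_\infty:=\R\setminus\{\Tau_M\cup\sigma_{\rm p}(M)\}=\bigcup_n\big(\mathring I_n\setminus\sigma_{\rm p}(M)\big)$, and using the compatibility $W_\pm(M,M_0,J,\Lambda')E^{M_0}(\Lambda'')=W_\pm(M,M_0,J,\Lambda'')$ for $\Lambda''\subset\Lambda'$ together with an $\varepsilon/3$ argument (and $\|J\|<\infty$), one obtains that $W_\pm(M,M_0,J,\Lambda_\infty)$ and $W_\pm(M_0,M,J^*,\Lambda_\infty)$ exist; the former is $J$-complete on $\Lambda_\infty$ by the Lemma following Definition~\ref{def_J_complete}. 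Since $\Tau_M$ is discrete and $\sigma_{\rm p}(M)$ countable, $\Lambda_\infty$ has full Lebesgue measure, and as $M_0$ is purely absolutely continuous (Remark~\ref{rem_ess_M_0}) we get $E^{M_0}(\Lambda_\infty)=1=E^{M_0}(I_{\rm max})$; hence $W_\pm(M,M_0,J,I_{\rm max})=W_\pm(M,M_0,J,\Lambda_\infty)$, and the stated intertwining and adjoint relations follow.

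There remains the range. $J$-completeness gives $\overline{\Ran\big(W_\pm(M,M_0,J,\Lambda_\infty)\big)}=E^M(\Lambda_\infty)\H_w\ominus\NN_\pm(M,J,\Lambda_\infty)$. Here $E^M(\Lambda_\infty)=E^M_{\rm ac}$, because the absolutely continuous part of $M$ charges no set of measure zero, there is no singular continuous spectrum in $\R\setminus\Tau_M\supset\Lambda_\infty$ by Theorem~\ref{thm_spec_M}, and no eigenvalue of $M$ lies in $\Lambda_\infty$. Moreover, a direct computation using $c_0\le w,w_\ell,w_{\rm r}\le c_1$ and $j_\ell^2+j_{\rm r}^2\ge1-\chi_{(-1,1)}$ yields $\|J^*\psi\|_{\H_0}^2\ge a\;\!\|\psi\|_{\H_w}^2-b\;\!\|\chi_{(-1,1)}(Q)\psi\|_{\H_w}^2$ for some $a,b>0$; applying this to $\psi=\e^{-itM}\varphi$ and combining it with the RAGE theorem — valid by the local compactness of $M$ (Lemma~\ref{lemma_ess_M}(a)) — shows that $\liminf_{|t|\to\infty}\|J^*\e^{-itM}\varphi\|_{\H_0}>0$ for every nonzero $\varphi\in E^M_{\rm ac}\H_w$, so $\NN_\pm(M,J,\Lambda_\infty)\cap E^M_{\rm ac}\H_w=\{0\}$. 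Since $E^M(\R\setminus\Lambda_\infty)\H_w\subset\NN_\pm(M,J,\Lambda_\infty)$ always, this forces $\NN_\pm(M,J,\Lambda_\infty)=E^M(\R\setminus\Lambda_\infty)\H_w$, hence $\overline{\Ran\big(W_\pm(M,M_0,J,I_{\rm max})\big)}=E^M_{\rm ac}\H_w$. I expect the factorisation step to be the main obstacle: one has to extract from the asymptotic periodicity of Assumption~\ref{ass_weight} enough decay of $MJ-JM_0$ and combine it with the precise domination of the interval-dependent operators $A_{0,I},A_I$ by $\langle Q\rangle$, so that $T_0(z)$ and $T(z)$ are \emph{simultaneously} controlled, on the $M_0$ side by $A_{0,I}$ and on the $M$ side by $A_I$, with one and the same exponent $s>1/2$; the remaining steps are either already established above or routine exhaustion and RAGE arguments.
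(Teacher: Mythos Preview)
Your proof is correct and follows the same overall strategy as the paper: verify the hypotheses of Corollary~\ref{corol_wave} on compact intervals $I\subset\R\setminus\Tau_M$, establish the factorisation of $B(z)$ through the weights $\langle Q\rangle^{-s}$ and $\langle Q_0\rangle^{-s}$, exhaust $I_{\rm max}$, and identify the range via a RAGE-type argument.

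Two minor presentational differences are worth noting. First, your polar decomposition of $K(z)=\langle Q\rangle^sB(z)\langle Q_0\rangle^s$ is unnecessary: the paper simply takes $T_0(z):=\langle Q_0\rangle^{-s}$ and $T(z):=K(z)^*\langle Q\rangle^{-s}$, absorbing the whole bounded factor into $T(z)$; this yields $B(z)=T(z)^*T_0(z)$ directly, and the required mapping properties follow from $\langle Q_0\rangle^{-s}\in\B\big(\dom(\langle A_{0,I}\rangle^s)^*,\H_0\big)$ and $\langle Q\rangle^{-s}\in\B\big(\dom(\langle A_I\rangle^s)^*,\H_w\big)$. Second, for the range, the paper uses the compactness of $E^M(I)(JJ^*-1)E^M(I)$ (already established in the proof of Theorem~\ref{thm_regul}) together with RAGE to obtain $\|J^*\e^{-itM}E^M(I)\varphi\|_{\H_0}^2\to\|E^M(I)\varphi\|_{\H_w}^2$, which immediately gives $\NN_\pm(M,J,I)\cap E^M(I)\H_w=\{0\}$. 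Your pointwise lower bound $\|J^*\psi\|_{\H_0}^2\ge a\|\psi\|_{\H_w}^2-b\|\chi_{(-1,1)}(Q)\psi\|_{\H_w}^2$ combined with local compactness and Riemann--Lebesgue achieves the same conclusion by a slightly more hands-on route; both arguments ultimately rest on the fact that $JJ^*-1$ is multiplication by a matrix function vanishing at infinity.
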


\begin{proof}
All the claims except the equality
$
\overline{\Ran\big(W_\pm(M,M_0,J,I_{\rm max})\big)}=E_{\rm ac}^M\;\!\H_w
$
follow from Corollary \ref{corol_wave} whose assumptions are verified below.

Let $I\subset\sigma(M_0)\setminus\{\Tau_M\cup\sigma_{\rm p}(M)\}$ be a compact
interval. Then we know from Section \ref{sec_conj_free} that $M_0$ is of class
$C^2(A_{0,I})$ and from Section \ref{sec_spec} that $M$ is of class
$C^{1+\varepsilon}(A_I)$ for some $\varepsilon\in(0,1)$. Moreover, Theorems
\ref{thm_Mourre_Free} \& \ref{thm_regul} imply that
$$
I\subset\widetilde\mu^{A_{0,I}}(M_0)
\cap\big\{\widetilde\mu^{A_I}(M)\setminus\sigma_{\rm p}(M)\big\}.
$$
Therefore, in order to apply Corollary \ref{corol_wave}, it is sufficient to prove
that  for any $z\in\C\setminus\R$ the operator
$$
B(z)=J\left(M_0-z\right)^{-1}-(M-z)^{-1}J
$$
factorises as a product of two locally smooth operators as in \eqref{eq_decomposition}.
For that purpose, we set $s:=\tfrac{1+\widetilde\varepsilon}2$ with
$\widetilde\varepsilon\in(0,\varepsilon)$, we define
$$
\D:=\big\{\S(\R,\C^2)\oplus\S(\R,\C^2)\big\}\times\S(\R,\C^2)\subset\H_0\times\H_w,
$$
and we consider the sesquilinear form
\begin{equation}\label{eq_sesq_0}
\D\ni\big((\varphi_\ell,\varphi_{\rm r}),\varphi\big)\mapsto
\big\langle\langle Q\rangle^s\varphi,
B(z)\big(\langle Q_\ell\rangle^s\varphi_\ell,\langle Q_{\rm r}\rangle^s\varphi_{\rm r}\big)
\big\rangle_{\H_w}\in\C.
\end{equation}
Our first goal is to show that this sesquilinear form is continuous for the topology
of $\H_0\times \H_w$. However, since the necessary computations are similar to the
ones presented in Sections \ref{sec_conj_full}-\ref{sec_spec}, we only sketch them.
We know from \eqref{eq_two_terms} that
$$
B(z)\big(\langle Q_\ell\rangle^s\varphi_\ell,
\langle Q_{\rm r}\rangle^s\varphi_{\rm r}\big)
=\sum_{\star\in\{\ell,{\rm r}\}}\big\{\big((M_\star-z)^{-1}-(M-z)^{-1}\big)
\;\!j_\star\langle Q_\star\rangle^s\varphi_\star
+[j_\star,(M_\star-z)^{-1}]\langle Q_\star\rangle^s\varphi_\star\big\}.
$$
So, we have to establish the continuity of the sesquilinear forms
\begin{equation}\label{eq_sesq_1}
\S(\R,\C^2)\times\S(\R,\C^2)\ni(\varphi_\star,\varphi)\mapsto
\big\langle\langle Q\rangle^s\varphi,\big((M_\star-z)^{-1}-(M-z)^{-1}\big)
\;\!j_\star\;\!\langle Q_\star\rangle^s\varphi_\star\big\rangle_{\H_w}\in\C
\end{equation}
and
\begin{equation}\label{eq_sesq_2}
\S(\R,\C^2)\times\S(\R,\C^2)\ni(\varphi_\star,\varphi)\mapsto
\big\langle\langle Q\rangle^s\varphi,[j_\star,(M_\star-z)^{-1}]
\langle Q_\star\rangle^s\varphi_\star\big\rangle_{\H_w}\in\C.
\end{equation}
For the first one, we know from \eqref{eq_pre_trick} that
\begin{align}
&\big((M_\star-z)^{-1}-(M-z)^{-1}\big)\;\!j_\star\;\!\langle Q_\star\rangle^s
\nonumber\\
&=(M-z)^{-1}j_\star(w-w_\star)D(M_\star-z)^{-1}\langle Q_\star\rangle^s
+(M-z)^{-1}(w-w_\star)[D,j_\star](M_\star-z)^{-1}\langle Q_\star\rangle^s\nonumber\\
&\quad+(M-z)^{-1}(M-M_\star)[(M_\star-z)^{-1},j_\star]
\langle Q_\star\rangle^s.\label{eq_x1}
\end{align}
By inserting this expression into \eqref{eq_sesq_1}, by taking the
$C^1(\langle Q\rangle^\alpha)$-property of $M$ and $M_\star$ into account, and by
observing that the operators $[D,\langle Q_\star\rangle^s]$,
$\langle Q\rangle^sj_\star(w-w_\star)\langle Q_\star\rangle^s$ and
$\langle Q\rangle^s[D,j_\star]\langle Q_\star\rangle^s$ defined on $\S(\R,\C^2)$
extend continuously to elements of $\B(\H_w)$, one obtains that the sesquilinear
forms defined by the first two terms in \eqref{eq_x1} are continuous for the topology
of $\H_{w_\star}\times\H_w$. The sesquilinear form defined by the third term in
\eqref{eq_x1} and the sesquilinear form \eqref{eq_sesq_2} can be treated
simultaneously. Indeed, the factor $[j_\star,(M_\star-z)^{-1}]$ can be computed
explicitly and contains a factor $j'_\star$ which has compact support. Therefore,
since $\langle Q\rangle^sj'_\star\langle Q_\star\rangle^s\in\B(\H_w)$, a few more
commutator computations show that the two remaining sesquilinear forms are continuous
for the topology of $\H_{w_\star}\times\H_w$.

In consequence, the sesquilinear form \eqref{eq_sesq_0} is continuous for the topology
of $\H_0\times \H_w$, and thus corresponds to a bounded operator
$F_z\in\B(\H_0,\H_w)$. Therefore, if we set
$$
T_0(z):=\langle Q_\ell\rangle^{-s}\oplus\langle Q_{\rm r}\rangle^{-s}\in\B(\H_0)
\quad\hbox{and}\quad
T(z):=F_z^*\langle Q\rangle^{-s}\in\B(\H_w,\H_0),
$$
we obtain that $B(z)=T(z)^*T_0(z)$. On another hand, we know from computations
presented in Section~\ref{sec_spec} that
$$
\langle Q\rangle^{-s}\in\B\big(\dom(\langle Q\rangle^s)^*,\H_w\big)
\subset\B\big(\dom(\langle A_I\rangle^s)^*,\H_w\big),
$$
and
$$
\langle Q_\ell\rangle^{-s}\oplus\langle Q_{\rm r}\rangle^{-s}
\in\B\big(\dom(\langle Q_\ell\rangle^s\oplus\langle Q_{\rm r}\rangle^s)^*,\H_0\big)
\subset\B\big(\dom(\langle A_{0,I}\rangle^s)^*,\H_0\big).
$$
So, we have the inclusions
$$
T(z)\in\B\big(\dom(\langle A_I\rangle^s)^*,\H_0\big)
\quad\hbox{and}\quad
T_0(z)\in\B\big(\dom(\langle A_{0,I}\rangle^s)^*,\H_0\big),
$$
and thus all the assumptions of Corollary \ref{corol_wave} are verified.

Hence it only remains to show that
$\overline{\Ran\big(W_\pm(M,M_0,J,I_{\rm max})\big)}=E_{\rm ac}^M\;\!\H_w$. For that
purpose, we first recall from the proof of Theorem \ref{thm_regul} that
$E^M(I)(JJ^*-1)E^M(I)\in\K(\H_w)$.
Then since $M$ has purely absolutely continuous spectrum in $I$ one infers from the
RAGE theorem that
$$
\slim_{t\to\pm\infty}E^M(I)\e^{itM}(JJ^*-1)\e^{-itM} E^M(I)=0,
$$
and consequently that $\NN_\pm(M,J,I)=E^M(\R\setminus I)\H_w$. By using the
$J$-completeness on $I$ of the local wave operators and that $M$ has purely absolutely
continuous spectrum in $I$, we thus obtain
$$
\overline{\Ran\big(W_\pm(M,M_0,J,I)\big)}
=E^M(I)\;\!\H\ominus\NN_\pm(M,J,I)
=E^M(I)\;\!\H_w
=E^M(I)\;\!\H_w\cap E^M_{\rm ac}\;\!\H_w.
$$
By putting together these results for different intervals $I$ and by using Proposition
\ref{proposition_ess_M}, we thus get that
\begin{align*}
\overline{\Ran\big(W_\pm(M,M_0,J,I_{\rm max})\big)}
&=E^M(I_{\rm max})\;\!\H_w\cap E_{\rm ac}^M\;\!\H_w\\
&=E^M\big(\sigma_{\rm ess}(M)\big)\;\!\H_w\cap E_{\rm ac}^M\;\!\H_w\\
&=E_{\rm ac}^M\;\!\H_w,
\end{align*}
which concludes the proof.
\end{proof}

\begin{Remark}\label{rem_sum_wave}
Let $I\subset\sigma(M_0)\setminus\{\Tau_M\cup\sigma_{\rm p}(M)\}$ be a compact
interval and let $(\varphi_\ell,\varphi_{\rm r})\in\H_0$. Then we have
\begin{align*}
W_\pm(M,M_0,J,I)(\varphi_\ell,\varphi_{\rm r})
&=\slim_{t\to\pm\infty}\e^{itM}J\e^{-itM_0}E^{M_0}(I)(\varphi_\ell,\varphi_{\rm r})\\
&=\slim_{t\to\pm\infty}\e^{itM}\big(J_\ell\e^{-itM_\ell}E^{M_\ell}(I)\varphi_\ell
+J_{\rm r}\e^{-itM_{\rm r}}E^{M_{\rm r}}(I)\varphi_{\rm r}\big)\\
&=W_\pm(M,M_\ell,J_\ell,I)\;\!\varphi_\ell
+W_\pm(M,M_{\rm r},J_{\rm r},I)\;\!\varphi_{\rm r}
\end{align*}
with
\begin{equation}\label{partial_wave_op}
W_\pm(M,M_\star,J_\star,I)
:=\slim_{t\to\pm\infty}\e^{itM}J_\star\e^{-itM_\star}E^{M_\star}(I)
\end{equation}
and $J_\star\in\B(\H_{w_\star},\H_w)$ given by
\begin{equation}\label{def_J_star}
J_\star\varphi_\star:=j_\star\;\!\varphi_\star,
\quad\varphi_\star\in\H_{w_\star}.
\end{equation}
That is, the operators $W_\pm(M,M_0,J,I)$ act as the sum of the local wave operators
$W_\pm(M,M_\star, J_\star,I):$
$$
W_\pm(M,M_0,J,I)(\varphi_\ell,\varphi_{\rm r})
=W_\pm(M,M_\ell,J_\ell,I)\;\!\varphi_\ell
+W_\pm(M,M_{\rm r},J_{\rm r},I)\;\!\varphi_{\rm r}.
$$
\end{Remark}

In order to get a better understanding of the initial sets of the partial isometries
$W_\pm(M,M_0,J,I_{\rm max})$ some preliminary considerations on the asymptotic
velocity operators for $M_\ell$ and $M_{\rm r}$ are necessary. First, we define for
each $\star\in\{\ell,{\rm r}\}$ and $n\in\N$ the spaces
$$
\H_{\star,n}:=\widehat\Pi_{\star,n}\;\!\H_{\tau,\star}
\quad\hbox{and}\quad
\H_{\star,n}^\infty
:=\widehat\Pi_{\star,n}\;\!\big\{\H_{\tau,\star}\cap C^\infty(\R,\h_\star)\big\},
$$
and note that $\H_{\tau,\star}$ decomposes into the internal direct sum
$\H_{\tau,\star}=\oplus_{n\in\N}\H_{\star,n}$ and that the operator
$\widehat{M_\star}$ is reduced by this decomposition, namely,
$\widehat{M_\star}=\sum_{n\in\N}\widehat\lambda_{\star,n}\widehat\Pi_{\star,n}$. Next,
we introduce the self-adjoint operator $\widehat V_\star$ in $\H_{\tau,\star}$
$$
\widehat V_\star
:=\sum_{n\in\N}\widehat\lambda'_{\star,n}\widehat\Pi_{\star,n},
\quad\dom\big(\widehat V_\star\big):=\left\{u\in\H_{\tau,\star}\mid
\big\|\widehat V_\star u\big\|_{\H_{\tau,\star}}^2
=\sum_{n\in\N}\big\|\widehat\lambda'_{\star,n}
\widehat\Pi_{\star,n}u\big\|_{\H_{\tau,\star}}^2<\infty\right\}.
$$
Then it is natural to define the asymptotic velocity operator $V_\star$ for $M_\star$
in $\H_{w_\star}$ as
$$
V_\star:=\U^{-1}_\star\widehat V_\star\U_\star,
\quad\dom(V_\star):=\U_\star^{-1}\dom\big(\widehat V_\star\big),
$$
and the asymptotic velocity operator $V_0$ for $M_0$ in $\H_0$ as the direct sum
$$
V_0:=V_\ell\oplus V_{\rm r}.
$$
Additionally, we define the family of self-adjoint operators in $\H_{w_\star}$
$$
Q_\star(t):=\e^{itM_\star}Q_\star\e^{-itM_\star},
\quad t\in\R,~\dom\big(Q_\star(t)\big):=\e^{itM_\star}\dom(Q_\star),
$$
and the corresponding family of self-adjoint operators in $\H_0$
$$
Q_0(t):=Q_\ell(t)\oplus Q_{\rm r}(t),\quad t\in\R.
$$

Our next result is inspired by the result of \cite[Thm.~4.1]{Suz16} in the setup of
quantum walks. In the proof, we use the linear span $\H_{\star,\tau}^{\rm fin}$ of
elements of $\H_{\star,n}^\infty:$
$$
\H_{\star,\tau}^{\rm fin}
:=\left\{\sum_{n\in\N}u_n\in\bigoplus_{n\in\N}\H_{\star,n}^\infty
\mid\hbox{$u_n\ne0$ for only a finite number of $n$}\right\}.
$$

\begin{Proposition}\label{prop_V_star}
For each $\star\in\{\ell,{\rm r}\}$ and $z\in\C\setminus\R$, we have
$$
\slim_{t\to\pm\infty}\left(\tfrac{Q_\star(t)}t-z\right)^{-1}=\big(V_\star-z\big)^{-1}.
$$
\end{Proposition}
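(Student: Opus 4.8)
The plan is to transport the statement to the Bloch--Floquet representation, where the band structure makes the asymptotic velocity transparent, and then to upgrade a pointwise estimate on a dense set to strong resolvent convergence. Conjugating by the unitary $\U_\star\colon\H_{w_\star}\to\H_{\tau,\star}$ and setting $\widehat{Q_\star(t)}:=\U_\star Q_\star(t)\U_\star^{-1}=\e^{it\widehat{M_\star}}\widehat{Q_\star}\,\e^{-it\widehat{M_\star}}$, it suffices to show that $\big(\tfrac1t\widehat{Q_\star(t)}-z\big)^{-1}$ converges strongly to $\big(\widehat V_\star-z\big)^{-1}$ in $\H_{\tau,\star}$ as $t\to\pm\infty$. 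Here $\widehat{Q_\star}$ acts as $i\partial_k$, while $\widehat{M_\star}$ is reduced by the band decomposition $\H_{\tau,\star}=\bigoplus_{n\in\N}\H_{\star,n}$ with $\widehat{M_\star}=\sum_n\widehat\lambda_{\star,n}\widehat\Pi_{\star,n}$, so that $\e^{-it\widehat{M_\star}}=\sum_n\e^{-it\widehat\lambda_{\star,n}}\widehat\Pi_{\star,n}$.

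The core of the argument is a direct computation on $\H_{\star,\tau}^{\rm fin}$. For $u=\sum_n u_n$ with $u_n\in\H_{\star,n}^\infty$ and only finitely many $u_n\neq0$, one has $\e^{-it\widehat{M_\star}}u=\sum_n\e^{-it\lambda_{\star,n}}u_n$, and $i\partial_k$ produces on each band $n$ a contribution $t\,\lambda'_{\star,n}\,\e^{-it\lambda_{\star,n}}u_n$ and a contribution $\e^{-it\lambda_{\star,n}}i\partial_k u_n$. Applying $\e^{it\widehat{M_\star}}=\sum_m\e^{it\widehat\lambda_{\star,m}}\widehat\Pi_{\star,m}$ and using the orthonormality of the Rellich eigenvectors, i.e. $\widehat\Pi_{\star,m}(k)u_n(k)=\delta_{mn}u_n(k)$, the oscillatory phases in the first contribution cancel and it reduces exactly to $t\,\lambda'_{\star,n}u_n$; summing over $n$ this is $t\,\widehat V_\star u$. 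The second contribution produces a remainder $\widehat R_\star(t)u:=\sum_{n,m}\e^{it(\lambda_{\star,m}-\lambda_{\star,n})}\widehat\Pi_{\star,m}\big(i\partial_k u_n\big)$, which by orthogonality of the projections $\widehat\Pi_{\star,m}$ and the finiteness of the number of bands involved satisfies $\big\|\widehat R_\star(t)u\big\|_{\H_{\tau,\star}}\le\sum_n\big\|i\partial_k u_n\big\|_{\H_{\tau,\star}}=:C_u$ uniformly in $t$. In particular $u\in\dom\big(\widehat{Q_\star(t)}\big)$ for every $t$ (because $\e^{-it\widehat{M_\star}}u$ is a finite sum of smooth $\tau$-equivariant sections) and
$$
\big\|\big(\tfrac1t\widehat{Q_\star(t)}-\widehat V_\star\big)u\big\|_{\H_{\tau,\star}}
=\tfrac1{|t|}\big\|\widehat R_\star(t)u\big\|_{\H_{\tau,\star}}\le\tfrac{C_u}{|t|},
$$
which tends to zero as $|t|\to\infty$.

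To conclude, I would check that $\H_{\star,\tau}^{\rm fin}$ is a core for $\widehat V_\star$ --- truncating to finitely many bands and then approximating within each band by smooth sections, using that $\lambda'_{\star,n}$ is analytic, hence bounded, on the compact cell $Y^*_\star$ --- so that the vectors $\psi:=(\widehat V_\star-z)u$ with $u\in\H_{\star,\tau}^{\rm fin}$ form a dense subset of $\H_{\tau,\star}$. For such $\psi$, the resolvent identity gives
$$
\big(\tfrac1t\widehat{Q_\star(t)}-z\big)^{-1}\psi-\big(\widehat V_\star-z\big)^{-1}\psi
=\big(\tfrac1t\widehat{Q_\star(t)}-z\big)^{-1}\big(\widehat V_\star-\tfrac1t\widehat{Q_\star(t)}\big)u,
$$
and the previous estimate bounds its norm by $|\Im z|^{-1}C_u/|t|\to0$. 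Together with the uniform bound $\big\|\big(\tfrac1t\widehat{Q_\star(t)}-z\big)^{-1}\big\|_{\B(\H_{\tau,\star})}\le|\Im z|^{-1}$, a standard density argument then yields strong convergence on all of $\H_{\tau,\star}$, simultaneously for $t\to+\infty$ and $t\to-\infty$; conjugating back by $\U_\star^{-1}$ gives the statement. I expect the main obstacle to be the computation of the middle paragraph: making precise the cancellation of the oscillatory phases that keeps the order-$t$ part of $\widehat{Q_\star(t)}$ band-diagonal and equal to $t\,\widehat V_\star$, controlling $\widehat R_\star(t)$ uniformly in $t$, and keeping the domain bookkeeping (membership in $\dom(\widehat{Q_\star(t)})$ for every $t$, and the core property of $\H_{\star,\tau}^{\rm fin}$ for $\widehat V_\star$) straight.
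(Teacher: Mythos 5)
Your proposal is correct and takes essentially the same route as the paper: pass to the Bloch--Floquet representation, work on the dense set $\H_{\star,\tau}^{\rm fin}$ of finite-band smooth sections, use the second resolvent identity and the uniform bound $\|(\tfrac1t\widehat{Q_\star(t)}-z)^{-1}\|\le|\Im z|^{-1}$ to reduce the matter to showing $\|(\widehat V_\star-\tfrac1t\widehat{Q_\star(t)})u\|=O(1/t)$ on that set, and then obtain this from $i\partial_k\e^{-it\lambda_{\star,n}}=t\lambda'_{\star,n}\e^{-it\lambda_{\star,n}}+\e^{-it\lambda_{\star,n}}i\partial_k$, whose order-$t$ part cancels exactly against $\widehat V_\star$ while the remainder is $t$-uniformly bounded. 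If anything, your handling of the remainder $\widehat R_\star(t)u$ is more careful than the paper's final displayed line, which drops the oscillating phases $\e^{it(\lambda_{\star,m}-\lambda_{\star,n})}$ and so should be read as an upper bound rather than a literal equality; your triangle-inequality bound $\|\widehat R_\star(t)u\|\le\sum_n\|i\partial_k u_n\|$ is the clean way to make the $O(1/t)$ estimate rigorous.
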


\begin{proof}
For each $t\in\R$, we have the inclusion
$
\U_\star^{-1}\H_{\star,\tau}^{\rm fin}
\subset\big\{\dom(V_\star)\cap\dom\big(Q_\star(t)\big)\big\}
$.
Furthermore, if $u\in\H_{\star,\tau}^{\rm fin}$, then we have
$$
\big(V_\star-z\big)^{-1}\U_\star^{-1}u
=\U_\star^{-1}\big(\widehat V_\star-z\big)^{-1}u
\in\U_\star^{-1}\H_{\star,\tau}^{\rm fin}.
$$
As a consequence, the following equality holds for all $t\in\R\setminus\{0\}$ and
$u\in\H_{\star,\tau}^{\rm fin}:$
$$
\left(\left(\tfrac{Q_\star(t)}t-z\right)^{-1}
-\big(V_\star-z\big)^{-1}\right)\U_\star^{-1}u
=\left(\tfrac{Q_\star(t)}t-z\right)^{-1}\left(V_\star-\tfrac{Q_\star(t)}t\right)
\big(V_\star-z\big)^{-1}\U_\star^{-1}u.
$$
Since
$
\big\|\big(\frac{Q_\star(t)}t-z\big)^{-1}\big\|_{\B(\H_{w_\star})}\le|\im(z)|^{-1}
$
and $\big\|(V_\star-z)^{-1}\big\|_{\B(\H_{w_\star})}\le |\im(z)|^{-1}$, and since
$\U^{-1}_\star\H_{\star,\tau}^{\rm fin}$ is dense in $\H_{w_\star}$, it follows that
it is sufficient to prove that
$$
\lim_{t\to\pm\infty}\left\|\left(V_\star-\tfrac{Q_\star(t)}t\right)
\varphi_\star\right\|_{\H_{w_\star}}=0
\quad\hbox{for all $\varphi_\star\in\U_\star^{-1}\H_{\star,\tau}^{\rm fin}$.}
$$
Now, a direct calculation using the Bloch-Floquet transform gives for
$\varphi_\star=\U_\star^{-1}u$ with $u\in\H_{\star,\tau}^{\rm fin}$
\begin{align*}
&\left\|\left(V_\star-\tfrac{Q_\star(t)}t\right)\varphi_\star\right\|^2_{\H_{w_\star}}\\
&=\int_{Y^*_\star}\d k\left\|\sum_{n\in\N}
\big(\widehat\lambda'_{\star,n}\widehat\Pi_{\star,n}u\big)(k)
-\left(\e^{it\widehat M_\star}\tfrac{\widehat{Q_\star}}t\sum_{n\in\N}
\e^{-it\widehat\lambda_{\star,n}}
\widehat\Pi_{\star,n}u\right)(k)\right\|^2_{\h_\star}\\
&=\tfrac1{t^2}\int_{Y^*_\star}\d k\left\|\sum_{n\in\N}
\big(\widehat{Q_\star}\widehat\Pi_{\star,n}u\big)(k)\right\|^2_{\h_\star},
\end{align*}
where in the last equation we have used that $\widehat{Q_\star}$ acts as $i\partial_k$
in $\H_{\tau,\star}$. Since $u\in\H_{\star,\tau}^{\rm fin}$, the summation over
$n\in\N$ is finite, and since the map
$Y^*_\star\ni k\mapsto\big(\widehat{Q_\star}\widehat\Pi_{\star,n}u\big)(k)\in\h_\star$
is bounded, one deduces that
$$
\left\|\left(V_\star-\tfrac{Q_\star(t)}t\right)\varphi_\star\right\|_{\H_{w_\star}}
=\O\big(t^{-1}\big),
$$
which implies the claim.
\end{proof}

In the next proposition, we determine the initial sets of the isometries
$W_\pm(M,M_\star,J_\star,I):\H_{w_\star}\to\H_w$ defined in \eqref{partial_wave_op}.
In the statement, we use the fact that the operators $M_\star$ and $V_\star$ strongly
commute. We also use the notations $\chi_+$ and $\chi_-$ for the characteristic
functions of the intervals $(0,\infty)$ and $(-\infty,0)$, respectively.

\begin{Proposition}\label{prop_initial_lr}
\begin{enumerate}
\item[(a)] Let $I\subset\sigma(M_\ell)\setminus\Tau_\ell$ be a compact interval, then
the operators $W_\pm(M,M_\ell,J_\ell,I):\H_{w_\ell}\to\H_w$ are partial isometries
with initial sets $\chi_\mp(V_\ell)E^{M_\ell}(I)\H_{w_\ell}$.
\item[(b)] Let $I\subset\sigma(M_{\rm r})\setminus\Tau_{\rm r}$ be a compact interval,
then the operators $W_\pm(M,M_{\rm r},J_{\rm r},I):\H_{w_{\rm r}}\to\H_w$ are partial
isometries with initial sets $\chi_\pm(V_{\rm r})E^{M_{\rm r}}(I)\H_{w_{\rm r}}$.
\end{enumerate}
\end{Proposition}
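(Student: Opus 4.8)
The plan is to prove both statements by showing that the relevant partial wave operator is isometric on the asserted initial subspace and vanishes on its orthogonal complement inside $E^{M_\star}(I)\H_{w_\star}$. First I would record the orthogonal splitting $E^{M_\star}(I)\H_{w_\star}=\chi_-(V_\star)E^{M_\star}(I)\H_{w_\star}\oplus\chi_{\{0\}}(V_\star)E^{M_\star}(I)\H_{w_\star}\oplus\chi_+(V_\star)E^{M_\star}(I)\H_{w_\star}$ (legitimate since $M_\star$ and $V_\star$ strongly commute), together with the key observation that the middle summand is trivial: in the Bloch--Floquet picture $\widehat V_\star=\sum_{n\in\N}\widehat\lambda'_{\star,n}\widehat\Pi_{\star,n}$ and $E^{\widehat{M_\star}}(I)=\sum_{n\in\N}\chi_I(\widehat\lambda_{\star,n})\widehat\Pi_{\star,n}$, so on $\Ran E^{\widehat{M_\star}}(I)$ only the bands $n\in\N(I)$ contribute and only through points $k$ with $\lambda_{\star,n}(k)\in I$, where $\lambda'_{\star,n}(k)\ne0$ because $I\cap\Tau_\star=\varnothing$; by analyticity and compactness of $\{k\in Y^*_\star\mid\lambda_{\star,n}(k)\in I\}$ this yields $|V_\star|\ge c_I^{1/2}>0$ on $E^{M_\star}(I)\H_{w_\star}$, $c_I$ being the constant of Theorem \ref{thm_Mourre_Free}(c). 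I would also dispatch the existence of the limits \eqref{partial_wave_op} for the intervals $I\subset\sigma(M_\star)\setminus\Tau_\star$ at hand by a Cook argument: since $Mj_\star-j_\star M_\star=w[D,j_\star]+(w-w_\star)j_\star D$ with $[D,j_\star]$ a compactly supported matrix multiplication operator and $(w-w_\star)j_\star=\O(\langle Q\rangle^{-1-\varepsilon})$ on $\supp j_\star$ (Assumption \ref{ass_weight}), and $D\e^{-itM_\star}E^{M_\star}(I)=w_\star^{-1}\e^{-itM_\star}M_\star E^{M_\star}(I)$, the Cook integrand is dominated by ${\rm Const.}\,\|\langle Q_\star\rangle^{-1-\varepsilon}\e^{-itM_\star}E^{M_\star}(I)\psi\|_{\H_{w_\star}}$, which is integrable in $t$ because $M_\star$ is of class $C^\infty(A_{\star,I})$ (by analyticity of the fibered data, see \eqref{eq_first_com}) and satisfies the Mourre estimate on $I$, so that $\e^{-itM_\star}E^{M_\star}(I)$ has fast local decay.

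The only genuinely analytic input is a minimal-velocity propagation estimate, which I expect to be the main obstacle: if $\varphi\in E^{M_\star}(I)\H_{w_\star}$ with $\chi_-(V_\star)\varphi=\varphi$, then for all $a\le b$ one has $\lim_{t\to+\infty}\|\chi_{[a,\infty)}(Q_\star)\e^{-itM_\star}\varphi\|_{\H_{w_\star}}=0$ and $\lim_{t\to+\infty}\|\chi_{[a,b]}(Q_\star)\e^{-itM_\star}\varphi\|_{\H_{w_\star}}=0$, and likewise with $\chi_-$ replaced by $\chi_+$, $[a,\infty)$ by $(-\infty,a]$, and $t\to+\infty$ by $t\to-\infty$ upon swapping $\chi_-\leftrightarrow\chi_+$. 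To prove the first identity I would set $\delta:=c_I^{1/2}$, write $\chi_{[a,\infty)}(Q_\star)\e^{-itM_\star}=\e^{-itM_\star}\chi_{[a/t,\infty)}(Q_\star(t)/t)$, note $a/t>-\delta/2$ for $t$ large, and pick $g\in C(\R,[0,1])$ with $g=1$ on $[-\delta/2,\infty)$ and $g=0$ on $(-\infty,-\delta]$, so that $\|\chi_{[a,\infty)}(Q_\star)\e^{-itM_\star}\varphi\|_{\H_{w_\star}}\le\|g(Q_\star(t)/t)\varphi\|_{\H_{w_\star}}$; then $\varphi=\chi_{(-\infty,-\delta]}(V_\star)\varphi$ and, since $Q_\star(t)/t\to V_\star$ in the strong resolvent sense by Proposition \ref{prop_V_star}, the bounded continuous function $g$ obeys $g(Q_\star(t)/t)\to g(V_\star)$ strongly, whence $g(Q_\star(t)/t)\varphi\to g(V_\star)\chi_{(-\infty,-\delta]}(V_\star)\varphi=0$. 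For the compact-set identity one uses instead $g$ supported in $(-\delta,\delta)$ and $\varphi=\chi_{\R\setminus(-\delta,\delta)}(V_\star)\varphi$. The whole point here is that restricting to $E^{M_\star}(I)\H_{w_\star}$ keeps $V_\star$ bounded away from $0$, which is what lets one bypass the discontinuity of $\chi_\pm$ at the origin; everything else is routine bookkeeping with the weights $w$ and $w_\star$.

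With these in hand I would finish with $W_+(M,M_\ell,J_\ell,I)$, the three remaining cases being identical after the substitutions $\ell\leftrightarrow{\rm r}$ and $t\to+\infty\leftrightarrow t\to-\infty$, the sign pattern being dictated by $\supp j_\ell\subset(-\infty,-1/2]$, $\supp j_{\rm r}\subset[1/2,\infty)$ and by which sign of $V_\star$ keeps a state inside that half-line. For $\varphi\in\chi_+(V_\ell)E^{M_\ell}(I)\H_{w_\ell}$ one has $\varphi=E^{M_\ell}(I)\varphi$, hence, since $0\le j_\ell^2\le\chi_{(-\infty,-1/2]}(Q_\ell)$ and the norms of $\H_{w_\ell}$ and $\H_w$ are equivalent,
$$
\big\|W_+(M,M_\ell,J_\ell,I)\varphi\big\|_{\H_w}
=\lim_{t\to+\infty}\big\|j_\ell\e^{-itM_\ell}\varphi\big\|_{\H_w}
\le{\rm Const.}\lim_{t\to+\infty}\big\|\chi_{(-\infty,-1/2]}(Q_\ell)\e^{-itM_\ell}\varphi\big\|_{\H_{w_\ell}}=0,
$$
so $\varphi\in\Ker W_+(M,M_\ell,J_\ell,I)$. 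For $\varphi\in\chi_-(V_\ell)E^{M_\ell}(I)\H_{w_\ell}$, with $\psi_t:=\e^{-itM_\ell}\varphi$ one has $\|\psi_t\|_{\H_{w_\ell}}=\|\varphi\|_{\H_{w_\ell}}$ and
$$
\big\|j_\ell\psi_t\big\|_{\H_w}^2
=\|\varphi\|_{\H_{w_\ell}}^2
-\big\langle\psi_t,(1-j_\ell^2)\psi_t\big\rangle_{\H_{w_\ell}}
+\big\langle\psi_t,(w^{-1}-w_\ell^{-1})j_\ell^2\psi_t\big\rangle_{\ltwo(\R,\C^2)}.
$$
Now $1-j_\ell^2$ is supported in $[-1,\infty)$ and $(w^{-1}-w_\ell^{-1})j_\ell^2$ is a bounded $\B(\C^2)$-valued function supported in $(-\infty,-1/2]$ with norm $\O(\langle x\rangle^{-1-\varepsilon})$ there (Assumption \ref{ass_weight}); splitting the latter into a uniformly small part plus a compactly supported part and applying the propagation estimate (to the half-line $[-1,\infty)$ and to a compact set, respectively) shows that both correction terms tend to $0$ as $t\to+\infty$. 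Hence $\|W_+(M,M_\ell,J_\ell,I)\varphi\|_{\H_w}^2=\lim_{t\to+\infty}\|j_\ell\psi_t\|_{\H_w}^2=\|\varphi\|_{\H_{w_\ell}}^2$, i.e.\ $W_+(M,M_\ell,J_\ell,I)$ is isometric on $\chi_-(V_\ell)E^{M_\ell}(I)\H_{w_\ell}$. Together with the vanishing on $\chi_+(V_\ell)E^{M_\ell}(I)\H_{w_\ell}$ and the triviality of $\chi_{\{0\}}(V_\ell)E^{M_\ell}(I)\H_{w_\ell}$, this exhibits $W_+(M,M_\ell,J_\ell,I)$ as a partial isometry with initial set $\chi_-(V_\ell)E^{M_\ell}(I)\H_{w_\ell}$, establishing the $W_+$ part of (a); the indicated symmetries then yield all of (a) and (b).
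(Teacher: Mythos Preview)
Your proof is correct and follows essentially the same route as the paper's: both arguments hinge on Proposition~\ref{prop_V_star} (strong resolvent convergence of $Q_\star(t)/t$ to $V_\star$), combined with suitable continuous test functions, to show vanishing on one sign of $V_\star$ and isometry on the other via the same norm computation $\|j_\ell\psi_t\|_{\H_w}^2 - \|\varphi\|_{\H_{w_\ell}}^2$. Two minor differences are worth noting: (i) where the paper approximates $\chi_\pm(V_\ell)\varphi$ by $\chi_{[\varepsilon,\infty)}(V_\ell)\varphi$ and then passes to the limit using \eqref{eq_prod_proj}, you instead observe up front that $|V_\star|\ge c_I^{1/2}$ on $E^{M_\star}(I)\H_{w_\star}$, which is cleaner and makes the density step unnecessary; (ii) for the $(w-w_\star)$ correction in the isometry computation the paper invokes the RAGE theorem (the multiplication operator $(w-w_\ell)j_\ell^2w^{-1}$ vanishes at infinity, hence is $M_\ell$-relatively compact), whereas you split it into a uniformly small tail plus a compactly supported piece and reuse your propagation estimate---both work. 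Your separate Cook-type existence discussion is redundant here, since the existence of the limits \eqref{partial_wave_op} already follows from Theorem~\ref{thm_wave_max} via Remark~\ref{rem_sum_wave}.
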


Before the proof, let us observe that if $I\subset\sigma(M_\star)\setminus\Tau_\star$
is a compact interval, then we have the equalities
\begin{equation}\label{eq_prod_proj}
\chi_-(V_\star)E^{M_\star}(I)=\chi_{(-\infty,0]}(V_\star)E^{M_\star}(I)
\quad\hbox{and}\quad
\chi_+(V_\star)E^{M_\star}(I)=\chi_{[0,+\infty)}(V_\star)E^{M_\star}(I)
\end{equation}
due to the definition of the set $\Tau_\star$.

\begin{proof}
Our proof is inspired by the proof of \cite[Prop.~3.4]{RST_2}. We first show the claim
for $W_+(M,M_\ell,J_\ell,I)$. So, let $\varphi_\ell\in\H_{w_\ell}$. If
$\varphi_\ell\bot E^{M_\ell}(I)\H_{w_\ell}$, then
$\varphi_\ell\in\ker\big(W_+(M,M_\ell,J_\ell,I)\big)$. Thus, we can assume that
$\varphi_\ell\in E^{M_\ell}(I)\H_{w_\ell}$. Next, let us show that if
$\varphi_\ell\in\chi_+(V_\ell)\H_{w_\ell}$ then again
$\varphi_\ell\in\ker\big(W_+(M,M_\ell,J_\ell,I)\big)$. For this, assume that
$\chi_{[\varepsilon,\infty)}(V_\ell)\varphi_\ell=\varphi_\ell$ for some
$\varepsilon>0$. Then it follows from \eqref{partial_wave_op}-\eqref{def_J_star} that
\begin{align*}
\big\|W_+(M,M_\ell,J_\ell,I)\varphi_\ell\big\|_{\H_w}
&=\left\|\slim_{t\to+\infty}\e^{itM}J_\ell\e^{-itM_\ell}\varphi_\ell\right\|_{\H_w}\\
&=\lim_{t\to+\infty}\left\|\e^{itM}J_\ell\e^{-itM_\ell}\varphi_\ell\right\|_{\H_w}\\
&\le{\rm Const.}\lim_{t\to+\infty}
\big\|\e^{itM_\ell}j_\ell\e^{-itM_\ell}\varphi_\ell\big\|_{\H_{w_\ell}}.
\end{align*}
Now, let $\eta_\ell\in C(\R,[0,1])$ satisfy $\eta_\ell(x)=1$ if $x<0$ and
$\eta_\ell(x)=0$ if $x\ge\varepsilon$. Then one has for each $t>0$ the inequality
$$
\big\|\e^{itM_\ell}j_\ell\e^{-itM_\ell}\varphi_\ell\big\|_{\H_{w_\ell}}
\le\big\|\e^{itM_\ell}\eta_\ell(Q_\ell/t)\e^{-itM_\ell}
\varphi_\ell\big\|_{\H_{w_\ell}}.
$$
Furthermore, since
$
\eta_\ell(V_\ell)\varphi_\ell
=\eta_\ell(V_\ell)\;\!\chi_{[\varepsilon,\infty)}(V_\ell)\varphi_\ell=0
$,
one infers from Proposition \ref{prop_V_star} and from a standard result on strong
resolvent convergence \cite[Thm.~VIII.20(b)]{RS1} that
$$
\lim_{t\to+\infty}
\big\|\e^{itM_\ell}\eta_\ell(Q_\ell/t)\e^{-itM_\ell}\varphi_\ell\big\|_{\H_{w_\ell}}
=\big\|\eta_\ell(V_\ell)\varphi_\ell\big\|_{\H_{w_{\ell}}}
=0.
$$
Putting together what precedes, one obtains that
$
\varphi_\ell
=\chi_{[\varepsilon,\infty)}(V_\ell)\varphi_\ell
\in\ker\big(W_+(M,M_\ell,J_\ell,I)\big)
$,
and then a density argument taking into account the second equation in
\eqref{eq_prod_proj} implies that
$$
\chi_+(V_\ell)\H_{w_\ell}\subset\ker\big(W_+(M,M_\ell,J_\ell,I)\big).
$$

To show that $W_+(M,M_\ell,J_\ell,I)$ is an isometry on
$\chi_-(V_\ell)E^{M_\ell}(I)\H_{w_\ell}$, take
$\varphi_\ell\in\chi_{-}(V_\ell)E^{M_\ell}(I)\H_{w_\ell}$ with
$\chi_{(-\infty,-\varepsilon]}(V_\ell)\varphi_\ell=\varphi_\ell$ for some
$\varepsilon>0$, and let $\zeta_\ell\in C(\R,[0,1])$ satisfy $\zeta_\ell(x)=0$ if
$x\le-\varepsilon$ and $\zeta_\ell(x)=1$ if $x>-\varepsilon/2$. Then using
successively the identity $E^{M_\ell}(I)\varphi_\ell=\varphi_\ell$, the unitarity of
$\e^{itM}$ in $\H_w$ and of $\e^{-itM_\ell}$ in $\H_{w_\ell}$, the definition
\eqref{def_J_star} of $J_\ell$, the definition of $V_\ell$, and the fact that
$\chi_{(-\infty,-\varepsilon]}(V_\ell)\varphi_\ell=\varphi_\ell$, one gets
\begin{align*}
& \left|\big\|W_+(M,M_\ell,J_\ell,I)\varphi_\ell\big\|_{\H_w}^2
-\|\varphi_\ell\|_{\H_{w_\ell}}^2\right| \\
&=\lim_{t\to+\infty}\left|\big\|\e^{itM}J_\ell\e^{-itM_\ell}\varphi_\ell\big\|_{\H_w}^2
-\|\varphi_\ell\|_{\H_{w_\ell}}^2\right|\\
&=\lim_{t\to+\infty}\left|\big\|J_\ell\e^{-itM_\ell}\varphi_\ell\big\|_{\H_w}^2
-\big\|\e^{-itM_\ell}\varphi_\ell\big\|_{\H_{w_\ell}}^2\right|\\
&=\lim_{t\to+\infty}\left|\big\langle\e^{-itM_\ell}\varphi_\ell,
\big(1-w_\ell w^{-1}j_\ell^2\big)\e^{-itM_\ell}\varphi_\ell
\big\rangle_{\H_{w_\ell}}\right|\\
&\le\lim_{t\to+\infty}\big\langle\varphi_\ell,
\e^{itM_\ell}(1-j_\ell^2)\e^{-itM_\ell}\varphi_\ell\big\rangle_{\H_{w_\ell}}
+\lim_{t\to+\infty}\left|\big\langle\varphi_\ell,
\e^{itM_\ell}(w-w_\ell)\;\!j_\ell^2 w^{-1}\e^{-itM_\ell}\varphi_\ell
\big\rangle_{\H_{w_\ell}}\right|.
\end{align*}
For the first term one has
\begin{align*}
\lim_{t\to+\infty}\big\langle\varphi_\ell,
\e^{itM_\ell}(1-j_\ell^2)\e^{-itM_\ell}\varphi_\ell\big\rangle_{\H_{w_\ell}} 
&\le\lim_{t\to+\infty}\big\langle\varphi_\ell,\e^{itM_\ell}\zeta_\ell(Q_\ell/t)
\e^{-itM_\ell}\varphi_\ell\big\rangle_{\H_{w_\ell}}\\
&=\big\langle\varphi_\ell,\zeta_\ell(V_\ell)\varphi_\ell\big\rangle_{\H_{w_\ell}}\\
&=0,
\end{align*}
while the second term also vanishes by an application of the RAGE theorem. It follows
that $W_+(M,M_\ell,J_\ell,I)$ is isometric on
$\varphi_\ell=\chi_{(-\infty,-\varepsilon]}(V_\ell)\varphi_\ell$, and then a density
argument taking into account the first equation in \eqref{eq_prod_proj} implies that
$W_+(M,M_\ell,J_\ell,I)$ is isometric on $\chi_-(V_\ell)E^{M_\ell}(I)\H_{w_\ell}$.

A similar proof works for the claims about $W_-(M,M_\ell,J_\ell,I)$ and
$W_\pm(M,M_{\rm r},J_{\rm r},I)$. The functions $\eta_\ell$ and $\zeta_\ell$ have to
be adapted and the possible negative sign of the variable $t$ has to be taken into
account, otherwise the argument can be copied \it{mutatis mutandis}.
\end{proof}

By collecting the results of Theorem \ref{thm_wave_max}, Remark \ref{rem_sum_wave},
Proposition \ref{prop_initial_lr}, and by using the fact that $M_\ell$ and $M_{\rm r}$
have purely absolutely continuous spectrum, one finally obtains a description of the
initial sets of the local wave operators $W_\pm(M,M_0,J,I_{\rm max}):$

\begin{Theorem}\label{thm_initial_max}
Let $I_{\rm max}:=\sigma(M_0)\setminus\{\Tau_M\cup\sigma_{\rm p}(M)\}$ and
$I_\star:=\sigma(M_\star)\setminus\Tau_\star$ ($\star=\ell,\rm r$). Then the local
wave operators $W_\pm(M,M_0,J,I_{\rm max}):\H_0\to\H_w$ are partial isometries with
initial sets
$$
\H_0^\pm:=\chi_\mp(V_\ell)E^{M_\ell}(I_\ell)\H_{w_\ell}
\oplus\chi_\pm(V_{\rm r})E^{M_{\rm r}}(I_{\rm r})\H_{w_{\rm r}}.
$$
\end{Theorem}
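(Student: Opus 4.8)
The plan is to combine three ingredients already available: the decomposition of $W_\pm(M,M_0,J,I)$ as a sum of partial wave operators (Remark \ref{rem_sum_wave}), the description of the initial sets of those partial wave operators (Proposition \ref{prop_initial_lr}), and the elementary identity $j_\ell\;\!j_{\rm r}=0$, the supports of $j_\ell$ and $j_{\rm r}$ being disjoint by Definition \ref{def_junction}. As a preliminary observation, since $M_\ell$, $M_{\rm r}$ and hence $M_0$ have purely absolutely continuous spectrum (Proposition \ref{prop_spec_asymp} and Remark \ref{rem_ess_M_0}) while $\Tau_M\cup\sigma_{\rm p}(M)$ is at most countable, one has $E^{M_0}(I_{\rm max})=1_{\H_0}$ and $E^{M_\star}(I_\star)=1_{\H_{w_\star}}$; in particular $\chi_\mp(V_\ell)E^{M_\ell}(I_\ell)=\chi_\mp(V_\ell)$ and $\chi_\pm(V_{\rm r})E^{M_{\rm r}}(I_{\rm r})=\chi_\pm(V_{\rm r})$, so the asserted initial set $\H_0^\pm$ is $\chi_\mp(V_\ell)\H_{w_\ell}\oplus\chi_\pm(V_{\rm r})\H_{w_{\rm r}}$.

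First I would fix a bounded Borel set $I\subset\sigma(M_0)\setminus\big(\Tau_M\cup\sigma_{\rm p}(M)\big)$ and note that the computations of Remark \ref{rem_sum_wave} and Proposition \ref{prop_initial_lr}, together with their proofs, carry over to such an $I$ without change, the facts used being only the projection properties of the $E^{M_\star}(I)$ (with $E^{M_\star}(I)=E^{M_\star}(I\cap\sigma(M_\star))$), the RAGE theorem, Proposition \ref{prop_V_star}, and the identities \eqref{eq_prod_proj}, all valid for $I$ disjoint from $\Tau_M$. Writing $\pi_\ell,\pi_{\rm r}$ for the coordinate projections of $\H_0=\H_{w_\ell}\oplus\H_{w_{\rm r}}$ onto its two summands, one then has $W_\pm(M,M_0,J,I)=V_1^\pm+V_2^\pm$, where $V_1^\pm:=W_\pm(M,M_\ell,J_\ell,I)\;\!\pi_\ell$ is a partial isometry with initial set $\chi_\mp(V_\ell)E^{M_\ell}(I)\H_{w_\ell}\oplus0$ and $V_2^\pm:=W_\pm(M,M_{\rm r},J_{\rm r},I)\;\!\pi_{\rm r}$ is a partial isometry with initial set $0\oplus\chi_\pm(V_{\rm r})E^{M_{\rm r}}(I)\H_{w_{\rm r}}$.

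The conceptual core is that $V_1^\pm$ and $V_2^\pm$ have orthogonal ranges. Indeed, for $\varphi_\ell\in\H_{w_\ell}$ and $\varphi_{\rm r}\in\H_{w_{\rm r}}$, the unitarity of $\e^{itM}$ in $\H_w$, the definition \eqref{def_J_star} of $J_\star$, and the definition of $\langle\;\!\cdot\;\!,\;\!\cdot\;\!\rangle_{\H_w}$ give
\[
\big\langle W_\pm(M,M_\ell,J_\ell,I)\varphi_\ell,\,W_\pm(M,M_{\rm r},J_{\rm r},I)\varphi_{\rm r}\big\rangle_{\H_w}
=\lim_{t\to\pm\infty}\big\langle j_\ell\;\!\e^{-itM_\ell}E^{M_\ell}(I)\varphi_\ell,\;\!w^{-1}j_{\rm r}\;\!\e^{-itM_{\rm r}}E^{M_{\rm r}}(I)\varphi_{\rm r}\big\rangle_{\ltwo(\R,\C^2)}=0,
\]
because $j_\ell(x)\;\!j_{\rm r}(x)=0$ for every $x\in\R$. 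Hence $(V_1^\pm)^*V_2^\pm=(V_2^\pm)^*V_1^\pm=0$, so $W_\pm(M,M_0,J,I)^*W_\pm(M,M_0,J,I)=(V_1^\pm)^*V_1^\pm+(V_2^\pm)^*V_2^\pm$; and since the initial spaces of $V_1^\pm$ and $V_2^\pm$ lie in the complementary summands $\H_{w_\ell}\oplus0$ and $0\oplus\H_{w_{\rm r}}$, the two projections $(V_1^\pm)^*V_1^\pm$ and $(V_2^\pm)^*V_2^\pm$ are mutually orthogonal, so their sum is again an orthogonal projection. Therefore $W_\pm(M,M_0,J,I)$ is a partial isometry with initial set $\chi_\mp(V_\ell)E^{M_\ell}(I)\H_{w_\ell}\oplus\chi_\pm(V_{\rm r})E^{M_{\rm r}}(I)\H_{w_{\rm r}}$.

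Finally I would pass to $I_{\rm max}$ by exhaustion: taking bounded Borel sets $I_n\uparrow\sigma(M_0)\setminus\big(\Tau_M\cup\sigma_{\rm p}(M)\big)$ and using that $W_\pm(M,M_0,J,I_{\rm max})$ exists by Theorem \ref{thm_wave_max} and $W_\pm(M,M_0,J,I_n)=W_\pm(M,M_0,J,I_{\rm max})\;\!E^{M_0}(I_n)$ with $E^{M_0}(I_n)\to E^{M_0}(I_{\rm max})=1_{\H_0}$ strongly, the partial isometries $W_\pm(M,M_0,J,I_n)$ converge strongly to $W_\pm(M,M_0,J,I_{\rm max})$, while their initial projections $\chi_\mp(V_\ell)E^{M_\ell}(I_n)\oplus\chi_\pm(V_{\rm r})E^{M_{\rm r}}(I_n)$ converge strongly to $\chi_\mp(V_\ell)\oplus\chi_\pm(V_{\rm r})=\chi_\mp(V_\ell)E^{M_\ell}(I_\ell)\oplus\chi_\pm(V_{\rm r})E^{M_{\rm r}}(I_{\rm r})$. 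A strong limit of partial isometries whose initial projections converge strongly to a projection $P$ is again a partial isometry with initial projection $P$ — because $\|W_n\varphi\|_{\H_w}=\|P_n\varphi\|_{\H_0}$ passes to the limit, so the quadratic forms of $W^*W$ and of $P$ coincide — and this yields the stated initial set $\H_0^\pm$. I expect the only genuine work here to be the bookkeeping in this last passage — justifying the Borel-set versions of Remark \ref{rem_sum_wave} and Proposition \ref{prop_initial_lr} and, where relevant, splitting $I\cap\sigma(M_\star)$ into spectral subbands — since the conceptual input, namely the orthogonality forced by $j_\ell\;\!j_{\rm r}=0$ together with the elementary sum lemma for partial isometries, is immediate.
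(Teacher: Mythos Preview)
Your proposal is correct and follows exactly the route the paper indicates: the paper does not give a detailed proof but simply says to collect Theorem \ref{thm_wave_max}, Remark \ref{rem_sum_wave}, Proposition \ref{prop_initial_lr}, and the pure absolute continuity of $M_\ell,M_{\rm r}$. You have supplied the natural details behind that one-line reference --- in particular the orthogonality of the ranges of the two partial wave operators via $j_\ell\;\!j_{\rm r}=0$ and the exhaustion from compact intervals to $I_{\rm max}$ --- which the paper leaves implicit.
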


\begin{Remark}
One has $E^{M_\star}(I_\star)=1$ because $\Tau_\star$ is discrete and
$\sigma_{\rm p}(M_\star)=\varnothing$ (see Proposition \ref{prop_spec_asymp} and the
paragraph that precedes). Therefore, the spectral projections $E^{M_\star}(I_\star)$
in the statement of Theorem \ref{thm_initial_max} can be removed if desired.
\end{Remark}



\end{document}